\newcommand{\R}{\mathbb{R}}
\newcommand{\eps}{\varepsilon}
\newcommand{\1}{\mathds{1}}
\DeclareMathOperator{\tr}{Tr}
\def\di{\mathop{}\!\mathrm{d}}
\def\dit{\di t}
\def\dix{\di x}
\def\diy{\di y}
\def\diz{\di z}
\newcommand{\NLS}{\relax\ifmmode \mathrm{NLS}\else\textup{NLS}\xspace\fi}
\newcommand{\wto}{\rightharpoonup}
\newcommand{\pscal}[1]{\ensuremath{\left\langle #1 \right\rangle}}
\newcommand{\pscalSM}[1]{\ensuremath{\langle #1 \rangle{\vphantom{\pscal{1}}}}}
\newcommand{\normDISPLAY}[1]{{\left\vert\kern-0.25ex\left\vert #1 \right\vert\kern-0.25ex\right\vert}}
\newcommand{\normINLINE}[1]{{\vert\kern-0.25ex\vert #1 \vert\kern-0.25ex\vert{\vphantom{\normDISPLAY{1}}}}}
\newcommand{\norm}[1]{{\mathchoice{\normDISPLAY{#1}}{\normINLINE{#1}}{\normINLINE{#1}}{\normINLINE{#1}}}}
\newcommand{\normSM}[1]{\normINLINE{#1}}
\numberwithin{equation}{section}
\theoremstyle{plain}
\newtheorem{theorem}{Theorem}[section]
\newtheorem{lemma}[theorem]{Lemma}
\newcommand\xqed[1]{%
	\leavevmode\unskip\penalty9999 \hbox{}\nobreak\hfill\quad\hbox{#1}%
}
\newcommand\remarkend{\xqed{$\triangle$}}
	\def\@endtheorem{\remarkend\endtrivlist\@endpefalse }
\theoremstyle{remark}
\newtheorem{remark}[theorem]{Remark}
\newtheorem*{remark*}{Remark}
	\def\@endtheorem{\endtrivlist\@endpefalse }
\begin{document}

\title[On 1D Bose gases with 2- and attractive 3-body interactions]{On one-dimensional Bose gases with two- \\ and (critical) attractive three-body interactions}

\author[D.-T. Nguyen]{Dinh-Thi Nguyen}
	\address{(D.-T. Nguyen) Department of Mathematics, Uppsala University, Box 480, 751 06 Uppsala, Sweden and Ecole Normale Sup{\'e}rieure de Lyon \& CNRS, UMPA (UMR 5669)}
	\email{\href{dinh_thi.nguyen@math.uu.se}{dinh\_thi.nguyen@math.uu.se}, \href{dinh.nguyen@ens-lyon.fr}{dinh.nguyen@ens-lyon.fr}}

\author[J. Ricaud]{Julien Ricaud}
	\address{(J. Ricaud) CMAP, CNRS, {\'E}cole polytechnique, Institut Polytechnique de Paris, 91120 Pa\-lai\-seau, France}
	\email{\href{julien.ricaud@polytechnique.edu}{julien.ricaud@polytechnique.edu}}

\subjclass[2020]{81V70, 35J10, 35Q55, 82B10, 82D05}
\keywords{}

\begin{abstract}
	We consider a one-dimensional, trapped, focusing Bose gas where $N$~bosons interact with each other via both a two\nobreakdash-body interaction potential of the form $aN^{\alpha-1} U(N^\alpha(x-y))$ and an attractive three\nobreakdash-body interaction potential of the form $-bN^{2\beta-2} W(N^\beta(x-y,x-z))$, where $a\in\R$, $b,\alpha>0$, $0<\beta<1$, $U, W \geq 0$, and $\int_{\R}U(x)\dix = 1 = \iint_{\R^2} W(x,y) \dix\diy$. The system is stable either for any $a\in\R$ as long as $b<\mathfrak{b} := 3\pi^2/2$ (the critical strength of the 1D focusing quintic nonlinear Schr{\"o}\-dinger equation) or for $a \geq 0$ when $b=\mathfrak{b}$. In the former case, fixing $b \in (0,\mathfrak{b})$, we prove that in the mean\nobreakdash-field limit the many\nobreakdash-body system exhibits the Bose--Einstein condensation on the cubic-quintic \NLS ground states. When assuming $b=b_N \nearrow \mathfrak{b}$ and $a=a_N \to 0$ as $N \to\infty$, with the former convergence being slow enough and ``not faster'' than the latter, we prove that the ground state of the system is fully condensed on the (unique) solution to the quintic \NLS equation. In the latter case $b=\mathfrak{b}$ fixed, we obtain the convergence of many\nobreakdash-body energy for small $\beta$ when $a > 0$ is fixed. Finally, we analyze the behavior of the many\nobreakdash-body ground states when the convergence $b_N \nearrow \mathfrak{b}$ is ``faster'' than the slow enough convergence $0<a_N \searrow 0$.
\end{abstract}

\maketitle

\tableofcontents

\section{Introduction}
After its first observation in gases of alkali atoms in 1995~\cite{AndEnsMatWieCor-95,BraSacTolHul-95,DavMewAndvDrDurKurKet-95,CorWie-02,Ketterle-02}, the Bose--Einstein condensation (BEC) has been intensively studied during the last decades, in both the physics and mathematics communities. In particular, and remarkably, the stability of focusing Bose gas depends crucially on the interaction between particles, and BEC may collapse when the (two\nobreakdash-body) interaction is attractive and the number of particles or the scattering length of the microscopic interaction excesses a critical value, as observed in experiments~\cite{BraSacTolHul-95,KagMurShl-98,SacStoHul-98,GerStrProHul-00,DonClaCorRobCorWie-01}.

In the present paper, we prove and describe a possible explanation for such collapsings, which are still being studied, for a one-dimensional Bose gas trapped in a confining potential that interacts via both two- and attractive three\nobreakdash-body interactions. Note that such BEC with attractive three\nobreakdash-body interactions has recently been observed, e.g., in~\cite{HamLavBou-22}, where quasi-2D and quasi-1D regimes are considered, and that these two regimes are also studied in quantum many-body systems, e.g., see~\cite{CheHol-13,CheHol-17a,Shen-21} (see also~\cite{CheHol-16}). Because, as we will see, the kinetic and potential energies of such system behave the same under scaling after a mean\nobreakdash-field approximation, the system exhibits a critical mass above which it becomes unstable (this mass-critical unstability of such system is specific to dimension one).

Note that in the physics literature, three\nobreakdash-body interactions are often considered as non-conservative forces that put an end to the collapsing process by reducing the number of particles in the trap over time via a phenomenon called three\nobreakdash-body recombination (see, e.g., \cite{KagVarShl-81, deGvdBMulStoVerGlo-86, deGStoVerGlo-88, TieMoeVerSto-02, MoeBoeVer-96, FedReyShl-96, EsrGreBur-99, BedBraHam-00, WebHerMarNagGri-03a, WebHerMarNagGri-03b, LabOHaHucPhiRolPor-04, BraHamKanPla-08, HarKruDeiDreTieDen-13, MakKlaGolCorJin-14, ShoMacKokKha-14, DogGliHilEigCorSmiHad-19, ModShe-22}). This is not what our three\nobreakdash-body interaction is about. Here, we consider on the contrary conservative three\nobreakdash-body attractive interactions. Moreover, here, and not like physics, the three\nobreakdash-body interaction is assumed from the beginning to be not small, so that its effect is more observable for mathematical studies.

We prove that the ground state energy and the ground states of the system are effectively described by the cubic-quintic nonlinear Schr{\"o}\-dinger (\NLS) functional. In other words, the three\nobreakdash-$\R$\nobreakdash-body correlation between particles yields a leading order correction to the ground state energy, which is similar to the \NLS regime of the two\nobreakdash-$\R^2$\nobreakdash-body interaction studied in~\cite{Lewin-ICMP15,LewNamRou-16,LewNamRou-17}. Finally, we investigate two types of collapse regimes of the system (onto the Gagliardo--Nirenberg--Sobolev optimizer) when the number of particles tends to infinity:
\begin{itemize}[leftmargin=1em]
	\item \emph{critical} regimes: the strength of the attractive three\nobreakdash-body interaction tends (from below) to its critical value faster, in order of magnitude, than the strength of the repulsive two\nobreakdash-body interaction tends to zero;
	\item \emph{non-critical} regimes: the strength of the attractive three\nobreakdash-body interaction tends (from below) to its critical value at a speed ``proportional'' (see Theorem~\ref{thm:collapse_NLS} for the precise meaning) to the speed at which the strength of the (not necessarily repulsive) two\nobreakdash-body interaction tends to zero.
\end{itemize}

\medskip
\textbf{Notation.}
For shortness, we denote throughout this paper $\norm{\cdot}_p:=\norm{\cdot}_{L^p(\R^d)}$ when there is no possible confusion on $d$.
	
\subsection{Model}
We consider a system of $N$ identical bosons in~$\R$, described by the nonrelativistic Schr{\"o}\-dinger Hamiltonian
\begin{multline}\label{Hamiltonian}
	H^{N}_{a, b} := \sum_{i=1}^N\left(-\frac{\mathrm{d}^2}{\mathrm{d} x_i^2} + |x_i|^s\right) + \frac{a}{N-1} \sum_{1\leq i<j\leq N} N^{\alpha} U(N^\alpha(x_i-x_j)) \\
	- \frac{b}{(N-1)(N-2)} \sum_{1\leq i<j<k\leq N} N^{2\beta} W(N^\beta(x_i-x_j), N^\beta(x_i-x_k)) \,,
\end{multline}
acting on $\mathfrak{H}^N:=\bigotimes_{\textrm{sym}}^N \mathfrak{H}$, with $\mathfrak{H}:=L^2(\R)$. Here, $N\geq3$ is the number of particles, $s>0$ the power of the trapping potential, $\alpha > 0$ and $\beta>0$ are scaling parameters, while the factors $1/(N-1)$ and $1/((N-1)(N-2))$ of the coupling constants ensure that the kinetic and interaction energies are comparable in the limit $N\to\infty$. The two\nobreakdash-body interaction $aU$ can be either attractive ($a\leq0$) or repulsive ($a\geq0$) and satisfies
\begin{equation}\label{condition:potential_two_body}
	0 \leq U(x)=U(-x) \in L^1 \cap L^2(\R) \quad \text{and} \quad \int_\R U(x) \dix = 1 \,.
\end{equation}
Moreover, the three\nobreakdash-body interaction $bW$ is assumed to be attractive ($b\geq 0$) and to verify
\begin{equation}\label{condition:potential_three_body}
	0 \leq W\in L^1 \cap L^\infty(\R^2)\,, \quad \iint_{\R^2} W(x,y)\dix\diy = 1\,,
\end{equation}
and the symmetry conditions
\begin{equation}\label{condition:potential_three_body_symmetry}
	W(x,y)=W(y,x) \quad \textrm{ and } \quad W(x-y,x-z)=W(y-x,y-z)=W(z-y,z-x) \,.
\end{equation}
Therefore, $|a|$ and $b > 0$ describe the strengths of the two- and three\nobreakdash-body interactions, respectively.
Note that the trapping potential is taken as a power function in order to obtain explicit formulae for expansions and blow-up profiles, but our mean-field limit results would hold the same for a more general potential diverging to $+\infty$ at $\pm\infty$.
Finally, we define the short notation
\begin{equation}\label{Def_W_N}
	U_N(x) \equiv U_{N,\alpha}(x) := N^{\alpha}U(N^\alpha x) \quad \text{and} \quad W_N(x,y) \equiv W_{N,\beta}(x,y): = N^{2\beta}W(N^\beta x, N^\beta y) \,.
\end{equation}

We are interested in the large-$N$ behavior of the ground state energy per particle of~$H^{N}_{a, b}$. Namely,
\begin{equation}\label{energy:quantum}
	E_{a, b}^{\mathrm{Q}, N} := N^{-1} \inf \left\{\pscal{\Psi_N, H^{N}_{a, b} \Psi_N} : \Psi_N\in\mathfrak{H}^N, \norm{\Psi_N}_2=1\right\}
\end{equation}
and the corresponding ground state.	Roughly speaking, BEC occurs when almost all particles occupy the same quantum states. That is, in terms of wave functions, when
\begin{equation}\label{eq:BEC}
	\Psi_N(x_{1},\ldots,x_N) \approx u^{\otimes N}(x_{1},\ldots,x_N) := u(x_{1})\ldots u(x_N) \,.
\end{equation}
Inserting in the energy functional the trial state $u^{\otimes N}$, with the normalization condition $\norm{u}_2=1$, we obtain the Hartree energy functional $\mathcal{E}_{a, b}^{\mathrm{H}, N}(u) := N^{-1} \pscalSM{u^{\otimes N}, H^{N}_{a,b} u^{\otimes N}}$,
which expands as
\begin{multline}\label{functional:hartree}
	\mathcal{E}_{a, b}^{\mathrm{H}, N}(u) = \int_\R \left( |u'(x)|^2 + |x|^s |u(x)|^2 \right )\dix + \frac{a}{2}\iint_{\R^2} U_N(x-y)|u(x)|^2|u(y)|^2\dix\diy \\
		- \frac{b}{6}\iiint_{\R^3} W_N(x-y,x-z)|u(x)|^2|u(y)|^2|u(z)|^2\dix\diy\diz \,.
\end{multline}
This leads to the Hartree ground state energy
\begin{equation}\label{energy:hartree}
	E_{a, b}^{\mathrm{H}, N} := \inf\left\{\mathcal{E}_{a, b}^{\mathrm{H}, N}(u):u\in H^{1}(\R), \norm{u}_2=1\right\},
\end{equation}
which is thus an upper bound to the many\nobreakdash-body ground state energy $E_{a, b}^{\mathrm{Q}, N}$. When $N\to\infty$, since $U_N(x-y)$ and $W_N(x-y,x-z)$ converge respectively to the delta interactions $\delta_{x=y}$ and $\delta_{x=y=z}$, the Hartree functional formally boils down to the nonlinear Schr{\"o}\-dinger functional
\begin{equation}\label{functional:NLS}
	\mathcal{E}_{a,b}^{\NLS}(u) := \int_{\R}\left( |u'(x)|^2 + |x|^s|u(x)|^2 + \frac{a}{2} |u(x)|^4 - \frac{b}{6} |u(x)|^6 \right)\dix \,,
\end{equation}
with associated ground state energy
\begin{equation}\label{energy:NLS}
	E_{a,b}^{\NLS} := \inf\left\{\mathcal{E}^{\NLS}_{a,b}(u):u\in H^{1}(\R), \norm{u}_2=1\right\}.
\end{equation}
Note that, due to our hypotheses, the mass sub-critical term $a|u|^4$ in $\mathcal{E}_{a,b}^{\NLS}$ can be focusing ($a\leq0$) or defocusing ($a\geq0$), but the mass critical term $b|u|^6$ is focusing. Consequently, there exists a critical mass $\mathfrak{b}$ above which the \NLS functional is unstable. More precisely, for any $a\in \R$, the energy $E_{a,b}^{\NLS}$ is unbounded from below if $b > \mathfrak{b}$, where $\mathfrak{b} := 3\pi^2/2$ is the optimal constant in the Gagliardo--Nirenberg--Sobolev inequality
\begin{equation}\label{ineq:GNS}
	\frac{\mathfrak{b}}{6}\int_{\R}|u(x)|^6\dix \leq \left( \int_{\R}|u'(x)|^2\dix \right) \left( \int_{\R}|u(x)|^2\dix \right)^2, \quad \forall\, u\in H^1(\R)\,.
\end{equation}
It is well-known (see, e.g., \cite{BenLos-04,DolEstLapLos-14}) that~\eqref{ineq:GNS} has a positive radial optimizer $Q\in H^1(\R)$, which is the unique optimizer up to translations, multiplication by a complex factor, and scaling. After multiplication by a constant and after scaling, such optimizer solves the quintic \NLS equation
\[
	-Q'' + Q - |Q|^4Q = 0 \,.
\]
For simplicity, in this paper we choose one specific optimizer by fixing the translation, the factor, and the scaling as
\begin{equation}\label{Def_Q0}
	Q_0(x) := \frac{1}{\sqrt{\cosh \pi x}}\,.
\end{equation}
This optimizer~$Q_0$ is normalized, $\norm{Q_0}_2=1$, verifies $\norm{Q_0'}_2^2=\pi^2/8$ and $\norm{Q_0}_6^2=1/2$, and solves
\begin{equation}\label{qNLS_solution}
	-Q_0'' + \frac{\pi^2}4 Q_0 - \frac{3}{4} \pi^2 |Q_0|^4Q_0 = 0 \,.
\end{equation}

In the following, we always assume that $0<b\leq\mathfrak{b}$ and we define $\mathcal{M}^{\NLS}$ as the set of \NLS ground states. Below the critical mass $\mathfrak{b}$, the existence of \NLS ground states follows from standard methods in the calculus of variations (see section~\ref{Section_Proof_existence_NLS}). On the other hand, in the mass critical case $b=\mathfrak{b}$, \NLS ground states still exist if and only if $a>0$.
	
The effective expression~\eqref{functional:NLS} is analogous to the \NLS functional with two\nobreakdash-$\R^2$\nobreakdash-body interaction. The validity of the \NLS theory for the ground states and ground state energies with an attractive two\nobreakdash-$\R^2$\nobreakdash-body interaction potential has been proved in seminal\footnote{In the authors' opinions.} papers of Lewin, Nam, and Rougerie~\cite{Lewin-ICMP15,LewNamRou-16,LewNamRou-17,NamRou-20}, see also~\cite{CheHol-17b}. Furthermore, the collapse and condensation was studied in~\cite{LewNamRou-18-proc} (see also~\cite{GuoSei-14,Nguyen-20}). Our aim here is to extend these results to the three\nobreakdash-$\R$\nobreakdash-body interaction case. 

One might consider the system~\eqref{Hamiltonian} in higher dimension. In that case, the \NLS functional is mass super-critical with respect to the three\nobreakdash-body interaction. For the stability of the quantum problem, it is necessary for the three\nobreakdash-body term to be repulsive. In such defocusing case, the mean\nobreakdash-field approximation was obtained by Nam, Ricaud, and Triay~\cite{NamRicTri-23} in dimension three, while the derivation of the time-dependent defocusing quintic \NLS from many\nobreakdash-body quantum dynamics with repulsive three\nobreakdash-body interactions was investigated in~\cite{ChePav-11,Chen-12,Yuan-15,Xie-15,CheHol-19,NamSal-20,LiYao-21}.

Note however that, except for~\cite{Xie-15,LiYao-21}\footnote{And to some extent~\cite{ChePav-11} given that the authors mention (see p.~963) that their result can be extended to a combination of two- and three\nobreakdash-body interactions having the same scaling parameters $\beta$.}, which are about generic multi-body interactions, all these papers ---as well for instance as~\cite{FerMusTro-08} about one-dimensional \NLS ground states--- concerned with the defocusing case deal with only a three-body interaction. Therefore, together with the fact that we consider attractive three-body interactions, one of the original aspects of the present paper is to consider at the same time two- and three-body interactions. Finally, note that in~\cite{FerMalMusTro-09} the \NLS ground states are studied in a similar settings with a repulsive two- and an attractive three-body interactions, but with a periodic rather than a trapping potential.

\subsection{\NLS theory}
In the first part of the paper, we consider the minimization problem~\eqref{energy:NLS}. Our first result is a classification of the values of the parameters $a$ and $b$ for which there exist \NLS ground states.
\begin{theorem}[\textbf{Existence of the \NLS ground states}]\label{thm:existence_NLS}\leavevmode\\
	Let $a\in\R$, $b>0$, and $E_{a,b}^{\NLS}$ be given in~\eqref{energy:NLS}. We have the following
	\begin{enumerate}[label=(\roman*)]
		\item\label{NLS_GS_inexistence} If $b>\mathfrak{b}$ or ($b=\mathfrak{b}$ and $a<0$), then $E_{a,b}^{\NLS} = -\infty$.
		\item\label{NLS_GS_existence} If $b<\mathfrak{b}$ or ($b=\mathfrak{b}$ and $a>0$), then $E_{a,b}^{\NLS}$ has a ground state.
		\item\label{NLS_GS_critical} If $b=\mathfrak{b}$ and $a=0$, then $E_{a,b}^{\NLS} = 0$, but $E_{a,b}^{\NLS}$ has no ground states.
	\end{enumerate}
\end{theorem}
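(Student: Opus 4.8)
The plan is to treat the three regimes separately, the two workhorses being the sharp Gagliardo--Nirenberg--Sobolev inequality~\eqref{ineq:GNS} and the compactness furnished by the confining potential $|x|^s$.

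\emph{Nonexistence, part~\ref{NLS_GS_inexistence}.} I would test $\mathcal{E}_{a,b}^{\NLS}$ against the mass-preserving rescalings $u_\lambda(x):=\lambda^{1/2}u(\lambda x)$. Since $\norm{u_\lambda'}_2^2=\lambda^2\norm{u'}_2^2$, $\int_\R|x|^s|u_\lambda|^2\dix=\lambda^{-s}\int_\R|x|^s|u|^2\dix$, $\norm{u_\lambda}_4^4=\lambda\norm{u}_4^4$ and $\norm{u_\lambda}_6^6=\lambda^2\norm{u}_6^6$, one finds
\[
	\mathcal{E}_{a,b}^{\NLS}(u_\lambda)=\lambda^2\Bigl(\norm{u'}_2^2-\tfrac b6\norm{u}_6^6\Bigr)+\tfrac a2\lambda\norm{u}_4^4+\lambda^{-s}\int_\R|x|^s|u|^2\dix .
\]
Taking $u=Q_0$ (which lies in $H^1(\R)$ and decays exponentially, hence is an admissible trial state), the bracket equals $\tfrac16(\mathfrak{b}-b)\norm{Q_0}_6^6$ by the equality case of~\eqref{ineq:GNS}; it is negative when $b>\mathfrak{b}$, so letting $\lambda\to\infty$ gives $E_{a,b}^{\NLS}=-\infty$. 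When $b=\mathfrak{b}$ the bracket vanishes and $\mathcal{E}_{a,\mathfrak{b}}^{\NLS}((Q_0)_\lambda)=\tfrac a2\lambda\norm{Q_0}_4^4+\lambda^{-s}\int_\R|x|^s|Q_0|^2\dix\to-\infty$ as $\lambda\to\infty$ precisely because $a<0$.

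\emph{Existence, part~\ref{NLS_GS_existence}.} The crux is an a priori bound that is coercive along a minimizing sequence $(u_n)$. If $b<\mathfrak{b}$, \eqref{ineq:GNS} yields $\tfrac b6\norm{u}_6^6\le\tfrac b{\mathfrak{b}}\norm{u'}_2^2$, whence $\mathcal{E}_{a,b}^{\NLS}(u)\ge(1-\tfrac b{\mathfrak{b}})\norm{u'}_2^2+\int_\R|x|^s|u|^2\dix+\tfrac a2\norm{u}_4^4$; this is coercive when $a\ge0$, and when $a<0$ one absorbs $\tfrac a2\norm{u}_4^4$ using the mass-subcritical bound $\norm{u}_4^4\le C\norm{u'}_2\norm{u}_2^3$ and Young's inequality, which costs only $\eps\norm{u'}_2^2+C_\eps$ with $\eps$ small. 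The genuinely delicate case is $b=\mathfrak{b}$, $a>0$, where~\eqref{ineq:GNS} only gives $\mathcal{E}_{a,\mathfrak{b}}^{\NLS}(u)\ge\int_\R|x|^s|u|^2\dix+\tfrac a2\norm{u}_4^4\ge0$; along $(u_n)$ this bounds the potential energy and $\norm{u_n}_4^4$, but not obviously the kinetic energy. Here I would use the one-dimensional interpolation $\norm{u}_6^6\le\norm{u}_\infty^2\norm{u}_4^4\le C\norm{u'}_2\norm{u}_2\norm{u}_4^4$, which gives $\mathcal{E}_{a,\mathfrak{b}}^{\NLS}(u_n)\ge\norm{u_n'}_2^2-C\norm{u_n'}_2\norm{u_n}_4^4$; since $\norm{u_n}_4^4$ is bounded and $\mathcal{E}_{a,\mathfrak{b}}^{\NLS}(u_n)$ converges, this forces $\norm{u_n'}_2$ to be bounded. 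Converting $a>0$, together with the mass-critical structure, into this $H^1$ bound is the one real obstacle; the rest is routine.

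With $(u_n)$ bounded in $H^1(\R)$ and in the weighted space $\{u:\int_\R|x|^s|u|^2\dix<\infty\}$, compactness is classical: since $|x|^s\to\infty$, the energy space embeds compactly into $L^2(\R)$, so up to a subsequence $u_n\to u$ strongly in $L^2(\R)$ and weakly in $H^1(\R)$, and $H^1(\R)\hookrightarrow L^\infty(\R)$ upgrades this to strong convergence in every $L^p(\R)$, $2\le p<\infty$, in particular in $L^4$ and $L^6$. Hence $\norm{u}_2=1$, the kinetic and potential terms are weakly lower semicontinuous, the quartic and sextic terms pass to the limit, and $\mathcal{E}_{a,b}^{\NLS}(u)\le\liminf_n\mathcal{E}_{a,b}^{\NLS}(u_n)=E_{a,b}^{\NLS}$, so $u$ is a ground state. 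Finally, for part~\ref{NLS_GS_critical}, the bound $\mathcal{E}_{0,\mathfrak{b}}^{\NLS}(u)\ge\int_\R|x|^s|u|^2\dix\ge0$ gives $E_{0,\mathfrak{b}}^{\NLS}\ge0$, while the family $(Q_0)_\lambda$ from part~\ref{NLS_GS_inexistence} has energy $\lambda^{-s}\int_\R|x|^s|Q_0|^2\dix\to0$, so $E_{0,\mathfrak{b}}^{\NLS}=0$; and any minimizer $u$ would satisfy both $\norm{u'}_2^2=\tfrac{\mathfrak{b}}6\norm{u}_6^6$ and $\int_\R|x|^s|u|^2\dix=0$, the latter forcing $u\equiv0$ in contradiction with $\norm{u}_2=1$, so no ground state exists.
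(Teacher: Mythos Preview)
Your proof is correct. For parts~\ref{NLS_GS_inexistence} and~\ref{NLS_GS_critical}, and for the case $b<\mathfrak{b}$ in part~\ref{NLS_GS_existence}, your argument is essentially identical to the paper's: scaling with $Q_0$, the GNS inequality~\eqref{ineq:GNS}, and the subcritical bound $\norm{u}_4^4\le 2\norm{u}_2^3\norm{u'}_2$.

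The genuine difference is in the critical case $b=\mathfrak{b}$, $a>0$. The paper obtains the $H^1$ bound by contradiction: assuming $\norm{u_n'}_2\to\infty$, it rescales $v_n=\eps_n^{1/2}u_n(\eps_n\cdot)$ with $\eps_n=\norm{u_n'}_2^{-1}$, deduces $\mathcal{F}^{\NLS}_{\mathfrak{b}}(v_n)\to0$ and $\norm{v_n}_4\to0$, and then runs a concentration-compactness/dichotomy argument on the normalized sequence $(v_n)$ to force convergence to a GNS optimizer, contradicting $\norm{v_n}_4\to0$. Your route is more elementary and direct: from the a~priori bound $\tfrac a2\norm{u_n}_4^4\le\mathcal{E}_{a,\mathfrak{b}}^{\NLS}(u_n)\le C$ you extract $\norm{u_n}_4^4\le M$, and then the interpolation $\norm{u}_6^6\le\norm{u}_\infty^2\norm{u}_4^4\le 2\norm{u'}_2\norm{u}_4^4$ (for $\norm{u}_2=1$) immediately yields $\mathcal{E}_{a,\mathfrak{b}}^{\NLS}(u_n)\ge\norm{u_n'}_2^2-\tfrac{\mathfrak{b}}{3}M\norm{u_n'}_2$, which is coercive in $\norm{u_n'}_2$. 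This avoids the entire rescaling and concentration-compactness machinery. The trade-off is that the paper's rescaling device is reused verbatim later (in the collapse analyses of Theorems~\ref{thm:collapse_NLS} and~\ref{thm:collapse_Hartree}, where one genuinely needs to identify the limiting profile as a GNS optimizer), so their longer proof here doubles as a warm-up; your shortcut is cleaner for this theorem in isolation but does not feed into those later arguments.
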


Note that we can restrict the minimization problem~\eqref{energy:NLS} to nonnegative functions $u$ since
$\mathcal{E}_{a,b}^{\NLS}(u) \geq \mathcal{E}_{a,b}^{\NLS}(|u|)$ for any $u \in H^1(\R)$. This follows from the fact that $\norm{\nabla u}_2 \geq \norm{\nabla |u|}_2$ (see, e.g.,~\cite[Theorem~7.8]{LieLos-01}). In particular, the ground state of~$E_{a,b}^{\NLS}$, when it exists, can be chosen to be nonnegative. The proof of the case $b=\mathfrak{b}$ is special: the compactness of the minimizing sequence for $E_{a, \mathfrak{b}}^{\NLS}$ cannot be obtained directly using~\eqref{ineq:GNS}. By refined arguments using the concentration-compactness lemma and the singularity of the repulsive cubic term, we will recover this property.

Our next result concerns the collapse of the \NLS ground states. By Theorem~\ref{thm:existence_NLS}, we see that the blow-up phenomenon of \NLS ground-states can only occur when $(a,b)\to(0,\mathfrak{b})$ in such a way that the condition~\emph{\ref{NLS_GS_existence}} of Theorem~\ref{thm:existence_NLS} stays fulfilled.
By the variational principle, we have
\begin{equation}\label{NLS_energy_upperbound}
	E_{a,b}^{\NLS} \leq \mathcal{E}_{a,b}^{\NLS}\left(\ell^{\frac{1}{2}}Q_0(\ell \cdot)\right) = \ell^2\frac{\mathfrak{b}-b}{12} + \ell\frac{a}{\pi} + \frac{\ell^{-s}}{s} \mathcal{Q}_s
\end{equation}
for all $\ell > 0$, where $Q_0$ is defined in~\eqref{Def_Q0} and where, from now on and for any $s>0$, we define
\begin{equation}\label{Def_Q_s}
	\mathcal{Q}_s:= s\int_{\R}|x|^s |Q_0(x)|^{2}\dix < \infty\,.
\end{equation}	
By optimizing the right hand side of~\eqref{NLS_energy_upperbound} over~$\ell > 0$, it gives an upper bound to the \NLS energy and allows to determine explicitly the blow-up profile of the \NLS ground states, depending on the collapse regime. We can actually also find the exact expansion of the corresponding energy. Those results are presented in the following theorem.

\begin{theorem}[\textbf{Collapse of the \NLS ground states}]\label{thm:collapse_NLS}\leavevmode\\
	Let $s>0$, $\zeta\geq0$, and $\mathcal{Q}_s$ as in~\eqref{Def_Q_s}. Let $\{a_n\}_n\subset \R$ and $\{b_n\}_n\subset (0,\mathfrak{b}]$ satisfy $a_n \to 0$ and $b_n \nearrow \mathfrak{b}$ as $n\to+\infty$, $\max\{0,a_n\} + (\mathfrak{b}-b_n)>0$, and
	\begin{equation}\label{thm_collapse_NLS_condition_ratio_sequences}
		\lim\limits_{n\to+\infty} a_n (\mathfrak{b} - b_n)^{ -\frac{s+1}{s+2} } = \pi \frac{6-\zeta}{6} \zeta^{-\frac{s+1}{s+2}} \mathcal{Q}_s^{\frac{1}{s+2}} \in (-\infty,+\infty]\,.
	\end{equation}
	Let $\{u_n\}_n$ be a sequence of (approximate) ground states of~$E_{a_n,b_n}^{\NLS}$ defined in~\eqref{energy:NLS} and $Q_0$ be given in~\eqref{Def_Q0}. Then,
	\begin{equation}\label{blowup_NLS_ground_state}
		\lim_{n\to\infty} \sqrt{\ell_n} u_n\left( \ell_n \cdot \right) = Q_0
	\end{equation}
	strongly in $H^1(\R)$, for
	\begin{equation}\label{thm_collapse_NLS_ell_n_choice}
		\ell_n =
		\left\{
		\begin{aligned}
			&\left(\frac{6a_n}{(6-\zeta)\pi \mathcal{Q}_s } \right)^{\frac{1}{s+1}} &\text{if } \zeta\neq6\,, \\
			&\left(\frac{\mathfrak{b} - b_n}{ \zeta \mathcal{Q}_s } \right)^{\frac{1}{s+2}} &\text{if } \zeta\neq0\,.
		\end{aligned}
		\right.
	\end{equation}
	Furthermore,
	\begin{equation}\label{blowup_NLS_energy}
		E_{a_n,b_n}^{\NLS} = \left( \frac{s+1}{s} - \frac{\zeta}{12} + o(1)\right) \mathcal{Q}_s \ell_n^s \,.
	\end{equation}
\end{theorem}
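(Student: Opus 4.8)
The plan is to combine the variational upper bound~\eqref{NLS_energy_upperbound} with a matching lower bound obtained by extracting from the Gagliardo--Nirenberg--Sobolev inequality~\eqref{ineq:GNS} the residual kinetic energy, and then feed the resulting two-sided energy estimate into a compactness argument to get the profile convergence~\eqref{blowup_NLS_ground_state}.

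First I would analyze the upper bound. Write $g_n(\ell):=\ell^2\tfrac{\mathfrak b-b_n}{12}+\ell\tfrac{a_n}{\pi}+\tfrac{\ell^{-s}}{s}\mathcal Q_s$. The point of the scaling $\ell^{1/2}Q_0(\ell\cdot)$ is that it is \emph{not} optimized exactly but at the near-optimal value~\eqref{thm_collapse_NLS_ell_n_choice}; I would check by a direct computation that with this choice of $\ell_n$ one has $\ell_n\to\infty$ (using $a_n\to0$, $b_n\nearrow\mathfrak b$), that the three terms of $g_n(\ell_n)$ are all of order $\mathcal Q_s\ell_n^s$ — more precisely the trap term contributes $\tfrac1s\mathcal Q_s\ell_n^s$, and the cubic and quintic terms together contribute $(1-\tfrac{\zeta}{12}+o(1))\mathcal Q_s\ell_n^s$ thanks to the ratio hypothesis~\eqref{thm_collapse_NLS_condition_ratio_sequences} — so that $g_n(\ell_n)=(\tfrac{s+1}{s}-\tfrac{\zeta}{12}+o(1))\mathcal Q_s\ell_n^s$, which is the right-hand side of~\eqref{blowup_NLS_energy}. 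The two displayed formulas for $\ell_n$ in the cases $\zeta\neq6$ and $\zeta\neq0$ should be checked to be mutually consistent up to $1+o(1)$ on the overlap $\zeta\notin\{0,6\}$, again via~\eqref{thm_collapse_NLS_condition_ratio_sequences}; for $\zeta=0$ one uses only the $a_n$-formula and for $\zeta=6$ only the $(\mathfrak b-b_n)$-formula.

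Next the lower bound and compactness. For $u_n$ an (approximate) ground state with $\|u_n\|_2=1$, set $v_n(x):=\ell_n^{-1/2}u_n(\ell_n^{-1}x)$, so $\|v_n\|_2=1$ and the rescaled energy reads $\ell_n^{-s}\mathcal E^{\NLS}_{a_n,b_n}(u_n)=\ell_n^{-s-2}\|v_n'\|_2^2+\tfrac1s\int|x|^s|v_n|^2+\tfrac{a_n}{2}\ell_n^{-s-1}\int|v_n|^4-\tfrac{b_n}{6}\ell_n^{-s-2}\int|v_n|^6$. Using~\eqref{ineq:GNS} in the form $\tfrac{b_n}{6}\int|v_n|^6\le\tfrac{b_n}{\mathfrak b}\|v_n'\|_2^2$ one bounds the quintic term by the kinetic term, leaving a coefficient $(1-\tfrac{b_n}{\mathfrak b})\|v_n'\|_2^2\ge0$; comparing the resulting lower bound with the upper bound $g_n(\ell_n)$ forces, term by term, that $\ell_n^{-s-2}\|v_n'\|_2^2\to\tfrac{\mathfrak b}{12}\zeta\,\mathcal Q_s$ (equivalently $\|v_n'\|_2^2\to\|Q_0'\|_2^2$ after renormalizing by the correct power, since $\|Q_0'\|_2^2=\pi^2/8$ and $\mathcal Q_s=s\int|x|^s|Q_0|^2$ fit the Euler--Lagrange/virial identities for~\eqref{qNLS_solution}), that $\int|x|^s|v_n|^2\to\int|x|^s|Q_0|^2$, and that the GNS inequality is saturated in the limit. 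The saturation of~\eqref{ineq:GNS}, together with the uniform $H^1$ bound and the confining term $\int|x|^s|v_n|^2$ bounded, gives tightness; by the known characterization of optimizers of~\eqref{ineq:GNS} (unique up to translation, phase, and dilation — cited from~\cite{BenLos-04,DolEstLapLos-14}) and the fact that the dilation and translation are pinned down by the normalization $\|v_n\|_2=1$ and the convergence of $\int|x|^s|v_n|^2$ (the trap breaks translation invariance and fixes the center at the origin), one identifies the limit as $Q_0$ up to a phase, which can be removed since ground states are chosen nonnegative. Convergence of all three quantities $\|v_n\|_2$, $\|v_n'\|_2$, and the potential term then upgrades weak to strong $H^1$ convergence, i.e.~\eqref{blowup_NLS_ground_state}; and feeding strong convergence back into the energy yields the sharp expansion~\eqref{blowup_NLS_energy}.

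The main obstacle I expect is the compactness/identification step rather than the (routine, if slightly tedious) optimization of $g_n$: one must rule out both vanishing and dichotomy for the rescaled sequence $v_n$ and then extract the \emph{precise} profile $Q_0$ with the \emph{correct} dilation parameter, showing that $\ell_n$ as defined in~\eqref{thm_collapse_NLS_ell_n_choice} is the right scale and not merely comparable to it. Vanishing is excluded because the quintic term must carry a nonvanishing share of the energy when $\zeta>0$ (and when $\zeta=0$ one argues instead through the cubic term with $a_n>0$); dichotomy is excluded by the strict concavity/convexity structure of $g_n$ near its minimizer together with the single-bump nature of GNS optimizers. Pinning the dilation is the delicate point: it follows from matching the next-order terms in the energy expansion — the balance between trap, cubic, and quintic contributions encoded in~\eqref{thm_collapse_NLS_condition_ratio_sequences} — so that any other scale would produce a strictly larger energy than $g_n(\ell_n)$, contradicting $E^{\NLS}_{a_n,b_n}\le g_n(\ell_n)$. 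A secondary subtlety is the borderline case $b_n=\mathfrak b$ (allowed since $b_n\in(0,\mathfrak b]$ and $\max\{0,a_n\}+(\mathfrak b-b_n)>0$ then forces $a_n>0$), where the kinetic remainder $(1-b_n/\mathfrak b)\|v_n'\|_2^2$ vanishes identically and compactness must come entirely from the cubic term and the trap, exactly as in the proof of Theorem~\ref{thm:existence_NLS}\ref{NLS_GS_existence} for $b=\mathfrak b$, $a>0$; one reuses that concentration-compactness argument here.
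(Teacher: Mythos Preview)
Your overall strategy matches the paper's: upper bound via the trial state $\ell^{1/2}Q_0(\ell\cdot)$, rescale the ground state, use~\eqref{ineq:GNS} to extract a lower bound, obtain compactness, identify the limit as a GNS optimizer, and then pin the remaining dilation/translation. Two points need correcting.

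First, a bookkeeping issue: with $\ell_n$ as in~\eqref{thm_collapse_NLS_ell_n_choice} one has $\ell_n\to0$, not $\ell_n\to\infty$; the paper plugs $\ell=\ell_n^{-1}$ into~\eqref{NLS_energy_upperbound} and rescales the ground state as $\tilde u_n:=\ell_n^{1/2}u_n(\ell_n\cdot)$. Your formula $v_n=\ell_n^{-1/2}u_n(\ell_n^{-1}\cdot)$ and the powers $\ell_n^{-s-2}$ etc.\ are inconsistent with either convention; fixing this is routine but necessary.

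Second, and more substantively: the assertion that ``the dilation and translation are pinned down by the normalization $\|v_n\|_2=1$ and the convergence of $\int|x|^s|v_n|^2$'' does not work as stated. All rescalings $t^{1/2}Q_0(t\cdot+x_0)$ are $L^2$-normalized, so the mass constraint fixes nothing; and you do not know \emph{a~priori} the value to which the trap term converges --- that is precisely part of what must be proved. The paper's mechanism is different: after extracting a weak limit $u=t^{1/2}Q_0(t\cdot+x_0)$ (from GNS saturation), one passes to the limit in the \emph{full} two-sided energy inequality and lands on
\[
\tfrac{s+1}{s}-\tfrac{\zeta}{12}\ \ge\ \tfrac{\zeta}{12}\,t^2+\tfrac{6-\zeta}{6}\,t+\tfrac{t^{-s}}{s}\cdot\frac{\||\cdot-x_0|^{s/2}Q_0\|_2^2}{\mathcal Q_s/s}\ \ge\ g(t):=\tfrac{\zeta}{12}\,t^2+\tfrac{6-\zeta}{6}\,t+\tfrac{t^{-s}}{s},
\]
where the last step uses the strict inequality $\int|x-x_0|^s|Q_0|^2>\int|x|^s|Q_0|^2$ for $x_0\neq0$ (this needs a short separate argument, given in the paper's Appendix). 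One then checks that $g$ has a \emph{unique} minimum at $t=1$ with value $\tfrac{s+1}{s}-\tfrac{\zeta}{12}$, forcing $t=1$ and $x_0=0$. Your later remark about ``matching next-order terms'' and ``any other scale would produce a strictly larger energy'' is the right intuition, but it is the limit profile's scale $t$ (not the sequence scale $\ell_n$) that is being pinned, and it is this explicit one-variable optimization that does the job --- your proposal should make this step precise rather than appeal to term-by-term matching, which is not directly available since only a combined inequality is known. Your treatment of the borderline $\zeta=0$ case (where $(1-b_n/\mathfrak b)$ may vanish and $H^1$-boundedness comes by contradiction via the cubic term, reusing the argument from Theorem~\ref{thm:existence_NLS}\ref{NLS_GS_existence}) is correct in spirit and matches the paper.
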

We recall that a sequence $\{u_n\}_n$ of \emph{approximate} ground states of~$E_{a_n,b_n}^{\NLS}$ is a sequence belonging to the minimizing domain and satisfying, as $n\to+\infty$, the property
\[
	E_{a_n,b_n}^{\NLS} \leq \mathcal{E}_{a_n,b_n}^{\NLS}(u_n) \leq E_{a_n,b_n}^{\NLS} + o(1)\,.
\]

\begin{remark}[Technicalities about Theorem~\ref{thm:collapse_NLS}]\label{Rmk_thm_collapse_nls}\leavevmode
	\begin{itemize}[leftmargin=1em]
		\item The purpose of the hypothesis $\max\{0,a_n\} + (\mathfrak{b}-b_n)>0 \Leftrightarrow (b_n<\mathfrak{b} \text{ or } (b_n=\mathfrak{b} \text{ and } a_n>0))$ is to guarantee, thanks to Theorem~\ref{thm:existence_NLS}, the existence of \NLS ground states \emph{for any $n$}. The existence for large $n$ being guaranteed by the limit condition.
		\item If $\zeta \neq 0$, then $\mathfrak{b}-b_n>0$ for $n$ large enough by~\eqref{thm_collapse_NLS_condition_ratio_sequences}. Thence, without loss of generality, we assume $\{b_n\}_n\subset (0,\mathfrak{b})$ if $\zeta \neq 0$.
		\item If $\zeta \neq 6$, then $a_n/(6-\zeta)>0$ for $n$ large enough by~\eqref{thm_collapse_NLS_condition_ratio_sequences}. Thence, without loss of generality, we assume $\left\{a_n/(6-\zeta)\right\}_n\subset (0,+\infty)$ if $\zeta \neq 6$. This guarantees for the upper formula for $\ell_n$ in~\eqref{thm_collapse_NLS_ell_n_choice} to make sense for all $n$.
		\item The complicated factor $(1-\zeta/6)\zeta^{-(s+1)/(s+2)}$ ---which is one-to-one decreasing as a function from $[0, +\infty)$ to $(-\infty,+\infty]$--- is here to avoid for $\zeta$ to be defined implicitly: if one replaces this factor by some $L\in(-\infty,+\infty]$, then the formulae in~\eqref{thm_collapse_NLS_ell_n_choice}--\eqref{blowup_NLS_energy} stay the same but with their $\zeta$ defined implicitly as the unique positive solution to $\zeta^{1/(s+2)}/6 + L = \zeta^{-(s+1)/(s+2)}$.
	\end{itemize}
\end{remark}
The reader will notice that our result in Theorem~\ref{thm:collapse_NLS} covers all possible ``proportionality'' in terms of the respective speeds of convergence of $a_n$ and $b_n$ ---that is, $\zeta\in(0,+\infty)$---, as well as the case where $b_n$ converges faster, in order of magnitude, than $a_n$ ---that is, $\zeta=0$---, but that it does not cover the converse case where $a_n$ converges faster, in order of magnitude, than $b_n$ ---that would be, $\zeta=+\infty$. The reason is that in the limit, and roughly speaking, the two former cases stay within the framework of \emph{\ref{NLS_GS_existence}} in Theorem~\ref{thm:existence_NLS}, where ground states exist, while the latter case $\zeta=+\infty$ tends, still roughly speaking, to the framework of \emph{\ref{NLS_GS_inexistence}} in Theorem~\ref{thm:existence_NLS} (more precisely to the framework $b=\mathfrak{b}$ and $a<0$) where the energy is not even bounded.

\subsection{Hartree theory}
The Hartree theory can be interpreted as an interpolation theory between the many\nobreakdash-body and \NLS theories. A feature of the Hartree theory is that it still describes correctly the two- and three\nobreakdash-body interactions while the Hartree functional has fewer variables and is nonlinear. The properties of the Hartree energy and its ground states are then easier to obtain. Before turning to our results on the many\nobreakdash-body theory, we state the results in the associated Hartree theory as they will be needed and because they are of their own interest. 
\begin{theorem}[\textbf{Condensation and collapse of the Hartree ground states}]\label{thm:collapse_Hartree}\leavevmode\\
	Let $\alpha, \beta>0$ and assume that $U$ and $W$ satisfy~\eqref{condition:potential_two_body}--\eqref{condition:potential_three_body_symmetry}.
	
	\begin{itemize}[leftmargin=2em]
		\item[(i)] Let $a \in \R$ and $0 < b \leq \mathfrak{b}$ be fixed with either $0 < b < \mathfrak{b}$ or ($b = \mathfrak{b}$ and $a>0$ and $\alpha > \beta$), and let $\{u_N\}_N$ be a sequence of (approximate) ground states of~$E_{a, b}^{\mathrm{H}, N}$ defined in~\eqref{energy:hartree}. Then, there exists a \NLS ground state $\phi$ of~\eqref{energy:NLS} such that
		\begin{equation}\label{cv:Hartree_to_NLS_ground_state}
			\lim_{N\to\infty}u_N = \phi
		\end{equation}
		strongly in $H^1(\R)$. Furthermore,
		\begin{equation}\label{cv:Hartree_to_NLS_energy}
			\lim_{N\to\infty}E_{a, b}^{\mathrm{H}, N} = E_{a,b}^{\NLS}.
		\end{equation}
		
		\item[(ii)] Assume $xU(x) \in L^1(\R)$ and $x W(x) \in L^1(\R^2)$, and let $\zeta$, $\mathcal{Q}_s$, $\{a_N\}_N$, $\{b_N\}_N$, and $\{\ell_N\}_N$ be as in Theorem~\ref{thm:collapse_NLS}, with $\ell_N \sim N^{-\eta}$ for
		\[
			0 < \eta < \min\left\{\frac{\beta}{s+3}, \alpha\right\}.
		\]
		Finally, assume $\alpha>\beta$ if $\zeta=0$. Let $\{u_N\}_N$ be a sequence of (approximate) ground states of~$E_{a_N, b_N}^{\mathrm{H}, N}$ and $Q_0$ be given in~\eqref{Def_Q0}. Then,
		\begin{equation}\label{blowup_Hartree_ground_state}
			\lim_{N\to\infty} \ell_N^{1/2} u_N ( \ell_N \cdot ) = Q_0
		\end{equation}
		strongly in $H^1(\R)$ and
		\[
			E_{a_N, b_N}^{\mathrm{H}, N} = E_{a_N, b_N}^{\NLS} (1+o(1)) = \left( \frac{s+1}{s} - \frac{\zeta}{12} + o(1) \right) \mathcal{Q}_s \ell_N^s\,.
		\]
	\end{itemize}
\end{theorem}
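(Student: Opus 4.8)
The plan is to run the standard two-step scheme --- a trial-state upper bound matched by a lower bound obtained from compactness of (rescaled) approximate minimizers --- the new ingredients being quantitative estimates for the mollified interactions and, in the mass-critical regime, a bubble-exclusion argument using the hypothesis $\alpha>\beta$.

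For the upper bounds, in~(i) I would insert a \NLS ground state $\phi$ (which exists by Theorem~\ref{thm:existence_NLS}) into $\mathcal{E}_{a,b}^{\mathrm{H},N}$: since $U_N$, $W_N$ are approximate identities on $\R$, $\R^2$ and $\phi\in H^1(\R)$ (so $|\phi|^2\in L^1\cap L^\infty$, $|\phi|^6\in L^1$), one has $\iint U_N(x-y)|\phi(x)|^2|\phi(y)|^2\to\int|\phi|^4$ and $\iiint W_N(x-y,x-z)|\phi(x)|^2|\phi(y)|^2|\phi(z)|^2\to\int|\phi|^6$, hence $E_{a,b}^{\mathrm{H},N}\le\mathcal{E}_{a,b}^{\mathrm{H},N}(\phi)=\mathcal{E}_{a,b}^{\NLS}(\phi)+o(1)=E_{a,b}^{\NLS}+o(1)$. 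In~(ii) I would test with $w_N:=\ell_N^{-1/2}Q_0(\cdot/\ell_N)$: a change of variables turns the two- and three-body terms into $\ell_N^{-1}$, resp.\ $\ell_N^{-2}$, times averages of $|Q_0|^2$ against mollifiers of widths $N^{-(\alpha-\eta)}$, resp.\ $N^{-(\beta-\eta)}$, and the moment hypotheses $xU\in L^1$, $xW\in L^1$ give corrections $O(a_N N^{2\eta-\alpha})$ and $O(N^{3\eta-\beta})$, which are $o(\ell_N^s)$ exactly when $\eta<\alpha$ and $\eta<\beta/(s+3)$; by the computation behind~\eqref{NLS_energy_upperbound} and the choice~\eqref{thm_collapse_NLS_ell_n_choice} of $\ell_N$ this gives $E_{a_N,b_N}^{\mathrm{H},N}\le\mathcal{E}_{a_N,b_N}^{\NLS}(w_N)+o(\ell_N^s)=\bigl(\tfrac{s+1}{s}-\tfrac{\zeta}{12}+o(1)\bigr)\mathcal{Q}_s\ell_N^s=E_{a_N,b_N}^{\NLS}(1+o(1))$ (Theorem~\ref{thm:collapse_NLS}).

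For the a priori bounds, Hölder's inequality in the integration variable gives $\iiint W_N(x-y,x-z)|v|^2|v|^2|v|^2\le\|v\|_6^6$ and $\iint U_N(x-y)|v|^2|v|^2\le\|v\|_4^4$ for $v\in H^1(\R)$, so (after replacing $u$ by $|u|$) $\mathcal{E}_{a,b}^{\mathrm{H},N}(u)\ge\|u'\|_2^2-\tfrac b6\|u\|_6^6-\tfrac{|a|}2\|u\|_4^4+\int|x|^s|u|^2$, which for $b<\mathfrak{b}$ is coercive by~\eqref{ineq:GNS} and the sub-critical Gagliardo--Nirenberg inequality; hence a sequence $\{u_N\}$ of approximate Hartree minimizers is bounded in $H^1$. \emph{The main obstacle} is the mass-critical case ($b=\mathfrak{b}$ in~(i), and $\zeta=0$ in~(ii), where $\mathfrak{b}-b_N$ gives no coercivity): there one only controls $\int|x|^s|u_N|^2$ and, since $a>0$, $\iint U_N|u_N|^4$. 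If $\|u_N'\|_2\to\infty$ along a subsequence, the bound $\|u_N'\|_2^2-\tfrac{\mathfrak{b}}{6}\|u_N\|_6^6\le C$, scale invariance, and concentration-compactness force $u_N$ to be asymptotically a bubble $\ell_N^{1/2}Q_0(\ell_N(\cdot-y_N))$ with $\ell_N\to\infty$; the trivial bound $\iiint W_N(x-y,x-z)|u_N|^2|u_N|^2|u_N|^2\le N^{2\beta}\|W\|_\infty$ then forces $\ell_N\lesssim N^\beta$ (otherwise the kinetic energy $\sim\ell_N^2$ swamps it and the energy diverges), and since $\alpha>\beta$ the mollifier $U_N$, of width $N^{-\alpha}\ll\ell_N^{-1}$, acts like $\delta$ on the bubble so that $\iint U_N|u_N|^4\gtrsim\ell_N\to\infty$, a contradiction. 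Thus $\{u_N\}$ is bounded in $H^1$ in all cases, and $\int|x|^s|u_N|^2\le C$ gives tightness.

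Given the $H^1$ bound, the approximate-identity estimates hold uniformly on $H^1$-bounded sets (via an interpolation bound on the modulus of continuity of $|u_N|^2$), so $\mathcal{E}_{a,b}^{\mathrm{H},N}(u_N)=\mathcal{E}_{a,b}^{\NLS}(u_N)+o(1)\ge E_{a,b}^{\NLS}+o(1)$; with the upper bound this proves~\eqref{cv:Hartree_to_NLS_energy} and shows $\{u_N\}$ is minimizing for $E_{a,b}^{\NLS}$, so by the compactness established in the proof of Theorem~\ref{thm:existence_NLS} (using the singular repulsive cubic term when $b=\mathfrak{b}$) it converges, along a subsequence, strongly in $H^1$ to a \NLS ground state $\phi$, which is~(i). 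For~(ii), set $\tilde u_N:=\ell_N^{1/2}u_N(\ell_N\cdot)$, so $\|\tilde u_N\|_2=1$ and $\ell_N^2\mathcal{E}_{a_N,b_N}^{\mathrm{H},N}(u_N)=\|\tilde u_N'\|_2^2-\tfrac{b_N}{6}\iiint\hat W_N(\cdots)+\tfrac{a_N\ell_N}{2}\iint\hat U_N(\cdots)+\ell_N^{s+2}\int|x|^s|\tilde u_N|^2$, with $\hat U_N$, $\hat W_N$ probability densities concentrating at widths $N^{-(\alpha-\eta)}$, $N^{-(\beta-\eta)}$. The a priori step (using $\mathfrak{b}-b_N\sim\ell_N^{s+2}$ if $\zeta\neq0$ and the $\alpha>\beta$ bubble argument if $\zeta=0$) bounds $\tilde u_N$ in $H^1$ and gives $\int|x|^s|\tilde u_N|^2\le C$, hence tightness, so $\tilde u_N\to\tilde u_\infty$ strongly in $H^1$ (up to a subsequence) with $\|\tilde u_\infty\|_2=1$; taking the liminf and using $\iiint\hat W_N(\cdots)\le\|\tilde u_N\|_6^6$, $b_N\to\mathfrak{b}$, $a_N\ell_N\to0$, $\ell_N^{s+2}\to0$ and the upper bound yields $0\le\|\tilde u_\infty'\|_2^2-\tfrac{\mathfrak{b}}{6}\|\tilde u_\infty\|_6^6\le0$, so $\tilde u_\infty=\lambda^{1/2}Q_0(\lambda(\cdot-x_0))$ by the equality case of~\eqref{ineq:GNS}. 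To pin down $\lambda=1$, $x_0=0$, expand $\ell_N^2\mathcal{E}_{a_N,b_N}^{\mathrm{H},N}(u_N)$ to order $\ell_N^{s+2}$: the Gagliardo--Nirenberg--Sobolev deficit is nonnegative, the mollification errors are $o(\ell_N^{s+2})$ by $\eta<\beta/(s+3)$, the trapping integral $\int|x|^s|\tilde u_\infty|^2\ge\lambda^{-s}\mathcal{Q}_s/s$ with equality only if $x_0=0$ (as $|Q_0|^2$ is symmetric decreasing), and the remaining terms, divided by $\ell_N^{s+2}$ and using~\eqref{thm_collapse_NLS_ell_n_choice}, form a strictly convex function of $\lambda>0$ minimized at $\lambda=1$ with value $\bigl(\tfrac{s+1}{s}-\tfrac{\zeta}{12}\bigr)\mathcal{Q}_s$; comparison with the upper bound forces $\lambda=1$, $x_0=0$, proving $\ell_N^{1/2}u_N(\ell_N\cdot)\to Q_0$ in $H^1$ and $E_{a_N,b_N}^{\mathrm{H},N}=\bigl(\tfrac{s+1}{s}-\tfrac{\zeta}{12}+o(1)\bigr)\mathcal{Q}_s\ell_N^s$.
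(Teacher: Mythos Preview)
Your proposal is correct and follows essentially the same route as the paper: trial-state upper bounds via the approximate-identity estimates (the paper packages these as a separate lemma), coercivity via the pointwise bounds $\iint U_N|u|^2|u|^2\le\|u\|_4^4$ and $\iiint W_N|u|^2|u|^2|u|^2\le\|u\|_6^6$, and in the mass-critical cases the same bubble-exclusion argument (rescale so that $\|v_N'\|_2=1$, use $\|W_N\|_\infty=N^{2\beta}\|W\|_\infty$ to force the concentration scale to be $\gtrsim N^{-\beta}$, then $\alpha>\beta$ makes $U_N$ resolve the bubble and contradict the bound on the two-body term). The only cosmetic slip is that in~(ii) you assert strong $H^1$ convergence of $\tilde u_N$ before identifying the limit; one first gets weak $H^1$ plus strong $L^r$ from tightness, then the equality case of~\eqref{ineq:GNS} identifies $\tilde u_\infty$ and simultaneously upgrades the convergence to strong $H^1$ (since $\|\tilde u_N'\|_2^2-\tfrac{b_N}{6}\|\tilde u_N\|_6^6\to0$ forces $\|\tilde u_N'\|_2\to\|\tilde u_\infty'\|_2$).
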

Similarly to the \NLS case, a sequence $\{u_N\}_N$ of \emph{approximate} ground states of~$E_{a_N, b_N}^{\mathrm{H}, N}$ is a sequence belonging to the minimizing domain and satisfying, as $N\to+\infty$, the property
\[
	E_{a_N, b_N}^{\mathrm{H}, N} \leq \mathcal{E}_{a_N, b_N}^{\mathrm{H}, N}(u_N) \leq E_{a_N, b_N}^{\mathrm{H}, N} + o(1)\,.
\]

Notice that the condition $\ell_N \sim N^{-\eta}$ is, by definition of $\ell_N$ in~\eqref{thm_collapse_NLS_ell_n_choice}, also a condition on $\mathfrak{b}-b_N$ or on $a_N$, depending on the value of $\zeta$. The technical assumptions $xU(x) \in L^1(\R)$ and $x W(x) \in L^1(\R^2)$ are used to determine the convergence rate of the Hartree energy to the \NLS energy in the limit $N \to \infty$. The condition $\eta<\beta/(s+3)$ is then used to ensure that the Hartree and \NLS ground state problems are close in the collapse regime. Moreover, the extra condition $\alpha>\beta$ applies in particular to the case where either $\{b_N\}_N$ equal to~$\mathfrak{b}$ or $b_N \nearrow \mathfrak{b}$ faster than $a_{N} \searrow 0$. This ensures the singularity of the two\nobreakdash-body interaction, and this corresponds to the case $\zeta = 0$ in Theorem~\ref{thm:collapse_NLS}.

\subsection{Many-body theory}
In this part of the paper, we turn to the $N$-particle Hamiltonian~\eqref{Hamiltonian} with two- and three\nobreakdash-body interaction potentials of the form~\eqref{Def_W_N}. In the stable regime, we verify the validity of the effective cubic-quintic \NLS~\eqref{functional:NLS}. As usual, the convergence of ground states is formulated using $k$-particles reduced density matrices, defined for any $\Psi_N\in\mathfrak{H}^N$ by the partial trace
\[
	\gamma_{\Psi_N}^{(k)} := \tr_{k+1\to N} | \Psi_N \rangle \langle \Psi_N | \,.
\]
Equivalently, $\gamma_{\Psi_N}^{(k)}$ is the trace class operator on $\mathfrak{H}^k$ with kernel
\[
	\gamma_{\Psi_N}^{(k)}(x_{1},\ldots,x_k;y_{1},\ldots,y_k) := \int_{\R^{N-k}}\overline{\Psi_N(x_{1},\ldots,x_k;Z)}{\Psi_N(y_{1},\ldots,y_k;Z)} \di Z \,.
\]
One of the main advantages of the reduced density matrices is that we can write
\[
	\frac{1}{N} \pscal{ \Psi_N, H_{a, b}^{N} \Psi_N } = \tr\left[h\gamma_{\Psi_N}^{(1)}\right] + \frac{a}{2}\tr\left[U_N\gamma_N^{(2)}\right] - \frac{b}{6}\tr\left[W_N\gamma_N^{(3)}\right] = \frac{1}{3}\tr\left[H_3\gamma_{\Psi_N}^{(3)}\right],
\]
where
\begin{equation}\label{eq:one_particle_operator}
	h_x := -\frac{\mathrm{d}^2}{\mathrm{d} x^2} + |x|^s
\end{equation}
is the one-particle operator and $H_3$ is the three-particles Hamiltonian
\[
	H_3 := h_{x_{1}} + h_{x_{2}} + h_{x_3} + \frac{a}{2} U_N(x_{1}-x_{2}) + \frac{a}{2} U_N(x_{2}-x_3) + \frac{a}{2} U_N(x_3-x_{1}) - \frac{b}{2} W_N(x_{1}-x_{2},x_{1}-x_3)\,.
\]
Furthermore, the Bose--Einstein condensation~\eqref{eq:BEC} is characterized properly by
\[
	\lim_{N\to\infty} \tr \left|\gamma_{\Psi_N}^{(k)} - |u^{\otimes k}\rangle \langle u^{\otimes k}|\right| = 0, \quad \forall\, k=1,2,\dots\,.
\]

With Theorem~\ref{thm:existence_NLS} in mind, we distinguish two cases: critical three\nobreakdash-body interactions, in the sense of regimes where $b$ converges so fast towards $\mathfrak{b}$ compared to $a$ towards $0$ that it is the sign of $a$ that determines if ground states exist, and the non-critical case.

\subsubsection{Critical three-body interactions}
We consider here the critical case for the three\nobreakdash-body interaction. For the mean\nobreakdash-field regime, this criticality corresponds to the case where $a$ and $b = \mathfrak{b}$ are fixed. In this context, the effective \NLS minimization problem\eqref{energy:NLS} does not make sense when $a<0$ for the two\nobreakdash-body interaction. In the complementary situation, where $a \geq 0$ (fixed), we derive an energy convergence for small $\beta>0$. For the collapse regime, this criticality corresponds to $\zeta=0$ in Theorem~\ref{thm:collapse_NLS}, and we prove in this situation that the many\nobreakdash-body ground states have a universal blow-up profile described by the (unique) solution of the quintic \NLS equation~\eqref{qNLS_solution}.

In the following result, we emphasize that~\emph{(ii)}, about collapse regimes, covers in particular the case of the strictly critical three\nobreakdash-body interaction where $b_N=\mathfrak{b}$ for all $N$, since it is a special case of $\zeta=0$.
\begin{theorem}[\textbf{Condensation and collapse of the many\nobreakdash-body ground states: critical regimes}]\label{thm:many_body_critical}
	Let $0<\beta<\min\left\{\alpha,s/(9s+6)\right\}$ and assume that $U$ and $W$ satisfy~\eqref{condition:potential_two_body}--\eqref{condition:potential_three_body_symmetry}.
	\begin{itemize}[leftmargin=2em]
		\item[(i)] Let $a > 0$ and $b = \mathfrak{b}$ be fixed. Then,
		\begin{align}\label{cv:energy_quantum_to_NLS_critical}
			\lim_{N\to \infty} E_{a, \mathfrak{b}}^{\mathrm{Q}, N} = E_{a, \mathfrak{b}}^{\NLS} >-\infty \,.
		\end{align}
		\item[(ii)] Assume $xU(x) \in L^1(\R)$ and $x W(x) \in L^1(\R^2)$, and let $\zeta$, $\mathcal{Q}_s$, $\{a_N\}_N$, $\{b_N\}_N$, and $\{\ell_N\}_N$ be as in Theorem~\ref{thm:collapse_NLS}, with $\zeta=0$ (hence $a_N>0$) and $\ell_N \sim N^{-\eta}$ for
		\begin{equation}\label{collapse:speed_critical}
			0<\eta<\min\left\{\frac{\beta}{s+3}, \frac{s-3\beta(3s+2)}{2s(2s+1)}\right\}\,.
		\end{equation}
		Then,
		\begin{equation}\label{asymptotic:quantum_energy_critical}
			E_{a_N, b_N}^{\mathrm{Q}, N} = E_{a_N, b_N}^{\NLS} \left( 1 + o(1) \right) = \frac{s+1}{s} \mathcal{Q}_s \ell_N^s ( 1 + o(1) ) \,.
		\end{equation}
		Moreover, if $\Phi_N = \ell_N^{N/2} \Psi_N(\ell_N\cdot)$ with $\Psi_N$ a ground state of~$E_{a_N, b_N}^{\mathrm{Q}, N}$, then
		\begin{equation}\label{asymptotic:many_body_ground_state_critical}
			\lim_{N\to\infty} \tr \left| \gamma_{\Phi_N}^{(k)} - | Q_0^{\otimes k} \rangle \langle Q_0^{\otimes k} | \right| = 0, \quad \forall\, k=1,2,\dots\,,
		\end{equation}
		where $Q_0$ is given in~\eqref{Def_Q0}.
	\end{itemize}
\end{theorem}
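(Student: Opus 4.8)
The plan is to sandwich $E^{\mathrm{Q},N}$ between matching upper and lower bounds. The upper bounds I would inherit from the Hartree theory: since $\mathcal{E}^{\mathrm{H},N}_{a,b}(u)=N^{-1}\pscal{u^{\otimes N},H^N_{a,b}u^{\otimes N}}$ for $\norm{u}_2=1$, one always has $E^{\mathrm{Q},N}_{a,b}\leq E^{\mathrm{H},N}_{a,b}$; for~\emph{(i)}, Theorem~\ref{thm:collapse_Hartree}\,(i) applies (as $a>0$, $b=\mathfrak{b}$ and $\alpha>\beta$) and gives $\limsup_N E^{\mathrm{Q},N}_{a,\mathfrak{b}}\leq E^{\NLS}_{a,\mathfrak{b}}$, while for~\emph{(ii)} one checks that the regime~\eqref{collapse:speed_critical} meets the hypotheses of Theorem~\ref{thm:collapse_Hartree}\,(ii) (notably $\eta<\beta/(s+3)<\beta<\alpha$ and $\alpha>\beta$ since $\zeta=0$), so $E^{\mathrm{Q},N}_{a_N,b_N}\leq E^{\mathrm{H},N}_{a_N,b_N}=\tfrac{s+1}{s}\mathcal{Q}_s\ell_N^{s}(1+o(1))$. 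All the work lies in the lower bounds. For~\emph{(ii)} I would first pass to collapse variables: with $\Phi_N=\ell_N^{N/2}\Psi_N(\ell_N\,\cdot)$ for a ground state $\Psi_N$, a change of variables gives $N^{-1}\pscal{\Psi_N,H^N_{a_N,b_N}\Psi_N}=\ell_N^{-2}e_N$ with $e_N=N^{-1}\pscal{\Phi_N,\widetilde H_N\Phi_N}$, where $\widetilde H_N$ has unit kinetic coefficient, three-body coefficient $b_N\to\mathfrak{b}$ with kernel concentrating at scale $N^{-(\beta-\eta)}$, trapping coefficient $\ell_N^{s+2}$, and two-body coefficient $a_N\ell_N=\pi\mathcal{Q}_s\ell_N^{s+2}$ (here $\zeta=0$ forces $a_N=\pi\mathcal{Q}_s\ell_N^{s+1}$ and $\mathfrak{b}-b_N=o(\ell_N^{s+2})$). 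Thus, to leading order, $\Phi_N$ nearly minimizes the \emph{mass-critical} three-body Hamiltonian, whose energy per particle tends to $0$, and the physics of interest sits at order $\ell_N^{s+2}$.

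\textbf{Lower bound via a de Finetti argument.} For both parts I would follow the localization-plus-de-Finetti strategy of Lewin--Nam--Rougerie. First, \emph{a priori estimates}: from the energy upper bound, the positivity $U,W\geq0$, and a many-body Gagliardo--Nirenberg--Sobolev bound of the shape $\tfrac{\mathfrak{b}}{6}\tr[W_N\gamma^{(3)}]\leq\tr[(-\partial_x^2)\gamma^{(1)}](1+o(1))+(\textup{smearing error})$, I would control $\tr[h\gamma^{(1)}_{\Psi_N}]$ (respectively $\tr[h\gamma^{(1)}_{\Phi_N}]$ after rescaling) and in particular rule out concentration faster than $\ell_N$; at the critical coupling the Gagliardo--Nirenberg margin is exhausted, so this step is delicate and is one place where $\beta$ small enters, the smearing error being controlled by a negative power of $N^{\beta}$. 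Next, along a subsequence $\gamma^{(k)}$ converges weakly-$\ast$; localizing to the low-energy spectral subspace $P_{\leq\Lambda}\mathfrak{H}$ of $h=-\partial_x^2+|x|^s$ and applying the quantitative quantum de Finetti theorem there, I would get a Borel probability measure $\mu$ on the unit ball of $\mathfrak{H}$ with $\gamma^{(k)}\rightharpoonup\int|u^{\otimes k}\rangle\langle u^{\otimes k}|\di\mu(u)$. For $\Phi_N$ in~\emph{(ii)}, the measure $\mu$ must be supported on optimizers of~\eqref{ineq:GNS}, i.e.\ on $\{e^{i\theta}\ell^{1/2}Q_0(\ell\,\cdot):\theta\in\R,\ell>0\}$, because the leading-order energy $\tr[(-\partial_x^2)\gamma^{(1)}]-\tfrac{\mathfrak{b}}{6}\tr[W_N\gamma^{(3)}]\to0$ leaves no slack in~\eqref{ineq:GNS}. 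Finally I would pass to the limit in the full energy, using that $U_N,W_N\to\delta$ tested against continuous functions and that the relevant Sobolev norms are controlled on the finite-dimensional localized space.

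\textbf{Conclusion of each case.} For~\emph{(i)}, the limit reads $\liminf_N E^{\mathrm{Q},N}_{a,\mathfrak{b}}\geq\int\mathcal{E}^{\NLS}_{a,\mathfrak{b}}(u)\di\mu(u)$; since $\mathcal{E}^{\NLS}_{a,\mathfrak{b}}\geq0$ (by~\eqref{ineq:GNS} and $a>0$) and, by the confinement bound, no mass escapes ($\int\norm{u}_2^2\di\mu=1$), one gets $\liminf_N E^{\mathrm{Q},N}_{a,\mathfrak{b}}\geq E^{\NLS}_{a,\mathfrak{b}}$, which with the upper bound yields~\eqref{cv:energy_quantum_to_NLS_critical}. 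For~\emph{(ii)}, writing $Q_\lambda:=\lambda^{1/2}Q_0(\lambda\,\cdot)$, the previous step gives $\gamma^{(k)}_{\Phi_N}\rightharpoonup\int|Q_\lambda^{\otimes k}\rangle\langle Q_\lambda^{\otimes k}|\di\widetilde\mu(\lambda)$ for a probability measure $\widetilde\mu$ on $(0,+\infty)$. Dividing $e_N$ by $\ell_N^{s+2}$ and passing to the limit, the trapping term contributes $\int\lambda^{-s}(\mathcal{Q}_s/s)\di\widetilde\mu$ and the two-body term $\pi\mathcal{Q}_s\cdot\tfrac12\int\lambda\,\norm{Q_0}_4^4\di\widetilde\mu$; using $\norm{Q_0}_4^4=2/\pi$ (read off from~\eqref{NLS_energy_upperbound}) and $\min_{\lambda>0}(\lambda^{-s}/s+\lambda)=(s+1)/s$, attained only at $\lambda=1$, one gets $\liminf_N\ell_N^{-s}E^{\mathrm{Q},N}_{a_N,b_N}\geq\mathcal{Q}_s\int(\lambda^{-s}/s+\lambda)\di\widetilde\mu\geq\tfrac{s+1}{s}\mathcal{Q}_s$. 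Matching with the upper bound proves~\eqref{asymptotic:quantum_energy_critical} and forces $\widetilde\mu=\delta_1$; hence $\gamma^{(k)}_{\Phi_N}\rightharpoonup|Q_0^{\otimes k}\rangle\langle Q_0^{\otimes k}|$, and as both sides are positive with trace $1$ this upgrades to the trace-norm convergence~\eqref{asymptotic:many_body_ground_state_critical}.

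\textbf{Main obstacle.} The hard part will be the quantitative error budget in the lower bound, especially in~\emph{(ii)}, where every error must be $o(\ell_N^{s+2})=o(N^{-\eta(s+2)})$, not merely $o(1)$. After localizing to energies $\leq\Lambda$, with $\dim P_{\leq\Lambda}\mathfrak{H}\sim\Lambda^{(s+2)/(2s)}$, three errors compete: the quantitative de Finetti error, a positive power of $\dim P_{\leq\Lambda}\mathfrak{H}$ over $N$; the geometric-localization (tail) error, of order $\Lambda^{-1}$ times an energy; and the error from replacing $W_N$ (resp.\ $U_N$) by $\delta$, a negative power of $N^{\beta-\eta}$ (resp.\ $N^{\alpha-\eta}$) times Sobolev norms bounded by powers of $\Lambda$. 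Choosing $\Lambda$ as a power of $N$ and demanding the total to beat $N^{-\eta(s+2)}$ is exactly what dictates $\beta<s/(9s+6)$ (nonemptiness of the $\eta$-window) and the precise upper bounds on $\eta$ in~\eqref{collapse:speed_critical}; the analogous but milder bookkeeping, together with the thin-margin stability analysis at $b=\mathfrak{b}$, is what forces $\beta$ small in~\emph{(i)}. Establishing the many-body Gagliardo--Nirenberg--Sobolev inequality with a quantitatively small remainder, and ---for~\emph{(ii)}--- its rigidity forcing the de Finetti measure onto the manifold $\{Q_\lambda\}$, is the single most delicate ingredient, and also the mechanism that rules out collapse at a rate faster than $\ell_N$.
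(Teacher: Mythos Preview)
Your overall architecture (Hartree upper bound, quantitative de~Finetti lower bound) matches the paper, but there is a genuine gap in the a~priori control and, consequently, in your route to part~\emph{(ii)}.

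The ``many-body Gagliardo--Nirenberg--Sobolev bound'' you invoke, of the form $\tfrac{\mathfrak{b}}{6}\tr[W_N\gamma^{(3)}]\leq \tr[(-\partial_x^2)\gamma^{(1)}](1+o(1))+\text{(smearing)}$, is not available: no such inequality is known for general $N$-body states, and the paper explicitly notes that at $b=\mathfrak{b}$ the crude bound cannot be improved. What the paper actually uses is simply $W_N\leq CN^{2\beta}$ together with $a>0$ (so the two-body term is dropped), yielding $\tr[h\gamma^{(1)}_{\Psi_N}]\leq CN^{2\beta}$. This is enough for the quantitative de~Finetti lower bound: with $P=\1(h\leq L)$ and $\dim P\mathfrak{H}\lesssim L^{1/2+1/s}$, the localization and three-body errors combine to $C(L+N^{2\beta})N^{-1}\dim P\mathfrak{H}+CN^{3\beta}L^{-1/2}$, optimized at $L=N^{(3\beta+1)s/(2s+1)}$ to give the error $N^{(3\beta(3s+2)-s)/(4s+2)}$. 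For part~\emph{(i)} this is $o(1)$ precisely when $\beta<s/(9s+6)$; for part~\emph{(ii)} it is $o(\ell_N^s)$ precisely under the second constraint in~\eqref{collapse:speed_critical}.

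This gap breaks your strategy for~\emph{(ii)}. You rescale to $\Phi_N$ and then want to extract a limiting de~Finetti measure supported on $\{Q_\lambda\}$; but that requires $\tr[h\gamma^{(1)}_{\Phi_N}]$ bounded, i.e.\ $\ell_N^2\tr[(-\partial_x^2)\gamma^{(1)}_{\Psi_N}]=O(1)$. With only $\tr[h\gamma^{(1)}_{\Psi_N}]\leq CN^{2\beta}$ this is $O(N^{2\beta-2\eta})$, which diverges since $\eta<\beta/(s+3)<\beta$. So you cannot get compactness after rescaling, and the ``rigidity forcing $\mu$ onto $\{Q_\lambda\}$'' step has no input to work with. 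The paper avoids this entirely: it never rescales in the analysis and never passes to a limiting measure. Instead it works with the $N$-dependent measure $\mu_{\Psi_N}$ on $SP\mathfrak{H}$, proves $\int \mathcal{E}^{\mathrm{H},N}_{a_N,b_N}(u)\,\di\mu_{\Psi_N}(u)\big/E^{\mathrm{H},N}_{a_N,b_N}\to 1$ and $\int\di\mu_{\Psi_N}\to 1$, and then runs a Chebyshev-type argument: defining $T_N=\{u:\mathcal{E}^{\mathrm{H},N}(u)/E^{\mathrm{H},N}-1\leq\sqrt{\delta_N}\}$, one gets $\mu_{\Psi_N}(T_N)\to1$, while every $u\in T_N$ is an approximate Hartree ground state, so Theorem~\ref{thm:collapse_Hartree}\,(ii) forces $|\langle u,Q_N\rangle|\to1$ uniformly on $T_N$ (with $Q_N=\ell_N^{-1/2}Q_0(\ell_N^{-1}\cdot)$). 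This yields $\int|\langle u,Q_N\rangle|\,\di\mu_{\Psi_N}(u)\to1$, which is equivalent to~\eqref{asymptotic:many_body_ground_state_critical}. Your variational argument over $\lambda$ is thus replaced by a direct appeal to the Hartree collapse result; the uniqueness of $Q_0$ enters there, not through an explicit minimization of $\lambda^{-s}/s+\lambda$.
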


Our method is the quantitative quantum de~Finetti theorem developed by Lewin, Nam, and Rougerie \cite{LewNamRou-14,LewNamRou-16}. By using its information-theoretic version \cite{Rougerie-20a,Rougerie-20b}, one can obtain the convergence of energy~\eqref{cv:energy_quantum_to_NLS_critical} for $0<\beta<1/12$, which is a larger range than the $s$-dependence upper bound of $\beta$ in Theorem~\ref{thm:many_body_critical} only if $s<2$.

On one hand, we note that \eqref{cv:energy_quantum_to_NLS_critical} still holds true in the case $b=\mathfrak{b}$ and $a=0$. Indeed, for $0<\beta<s/(3(3s+2))$, we have
\[
	\lim_{N\to \infty} E_{0, \mathfrak{b}}^{\mathrm{Q}, N} = E_{0, \mathfrak{b}}^{\NLS} =0\,.
\]
However, the convergence of many\nobreakdash-body ground states is not expected in this case since $E_{0, \mathfrak{b}}^{\NLS}$ does not admit ground states. On the other hand, while the existence of the ground states of~$E_{a, \mathfrak{b}}^{\mathrm{Q}, N}$, as well as of~$E_{a, \mathfrak{b}}^{\NLS}$, still holds true in the case of fixed $a > 0$, the convergence of many\nobreakdash-body ground states in the mean\nobreakdash-field limit when $N \to \infty$ seems difficult to obtain. This is essentially due to the lack of compactness of the many\nobreakdash-body ground states and the linearity of the system~\eqref{Hamiltonian}.
However, under the assumption that the \NLS minimization problem $E_{a,\mathfrak{b}}^{\NLS}$ has a unique ground state $u_0$, we can prove that
\[
	\lim_{N\to\infty} \tr \left| \gamma_{\Psi_N}^{(k)} - | u_0^{\otimes k} \rangle \langle u_0^{\otimes k} | \right| = 0, \quad \forall\, k=1,2,\dots\,,
\]
for the whole sequence $\{\Psi_N\}_N$ of ground states of~$E_{a, \mathfrak{b}}^{\mathrm{Q}, N}$ with $a>0$ fixed. This is a consequence of a Feynman--Hellman-type argument which relies strongly on the energy convergence~\eqref{cv:energy_quantum_to_NLS_critical} and the uniqueness of the limiting profile. In this case, the proof is similar to the one of~\eqref{asymptotic:many_body_ground_state_critical}. However, the uniqueness of the \NLS ground states with the repulsive cubic term and the critical attractive quintic term seems to be open. To conclude about the mean\nobreakdash-field limit, and for comparison, note that we obtained in Theorem~\ref{thm:collapse_Hartree} the convergence~\eqref{cv:Hartree_to_NLS_ground_state}, in the mean\nobreakdash-field limit, of the Hartree ground states of the \NLS ground states. What make this possible is the nonlinearity of the Hartree functional $\mathcal{E}_{a, \mathfrak{b}}^{{\rm H}, N}$ in~\eqref{functional:hartree}.

Finally, the restriction $\beta<\alpha$ is only inherited from the comparison between Hartree and \NLS. That is, in order to compare the many\nobreakdash-body problem to the \NLS one, we compare the Hartree problem to the latter (see Theorem~\ref{thm:collapse_Hartree}) then the former to the Hartree problem, and we require the restriction $\beta<\alpha$ only in the first comparison, not in the second one.

\subsubsection{Non-critical three-body interactions}
Next, we consider the BEC in the non-critical case for the three\nobreakdash-body interaction. For the mean\nobreakdash-field regime, this non-criticality corresponds to the case where $0<b<\mathfrak{b}$ and $a \in \R$ are fixed. In this context, the compactness of the many\nobreakdash-body ground states is easily obtained and we are able to derive its condensation in the mean\nobreakdash-field limit $N\to\infty$ by convex analysis. For the collapse regime, this non-criticality corresponds to $\zeta>0$ in Theorem~\ref{thm:collapse_NLS}, which covers the situations where the convergence $b_N \nearrow \mathfrak{b}$ is either ``proportional'' to or slower (in order of magnitude) than the convergence $a_N \to 0$, as $N\to\infty$. As in the critical case $\zeta=0$, we prove that the many\nobreakdash-body ground states have a universal blow-up profile described by the (unique) solution of the quintic \NLS equation~\eqref{qNLS_solution}.

We recall that $\mathcal{M}^{\NLS}$ is the set of \NLS ground states. We have the following result.
\begin{theorem}[\textbf{Condensation and collapse of the many--body ground states: non-critical regimes}]\label{thm:many_body_noncritical}
	Let $0<\alpha,\beta<1$. Let $U$ and $W$ satisfy~\eqref{condition:potential_two_body}--\eqref{condition:potential_three_body_symmetry} with $\nabla_{1}W \in L^q(\R\times\R, \R)$ for any $q>1$.
	\begin{itemize}[leftmargin=2em]
		\item[(i)] Let $a\in \R$ and $0<b<\mathfrak{b}$ be fixed. Then,
			\begin{align}\label{cv:energy_quantum_to_NLS_noncritical}
				\lim_{N\to \infty} E_{a, b}^{\mathrm{Q}, N} = E_{a,b}^{\NLS} >-\infty \,.
			\end{align}
			Moreover, for any ground state $\Psi_N$ of $E_{a, b}^{\mathrm{Q}, N}$, there exists a Borel probability measure $\mu$ supported on $\mathcal{M}^{\NLS}$ such that, along a subsequence,
			\begin{equation}\label{cv:ground_state_quantum_to_NLS_noncritical}
				\lim_{N \to \infty}\tr \left| \gamma_{\Psi_N}^{(k)} - \int |u^{\otimes k} \rangle \langle u^{\otimes k}| \di\mu(u) \right| =0, \quad \forall\, k=1,2,\dots\,.
			\end{equation}
			If $\mathcal{M}^{\NLS}$ has a unique ground state $u_0$ (up to a phase), then for the whole sequence,
			\[
				\lim_{N \to \infty}\tr \left| \gamma_{\Psi_N}^{(k)} - |u_0^{\otimes k} \rangle \langle u_0^{\otimes k}| \right| =0, \quad \forall\, k=1,2,\dots\,.
			\]
		\item[(ii)] Assume $xU(x) \in L^1(\R)$ and $x W(x) \in L^1(\R^2)$, and let $\zeta$, $\mathcal{Q}_s$, $\{a_N\}_N$, $\{b_N\}_N$ and $\{\ell_N\}_N$ be as in Theorem~\ref{thm:collapse_NLS} with $\zeta \ne 0$ (hence $0<b_N<\mathfrak{b}$), $\zeta \ne 12(s+1)/s$, and $\ell_N \sim N^{-\eta}$ for
			\begin{equation}\label{collapse:speed_noncritical}
				0<\eta<\min\left\{\alpha, \frac{\beta}{s+3}, \frac{1}{10s}\right\}.
			\end{equation}
			Then,
			\begin{equation}\label{asymptotic:quantum_energy_noncritical}
				E_{a_N, b_N}^{\mathrm{Q}, N} = E_{a_N, b_N}^{\NLS} \left( 1 + o(1) \right) = \left( \frac{s+1}{s} - \frac{\zeta}{12} + o(1) \right) \mathcal{Q}_s \ell_N^s\,.
			\end{equation}
			Moreover, if additionally $\eta < 1/(10(s+2))$ and $\Phi_N = \ell_N^{N/2} \Psi_N(\ell_N\cdot)$ with $\Psi_N$ a ground state of~$E_{a_N, b_N}^{\mathrm{Q}, N}$, then
			\begin{equation}\label{asymptotic:many_body_ground_state_noncritical}
				\lim_{N\to\infty} \tr \left| \gamma_{\Phi_N}^{(k)} - | Q_0^{\otimes k} \rangle \langle Q_0^{\otimes k} | \right| = 0, \quad \forall\, k=1,2,\dots\,,
			\end{equation}
			where $Q_0$ is given in~\eqref{Def_Q0}.
	\end{itemize}
\end{theorem}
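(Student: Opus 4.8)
The plan is to transfer the information we already have at the Hartree level (Theorem~\ref{thm:collapse_Hartree}) down to the many-body level by means of the quantitative quantum de~Finetti theorem of Lewin--Nam--Rougerie, exactly as in the critical case. I would first establish \emph{energy lower bounds}; the matching upper bounds $E_{a,b}^{\mathrm{Q},N}\le E_{a_N,b_N}^{\mathrm{H},N}$ are immediate from the variational principle with a factorized trial state $u^{\otimes N}$, and the Hartree side is controlled by Theorem~\ref{thm:collapse_Hartree}. For part~(i), fix $0<b<\mathfrak b$: rewrite $N^{-1}\pscal{\Psi_N,H_{a,b}^N\Psi_N}=\tfrac13\tr[H_3\gamma_{\Psi_N}^{(3)}]$, split off a bit of kinetic energy (say $\eps h$ per particle) to keep a coercive remainder $(1-\eps)h$, and apply the de~Finetti estimate to $\gamma_{\Psi_N}^{(3)}$ on the low-energy sector of $h^{\otimes3}$. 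Since $b<\mathfrak b$ the three-body term is subcritical after smearing (using $U,W\ge0$, $\int U=1=\iint W$, and $\nabla_1 W\in L^q$ to bound the localization error $\tr[U_N\otimes\cdots]$, $\tr[W_N\otimes\cdots]$ versus the $\delta$-interactions), so the resulting de~Finetti measure $\mu$ on the unit sphere of $\mathfrak H$ yields $\liminf_N E_{a,b}^{\mathrm{Q},N}\ge \int \mathcal E_{a,b}^{\NLS}(u)\,\di\mu(u)\ge E_{a,b}^{\NLS}$, and hence \eqref{cv:energy_quantum_to_NLS_noncritical}.

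For the convergence of reduced density matrices in~(i), I would combine the energy convergence with the structure of the de~Finetti measure: equality in the bound above forces $\mu$ to be supported on $\mathcal M^{\NLS}$, and a standard argument (lower semicontinuity plus the fact that $\gamma_{\Psi_N}^{(k)}\wto\int|u^{\otimes k}\rangle\langle u^{\otimes k}|\,\di\mu(u)$ weak-$*$ and the limit has the same trace) upgrades weak to trace-norm convergence, giving \eqref{cv:ground_state_quantum_to_NLS_noncritical} along a subsequence. When $\mathcal M^{\NLS}=\{u_0\}$ (up to phase), every subsequential limit coincides, so the whole sequence converges. Convex-analysis/Feynman--Hellmann as mentioned in the text can be used alternatively to identify $\mu$ via differentiating $E^{\mathrm{Q},N}$ in a perturbation parameter, but the de~Finetti route is cleaner here.

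For part~(ii), the collapse regime, I would rescale: with $\Phi_N=\ell_N^{N/2}\Psi_N(\ell_N\,\cdot)$ the Hamiltonian becomes, up to the overall factor $\ell_N^2$, a new many-body Hamiltonian with trapping $\ell_N^{s+2}|x|^s$, two-body coupling $a_N\ell_N$ and smeared range $N^{-\alpha}\ell_N^{-1}$, three-body coupling $b_N$ and range $N^{-\beta}\ell_N^{-1}$. Dividing by $\mathcal Q_s\ell_N^s$ and using \eqref{NLS_energy_upperbound}--\eqref{thm_collapse_NLS_ell_n_choice}, the leading energy is $(s+1)/s-\zeta/12$, matching \eqref{asymptotic:quantum_energy_noncritical}; to make this rigorous I would again use Theorem~\ref{thm:collapse_Hartree}(ii) for the upper bound and a de~Finetti lower bound, now with quantitative error terms. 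The role of the conditions in \eqref{collapse:speed_noncritical}--\eqref{collapse:speed_noncritical} and $\eta<1/(10(s+2))$ is precisely to make the de~Finetti error ($\sim N^{-c}$ times powers of $\ell_N^{-1}$ coming from the reference operator's eigenvalue counting and the smearing scales), the gap between smeared and $\delta$ interactions, and the gap between the rescaled quantum and Hartree problems all $o(\mathcal Q_s\ell_N^s)$; one checks this by the same bookkeeping as in the proof of Theorem~\ref{thm:collapse_Hartree} plus the de~Finetti quantitative bound. Finally, since $\zeta\ne 0$ and $\zeta\ne 12(s+1)/s$ guarantee a strictly positive limiting energy and the limiting \NLS problem $E_{1,1}^{\NLS}$ (after the canonical rescaling) has $Q_0$ as its \emph{unique} minimizer up to phase, the de~Finetti measure of the rescaled states is $\delta_{Q_0}$ and \eqref{asymptotic:many_body_ground_state_noncritical} follows from trace-norm convergence of the reduced densities. \emph{The main obstacle} I anticipate is the quantitative bookkeeping in~(ii): tracking how the de~Finetti localization error and the two competing interaction-smearing errors scale against $\ell_N^s$ and pinning down the admissible exponent window, in particular handling the $\zeta=0$-adjacent behavior and the extra assumption $\alpha>\beta$ (inherited, as in the critical case, from the Hartree-to-\NLS comparison, not needed in the Hartree-to-many-body step), as well as checking that the $\nabla_1 W\in L^q$ hypothesis suffices to control the three-body smearing uniformly.
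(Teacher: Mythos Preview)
Your proposal has a genuine structural gap: you plan to reuse the Lewin--Nam--Rougerie quantitative de~Finetti theorem ``exactly as in the critical case,'' but that route only yields the restricted range $\beta<s/(9s+6)$ of Theorem~\ref{thm:many_body_critical}, not the full $0<\beta<1$ claimed here. The reason is that the critical-case argument rests on the crude a~priori bound $\tr[h\gamma_{\Psi_N}^{(1)}]\le CN^{2\beta}$ (obtained simply from $\|W_N\|_\infty\le CN^{2\beta}$), and with that input the localization/de~Finetti errors are only $o(1)$ for small $\beta$. The paper's proof replaces this with two new ingredients you do not mention: (a) the \emph{information-theoretic} quantum de~Finetti theorem, whose error scales like $\sqrt{\log(\dim P)/N}$ rather than $\dim P/N$ and is applied after Fourier-decomposing $U_N,W_N$ into tensor products of bounded multiplication operators; and (b) \emph{moment estimates} $\tr[h\gamma_{\Psi_N}^{(1)}]\le C$ and $\tr[h\otimes h\,\gamma_{\Psi_N}^{(2)}]\le C$ uniform in $N$, obtained via a bootstrap on the perturbed energy $E^{\mathrm{Q},N,\eps}$. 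The second moment is essential because the three-body term cannot be controlled by a one-body Sobolev inequality in 1D; one needs operator bounds of the form $|W(x-y,x-z)|\lesssim (h_yh_z)^{1/2+\delta}$.

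Relatedly, you have misidentified the role of the hypothesis $\nabla_1 W\in L^q$: it is \emph{not} used to compare smeared and $\delta$ interactions, but to bound the anticommutator $\{h_x,W_N(x-y,x-z)\}\lesssim(\|W_N\|_p+\|\nabla_1 W_N\|_q)\,h_x(h_y+h_z)$, which is the key step in proving the second moment estimate. Without this, your ``split off $\eps h$ and keep a coercive remainder'' step does not close for $\beta$ close to $1$. For part~(ii), the paper does not reduce to a fixed limiting problem ``$E_{1,1}^{\NLS}$ with unique minimizer $Q_0$''; instead it repeats the moment-estimate machinery with the sharp choice $\eps=1-b_N/\mathfrak b$ to get $\tr[h\gamma_{\Psi_N}^{(1)}]\lesssim(\mathfrak b-b_N)^{-2/(s+2)}$, feeds this into the de~Finetti lower bound, and then invokes Theorem~\ref{thm:collapse_Hartree} on approximate Hartree ground states to force the de~Finetti measure onto $Q_0$. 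The exponent conditions in~\eqref{collapse:speed_noncritical} and $\eta<1/(10(s+2))$ arise from balancing these refined errors, not from the bookkeeping you describe.
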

	
	By the assumption $0<\alpha, \beta<1$ in Theorem~\ref{thm:many_body_noncritical}, we consider mean\nobreakdash-field (by opposition to dilute) two\nobreakdash-$\R$\nobreakdash-body and three\nobreakdash-$\R$\nobreakdash-body interactions. For comparison, our assumptions in Theorem~\ref{thm:many_body_critical} was much weaker on $\alpha$ but stronger on $\beta$: we obtained there the convergence of the many\nobreakdash-body energy for any $\alpha>0$, but only for $\beta > 0$. It would be very interesting to extend the above result to the dilute limit. We expect the threshold for the three\nobreakdash-$\R$\nobreakdash-body interactions to be at $\beta = 1$, above which the particles meet rarely but interact strongly. In other words, the length scale of the interactions $N^{-\beta}$ is smaller than the average distance between particles $N^{-1}$ and the interaction strength $N^{2\beta-2}$ is large. This is the same as for two\nobreakdash-$\R^2$\nobreakdash-body interactions. As it happens, three\nobreakdash-body interactions in~1D behave similarly to two\nobreakdash-body interactions in~2D, which is related to the fact that in both cases the corresponding NLS term is $L^2$-critical. However, and contrarily to the case of the two\nobreakdash-$\R$\nobreakdash-body interaction (see, e.g.,~\cite[Lemma~3.7]{LewNamRou-16}), we have no immediate control on the three\nobreakdash-$\R$\nobreakdash-body interaction by a Sobolev inequality in 1D. This complicates our analysis in many places. Finally, note that the assumption $\zeta \ne 12(s+1)/s$ is related to the fact that the main order term in the \NLS energy (hence in the Hartree energy) is trivial at this value of $\zeta$.
	
	Our method of proof is the information-theoretic quantum de~Finetti theory \cite{Rougerie-20a,Rougerie-20b} and the second estimate as in~\cite{ErdSchYau-10}. Unfortunately with regards to the constraint on $\beta$, the arguments using moments estimates as in~\cite{LewNamRou-16,LewNamRou-17,NamRou-20} are limited to the case $0 < \alpha, \beta < 1$. It also requires the strict inequalities $b, b_N < \mathfrak{b}$ and our technical assumption $\nabla_{1}W \in L^q(\R^2)$ for any~$q>1$, which can actually be relaxed into $\nabla_{1}W \in L^q(\R^2)$ for any~$q\in(1, q_*)$ for some $q_*>1$, with no change in the proofs. Note that it can even be relaxed into $\nabla_{1}W \in L^q(\R^2)$ for some~$q\in(1,2)$ but at the price of the further restriction $0<\beta<q/(3q-2)$ on $\beta$ (see the discussion at the end of the proof of Lemma~\ref{Lemma_Moments_Estimates}).

	Finally, we emphasize that, within the (constrained) two-fold context of the collapse regimes $a_N \to 0$ and $b_N \nearrow \mathfrak{b}$ covered by~Theorem~\ref{thm:collapse_NLS} and of small enough $\alpha$'s, $\beta$'s, and $\eta$'s\footnote{That is, for $\alpha$'s, $\beta$'s, and $\eta$'s satisfying at the same times the constraints on them due to Theorems~\ref{thm:many_body_critical} and to~\ref{thm:many_body_noncritical}.}, Theorems~\ref{thm:many_body_critical} and~\ref{thm:many_body_noncritical} describe the blow-up phenomenon in the many\nobreakdash-body theory.

\medskip
\noindent\textbf{Organization of the paper.}
	The outline of this article is as follows. Section~\ref{sec:NLS} is devoted to the existence of \NLS ground states and its blow-up behavior. The condensation and collapse of the Hartree ground states are studied in Section ~\ref{sec:Hartree}. The mean\nobreakdash-field approximation and the collapse of the system is given in Section~\ref{sec:many_body}. The appendices contain various technical results and proofs.

\medskip
\noindent\textbf{Acknowledgments.} We thank P.~T.~Nam for drawing our attention to this problem. D.-T. Nguyen was supported through the Knut and Alice Wallenberg Foundation and the European Research Council (ERC) under the European Union’s Horizon 2020 Research and Innovation Programme (Grant agreement CORFRONMAT No 758620). J.~Ricaud acknowledges financial support from the French Agence Nationale de la Recherche (ANR) under Grant No. ANR-19-CE46-0007 (project ICCI). The authors thank the referee for the suggestions that helped improving the manuscript.

\section{Existence and collapse of the \NLS ground states}\label{sec:NLS}
	The goal of this section is to establish the existence of the ground states for the \NLS problem~\eqref{energy:NLS} and investigate its blow-up profile as well as the first order expansion of the ground state energy.

\subsection{Proof of Theorem~\ref{thm:existence_NLS}}\label{Section_Proof_existence_NLS}
Recall~\eqref{NLS_energy_upperbound} that
\[
	E_{a,b}^{\NLS} \leq \ell^2\frac{\mathfrak{b}-b}{12} + \ell\frac{a}{\pi} + \ell^{-s}\int_{\R}|x|^s|Q_0(x)|^2\dix\,, \quad \forall\, \ell>0\,.
\]

For~\emph{\ref{NLS_GS_inexistence}} ---($b>\mathfrak{b}$ and $a\in\R$) or ($b=\mathfrak{b}$ and $a<0$)---, taking $\ell \to \infty$ gives $E_{a,b}^{\NLS} = -\infty$.

For~\emph{\ref{NLS_GS_critical}} ---$b=\mathfrak{b}$ and $a=0$---, we have $E_{0,\mathfrak{b}}^{\NLS} \geq0$ by~\eqref{ineq:GNS} then we obtain $E_{0,\mathfrak{b}}^{\NLS} = 0$ the same way. Suppose now that there exists a ground state $u$ for $E_{0,\mathfrak{b}}^{\NLS}$. Then, \eqref{ineq:GNS} implies that $\norm{|\cdot|^{s/2}u}_{2}=0$, which contradicts $\norm{u}_{2}=1$.

The rest of the proof is dedicated to the most demanding case~\emph{\ref{NLS_GS_existence}} ---($0<b<\mathfrak{b}$ and $a\in \R$) or ($b=\mathfrak{b}$ and $a>0$)---, for which we prove the existence of a ground state. Let $\{u_n\}_n \subset H^1(\R)$ be a minimizing sequence for $E_{a,b}^{\NLS}$, i.e.,
\[
	E_{a,b}^{\NLS} = \lim_{n\to\infty}\mathcal{E}_{a,b}^{\NLS}(u_n) \quad \text{ with } \quad \norm{u_n}_{2}=1\,.
\]
By~\eqref{ineq:GNS}, we have
\[
	\mathcal{E}_{a,b}^{\NLS}(u_n) \geq \left(\frac{\mathfrak{b}-b}{\mathfrak{b}}\right) \norm{u_n'}_{2}^2 + \frac{a}{2} \norm{u_n}_{4}^4 + \norm{|\cdot|^{\frac{s}{2}}u_n}_{2}^2\,.
\]
Furthermore, using that~$\norm{f}_{\infty}^2\leq2\norm{f}_{2}\norm{f'}_{2}$, hence~$\norm{f}_{4}^4\leq 2 \norm{f}_{2}^3 \norm{f'}_{2}$, for any~$f\in H^1(\R)$, we have in the case ~$b<\mathfrak{b}$ (and $a\in\R$) that~$\{u_n\}_n$ is uniformly bounded in~$H^1(\R)$, thence that $\{|\cdot|^{s/2}u_n\}_n$ is uniformly bounded in~$L^2(\R)$. In the case~$b=\mathfrak{b}$ and~$a>0$, the latter immediately follows from the above lower bound on $\mathcal{E}_{a,b}^{\NLS}(u_n)$, and we are left with proving the $H^1(\R)$-boundedness\footnote{Note that the arguments presented here to prove $H^1(\R)$-boundedness also apply to the case~$0<b<\mathfrak{b}$ and~$a>0$.}. We assume on the contrary that $\norm{u_n'}_{2} \to \infty$ as~$n\to\infty$. Define~$v_n = \eps_n^{1/2} u_n(\eps_n \cdot)$ with~$\eps_n := \norm{u_n'}_{2}^{-1}$. Then, $\norm{v_n}_{2} = 1 =\norm{v_n'}_{2}$ and
\[
	\mathcal{E}_{a,b}^{\NLS}(u_n) = \eps_n^{-2}\mathcal{F}^{\NLS}_{b}(v_n) + \eps_n^{-1}\frac{a}{2} \norm{v_n}_{4}^4 + \eps_n^s \norm{|\cdot|^\frac{s}{2}v_n}_{2}^2\,.
\]
Here we used the shorthand notation
\begin{equation}\label{NLS_Def_mathcal_F}
	\mathcal{F}^{\NLS}_{b}(v_n) := \norm{v_n'}_{2}^2 - \frac{b}{6}\norm{v_n}_{6}^6
\end{equation}
Since~$\eps_n \to 0$ as~$n\to\infty$, the above implies that 
\begin{equation}\label{NLS_gound_state_existence_boundedness}
	\lim_{n\to\infty}\mathcal{F}^{\NLS}_{b}(v_n) = \lim_{n\to\infty}\norm{v_n}_{4} = 0.
\end{equation}
On the other hand, since $\{v_n\}_n$ is bounded uniformly in $H^1(\R)$ we have, up to a translation and a subsequence, $v_n\wto v$ weakly in $H^1(\R)$. We claim that $\norm{v}_{2} = 1$. Indeed, we have
\[
	0 \leq \norm{v}_{2} \leq \liminf_{n\to\infty} \norm{v_n}_{2} = 1\,.
\]
If $\norm{v}_{2} = 0$ then $v_n \to 0$ strongly in $L^r(\R)$ for any $2\leq r < \infty$ (see e.g.,~\cite[Lemma~9]{LenLew-11}). This contradicts~\eqref{NLS_gound_state_existence_boundedness} and the fact that $\norm{v_n'}_{2}=1$. If $0 < \norm{v}_{2} < 1$ then, by the weak convergence $v_n\wto v$ in $H^1(\R)$, we have the energy decomposition
\[
	0 = \lim_{n\to\infty}\mathcal{F}^{\NLS}_{b}(v_n) = \mathcal{F}^{\NLS}_{b}(v) + \lim_{n\to\infty}\mathcal{F}^{\NLS}_{b}(v_n-v)\,.
\]
This, however, is not possible because
\[
	\mathcal{F}^{\NLS}_{b}(v_n-v) \geq 0 \quad \text{and} \quad \mathcal{F}^{\NLS}_{b}(v) \geq \left(\frac{1}{\norm{v}_{2}^{4}}-1\right)\frac{\mathfrak{b}}{6}\norm{v}_{6}^{6} > 0\,,
\]
by~\eqref{ineq:GNS}. Therefore, we must have $\norm{v}_{2} = 1$. By the Brezis--Lieb lemma, we have $v_n\to v$ strongly in $L^2(\R)$. In fact, $v_n\to v$ strongly in $L^r(\R)$ for any $2\leq r < \infty$, because of the $H^1(\R)$ boundedness. In particular, $\norm{v_n}_{4} \to \norm{v}_{4}$ and $v$ is an optimizer for~\eqref{ineq:GNS} since
\[
	0 = \lim_{n\to\infty}\mathcal{F}^{\NLS}_{b}(v_n) \geq \mathcal{F}^{\NLS}_{\mathfrak{b}}(v) \geq 0\,,
\]
contradicting $\norm{v_n}_{4} \to 0$ as $n\to\infty$.

We have proved that $\{u_n\}_n$ is uniformly bounded in $H^1(\R)$ whence either $b=\mathfrak{b}$ and $a>0$, or $0<b<\mathfrak{b}$ and $a\in \R$. Then, up to a translation and a subsequence, $u_n\wto u$ weakly in $H^1(\R)$ and $u_n\to u$ strongly in $L^r_{\textrm{loc}}(\R)$ for any $2\leq r\leq+\infty$. This strong $L^r_{\textrm{loc}}(\R)$-convergence combined with the uniform boundedness of $\norm{|\cdot|^{s/2}u_n}_{2}$ gives that, up to a subsequence, $u_n\to u$ strongly in $L^r(\R)$ for $2\leq r<+\infty$ (see e.g.,~\cite{AdaFou-03}). In particular, $\norm{u}_{2}=1$. By the Brezis--Lieb lemma, we have $u_n\to u$ strongly in $L^2(\R)$. In fact, $u_n\to u$ strongly in $L^r(\R)$ for any $2\leq r < \infty$, because of the $H^1(\R)$ boundedness. Consequently, by weak lower semicontinuity we have
\[
	E_{a,b}^{\NLS}= \lim_{n\to\infty}\mathcal{E}_{a,b}^{\NLS}(u_n) \geq \mathcal{E}_{a,b}^{\NLS}(u) \geq E_{a,b}^{\NLS}\,.
\]
Hence, $u$ is a ground state of~$E_{a,b}^{\NLS}$.
\qed

\subsection{Proof of Theorem~\ref{thm:collapse_NLS}}
The proof for approximate ground states being the same, up to very few changes, as the one for ground states, we only write the latter.

For shortness, we define $B_n:=\mathfrak{b}-b_n$ and $A_n:=a_n/\pi$.
We start by inserting $\ell=\ell_n^{-1}$, defined in Theorem~\ref{thm:collapse_NLS}, into~\eqref{NLS_energy_upperbound}. Treating separately the cases $\zeta\neq0$ and $\zeta\neq6$, it yields the upper bound matching~\eqref{blowup_NLS_energy} in both cases:
\begin{equation}\label{NLS_upperbound}
	E_{a_n,b_n}^{\NLS} \leq \ell_n^s \left( \ell_n^{-(s+2)} \frac{B_n}{12} + A_n \ell_n^{-(s+1)} + \frac{\mathcal{Q}_s}{s} \right) = \left( \frac{s+1}{s} - \frac{\zeta}{12} + o(1) \right) \mathcal{Q}_s \ell_n^s\,.
\end{equation}

Let $u_n$ be a ground state of~$E_{a_n,b_n}^{\NLS}$ and $\tilde{u}_n:= \ell_n^{1/2} u_n(\ell_n\cdot)$ for any $n$. Then, $\norm{\tilde{u}_n}_{2} = \norm{u_n}_{2}=1$ and
\begin{equation}\label{NLS_approximate}
	E_{a_n,b_n}^{\NLS} = \ell_n^s \left[ \ell_n^{-(s+2)} \mathcal{F}^{\NLS}_{b_n}(\tilde{u}_n) + A_n \ell_n^{-(s+1)} \frac{\pi}{2} \norm{\tilde{u}_n}_{4}^4 + \norm{ |\cdot|^{\frac{s}{2}}\tilde{u}_n }_2 \right]\,.
\end{equation}
We now prove the claimed convergence~\eqref{blowup_NLS_ground_state}, which immediately implies~\eqref{blowup_NLS_energy}. We distinguish the two cases for $\zeta\geq0$: $\zeta\neq0$ and $\zeta\neq6$.

\medskip
\noindent\textbf{Case $\zeta\neq0$.} Recall that~$B_n>0$ (see Remark~\ref{Rmk_thm_collapse_nls}). Combining~\eqref{NLS_upperbound} with~\eqref{NLS_approximate} and the definition of $\ell_n$ depending on $B_n$ yields
\begin{equation}\label{NLS_control_nrj_for_gamma_not_0}
	\frac{s+1}{s} - \frac{\zeta}{12} + o(1) \geq \frac{\zeta}{B_n} \mathcal{F}^{\NLS}_{b_n}(\tilde{u}_n) + \left( \frac{6-\zeta}{6} \cdot \frac{\pi}{2} + o(1) \right) \norm{\tilde{u}_n}_{4}^4 + \frac{1}{\mathcal{Q}_s} \norm{ |\cdot|^{\frac{s}{2}}|\tilde{u}_n }_{2}^2\,.
\end{equation}
Using now~\eqref{ineq:GNS}, we obtain
\begin{equation}\label{NLS_control_nrj_L2_for_gamma_not_0}
	\frac{s+1}{s} - \frac{\zeta}{12} + o(1) \geq \frac{\zeta}{\mathfrak{b}} \norm{\tilde{u}_n'}_{2}^2 + \left( \frac{6-\zeta}{6} \cdot \frac{\pi}{2} + o(1) \right) \norm{\tilde{u}_n}_{4}^4 + \mathcal{Q}_s^{-1} \norm{|\cdot|^{\frac{s}{2}}\tilde{u}_n}_{2}^2\,,
\end{equation}
and using moreover~$\norm{\tilde{u}_n}_{4}^4\leq 2 \norm{\tilde{u}_n'}_{2}$, as in the proof of Theorem~\ref{thm:existence_NLS}, gives
\[
	\frac{s+1}{s} - \frac{\zeta}{12} + o(1) \geq \frac{\zeta}{\mathfrak{b}} \norm{\tilde{u}_n'}_{2}^2 - \left| \frac{6-\zeta}{6} \cdot \frac{\pi}{2} + o(1) \right| \norm{\tilde{u}_n'}_{2} + \mathcal{Q}_s^{-1} \norm{|\cdot|^{\frac{s}{2}}\tilde{u}_n}_{2}^2\,.
\]
This implies that~$\{\tilde{u}_n\}_n$ and~$\{ |\cdot|^{s/2}\tilde{u}_n \}_n$ are uniformly bounded in~$H^1(\R)$ and in~$L^2(\R)$, respectively. Notice that if~$0<\zeta<6$, this could have been deduced directly from~\eqref{NLS_control_nrj_for_gamma_not_0} and~\eqref{ineq:GNS}, using that the term in font of~$\norm{\tilde{u}_n}_{4}^4$ in~\eqref{NLS_control_nrj_for_gamma_not_0} is nonnegative for~$n$ large enough.

Thus, up to a subsequence,~$\tilde{u}_n \to u$ weakly in~$H^1(\R)$ and strongly in~$L^r(\R)$ for~$2\leq r < \infty$. In particular, $\norm{u}_{2}=1$ and, multiplying~\eqref{NLS_control_nrj_for_gamma_not_0} by~$B_n/\zeta$ then passing to the limit in~$n$, we obtain
\begin{equation}\label{NLS_limit_fct_is_optimizer_GNS}
	 \norm{u'}_{2}^2 \leq \lim\limits_{n\to+\infty} \norm{\tilde{u}_n'}_{2}^2 \leq \lim\limits_{n\to+\infty} \left(\frac{b_n}{6} \norm{\tilde{u}_n}_{6}^6 + O(B_n) \right) = \frac{\mathfrak{b}}{6} \norm{u}_{6}^6 \leq \norm{u'}_{2}^2,
\end{equation}
where the first inequality is Fatou's lemma and the last one is~\eqref{ineq:GNS}. On one hand, by the uniqueness (up to translations, multiplication by a complex factor, and scaling) of~$Q_0$, this implies that~$u$ is an optimizer of~\eqref{ineq:GNS} thence that, after a suitable rescaling, $u(x)=t^{1/2}Q_0(tx+x_0)$ for some $t>0$ and~$x_0\in\R$. On another hand, it implies that the convergence in~$H^1(\R)$ is actually strong since~$\norm{\tilde{u}_n'}_{2} \to \norm{u'}_{2}$ and~$\tilde{u}_n'\wto u'$ in~$L^2(\R)$.

We now show that~$u \equiv Q_0$, which proves (but only up to a subsequence) both~\eqref{blowup_NLS_ground_state} and~\eqref{blowup_NLS_energy}. That is, we show that~$t= 1$ and~$x_0=0$. Indeed, taking the limit~$n\to\infty$ in~\eqref{NLS_control_nrj_L2_for_gamma_not_0}, we get
\begin{align}\label{NLS_control_nrj_L4_for_gamma_not_0}
	\frac{s+1}{s} - \frac{\zeta}{12}
		\geq{} & \frac{\zeta}{\mathfrak{b}} \norm{u'}_{2}^2 + \frac{6-\zeta}{6} \cdot \frac{\pi}{2}\norm{u}_{4}^4 + \mathcal{Q}_s^{-1} \norm{|\cdot|^{\frac{s}{2}}u}_{2}^2 \nonumber \\
		={} & t^2 \frac{\zeta}{\mathfrak{b}} \norm{Q_0'}_{2}^2 + t \frac{6-\zeta}{6} \cdot \frac{\pi}{2}\norm{Q_0}_{4}^4 + t^{-s} \mathcal{Q}_s^{-1} \norm{|\cdot-x_0|^{\frac{s}{2}}Q_0}_{2}^2 \nonumber \\
		\geq{} & \frac{\zeta}{12} t^2 + \frac{6-\zeta}{6} t + \frac{t^{-s}}{s} =:g(t) \geq g(1) = \frac{s+1}{s} - \frac{\zeta}{12}\,.
\end{align}
For the one-to-last inequality, we used that~$Q_0$ is \emph{strictly symmetric-decreasing} and~$|\cdot|^s$ is \emph{strictly symmetric-increasing} to obtain (see Appendix~\ref{app:techn_NLS} for definitions and proof)
\begin{equation}\label{ineq_on_strict_symm_decreas_and_increasing_functions}
	\norm{|\cdot-x_0|^{\frac{s}{2}}Q_0}_{2}^2>\norm{|\cdot|^{\frac{s}{2}}Q_0}_{2}^2, \quad \forall\, x_0\neq0\,,
\end{equation}
that~$\norm{Q_0'}_{2}^2 = \pi^2/8 = \mathfrak{b}/12$ , and that~$\norm{Q_0}_{4}^4 = 2/\pi$. For the last inequality, we noticed that $g'(1)=0$ where~$g'(z) = \zeta z/6 + (6-\zeta)/6 - z^{-(s+1)}$ is strictly increasing on~$(0,+\infty)$, from~$-\infty$ to $+\infty$. Hence, $z=1$ is its unique zero, at which $g$ attains therefore its minimum. Consequently, equality holds in~\eqref{NLS_control_nrj_L4_for_gamma_not_0}, which implies~$t=1$, as well as in~\eqref{ineq_on_strict_symm_decreas_and_increasing_functions}, which in turn gives~$x_0=0$.

In order to complete the proof, we need to prove that the convergence holds for the whole sequence~$\{u_n\}_n$ and not only for a subsequence. This is a consequence of the \emph{uniqueness} of~$Q_0$ and of the fact that~\eqref{ineq:GNS} and~\eqref{NLS_control_nrj_for_gamma_not_0} hold for the whole sequence. Indeed, assuming the contrary, there exists a subsequence that does not converge in~$H^1(\R)$ to~$Q_0$. However, this subsequence is also bounded in~$H^1(\R)$, because the whole sequence is, and we can extract from it a subsubsequence that converges weakly in~$H^1(\R)$ to some~$v\not\equiv Q_0$. Now, since the reasoning leading to $u\equiv Q_0$ (after $u$ was obtained) relied only on inequalities that hold for the whole sequence~$\{u_n\}$, it also yields~$v\equiv Q_0$, which is a contradiction. The proof in the case~$\zeta\neq0$ is therefore complete.

\medskip
\noindent\textbf{Case $\zeta\neq6$.}
Recall that~$A_n/(6-\zeta)>0$ (see Remark~\ref{Rmk_thm_collapse_nls}). Combining~\eqref{NLS_upperbound} with~\eqref{NLS_approximate} and the definition of $\ell_n$ depending on $A_n$ yields
\begin{equation}\label{NLS_control_nrj_for_gamma_not_6}
	\frac{s+1}{s} - \frac{\zeta}{12} + o(1) \geq \frac{6-\zeta}{6 A_n\ell_n} \left[\mathcal{F}^{\NLS}_{b_n}(\tilde{u}_n) + \frac{6-\zeta }{6} \cdot \frac{\pi}{2} \norm{\tilde{u}_n}_{4}^4 + \frac{1}{\mathcal{Q}_s} \norm{ |\cdot|^{\frac{s}{2}}\tilde{u}_n }_{2}^2\right]\,.
\end{equation}
Using now~\eqref{ineq:GNS}, we obtain
\begin{equation}\label{NLS_control_nrj_L2_for_gamma_not_6}
	\frac{s+1}{s} - \frac{\zeta}{12} + o(1) \geq \frac{ \zeta + o(1) }{\mathfrak{b}} \norm{\tilde{u}_n'}_{2}^2 + \frac{6-\zeta}{6} \cdot \frac{\pi}{2} \norm{\tilde{u}_n}_{4}^4 + \mathcal{Q}_s^{-1} \norm{|\cdot|^{\frac{s}{2}}\tilde{u}_n}_{2}^2\,,
\end{equation}
with $\zeta + o(1) \geq0$ even if $\zeta=0$, as a consequence of $B_n\geq0$ and $A_n/(6-\zeta)>0$.

\begin{itemize}[leftmargin=1em]
	\item If~$\zeta>0$, using $\norm{\tilde{u}_n}_{4}^4\leq 2 \norm{\tilde{u}_n'}_{2}$ again, \eqref{NLS_control_nrj_L2_for_gamma_not_6} gives the uniform boundedness in~$H^1(\R)$ and, consequently, that~$\{ |\cdot|^{s/2}\tilde{u}_n \}_n$ is uniformly bounded in~$L^2(\R)$.
	\item If $\zeta=0$, we cannot recover the $H^1(\R)$-boundedness this way, but we have
		\[
			\frac{s+1}{s} - \frac{\zeta}{12} + o(1) \geq \frac{ o(1) }{\mathfrak{b}} \norm{\tilde{u}_n'}_{2}^2 + \frac{\pi}{2} \norm{\tilde{u}_n}_{4}^4 + \mathcal{Q}_s^{-1} \norm{|\cdot|^{\frac{s}{2}}\tilde{u}_n}_{2}^2
		\]
		where the factor in front of $\norm{\tilde{u}_n'}_{2}^2$ is nonnegative ---see the comment just below~\eqref{NLS_control_nrj_L2_for_gamma_not_6}. This implies that~$\{ |\cdot|^{s/2}\tilde{u}_n\}_n$ and~$\{\tilde{u}_n\}_n$ are uniformly bounded in~$L^2(\R)$ and in~$L^r(\R)$ for~$2\leq r \leq 4$, respectively. Assume now that $\norm{\tilde{u}_n'}_{2}$ is not uniformly bounded. Then, there exists a subsequence (still denoted the same) such that $\eps_n := \norm{\tilde{u}_n'}_{2}^{-1} \to0$. Let $v_n:=\eps_n^{1/2} \tilde{u}_n(\eps_n\cdot)$. It satisfies $\norm{v_n}_{2}=1=\norm{v_n'}_{2}$, $\norm{v_n}_{4}=\eps_n^{1/4} \norm{\tilde{u}_n}_{4}\to0$ when $n\to+\infty$ (since $\norm{\tilde{u}_n}_{4}$ is uniformly bounded), and
		\[
			E_{a_n,b_n}^{\NLS} = \eps_n^{-2}\ell_n^{-2} \left[\mathcal{F}^{\NLS}_{b_n}(\tilde{v}_n) + \eps_n\ell_n A_n \frac{\pi}{2} \norm{v_n}_{4}^4 + \eps_n^{s+2}\ell_n^{s+2} \norm{|\cdot|^{\frac{s}{2}}v_n}_{2}^2 \right]\,.
		\]
		This, combined with~\eqref{NLS_upperbound} and since each of the three term on the right hand side are nonnegative (using again~\eqref{ineq:GNS} and $b_n\leq\mathfrak{b}$ for the first term, and $A_n=6A_n/(6-\zeta)>0$ for the second), gives
		\[
			0\leq \mathcal{F}^{\NLS}_{b_n}(\tilde{v}_n) \leq \left( \frac{s+1}{s} - \frac{\zeta}{12} \right) \mathcal{Q}_s \ell_n^{s+2}\eps_n^2 (1+o(1)) \underset{n\to0}{\longrightarrow} 0\,,
		\]
		which in turn implies, by the same arguments as in the proof of Theorem~\ref{thm:existence_NLS}, that $v_n\to v$ strongly in~$L^r(\R)$ for $2 \leq r < \infty$, where $v$ is an optimizer of~\eqref{ineq:GNS}. In particular, $\norm{v_n}_{4}\to\norm{v}_{4}\neq0$, contradicting $\norm{v_n}_{4}\to0$. This concludes the proof that $\{\tilde{u}_n\}_n$ is uniformly bounded in~$H^1(\R)$.
	
		Therefore, for all $0\leq\zeta\neq6$, $\{\tilde{u}_n\}_n$ and~$\{ |\cdot|^{s/2}\tilde{u}_n \}_n$ are uniformly bounded in~$H^1(\R)$ and in~$L^2(\R)$, respectively. From here, we omit the detail of the proof because it follows strictly the one in the case $\zeta\neq0$. The only differences are that we multiply~\eqref{NLS_control_nrj_for_gamma_not_6} by~$A_n\ell_n/(6-\zeta)$ instead of~\eqref{NLS_control_nrj_for_gamma_not_0} by~$B_n/\zeta$ and that the $O(B_n)$ in~\eqref{NLS_limit_fct_is_optimizer_GNS} becomes a $O(A_n\ell_n/(6-\zeta))$. This concludes the proof of Theorem~\ref{thm:collapse_NLS}.
\end{itemize}

\section{Condensation and collapse of the Hartree ground states: proof of Theorem~\ref{thm:collapse_Hartree}}\label{sec:Hartree}
	The goal of this section is to prove Theorem~\ref{thm:collapse_Hartree}: to establish the condensation of the Hartree ground states in the limit $N\to\infty$ and to investigate its blow-up profile as well as the first order expansion of the ground state energy. To do so, we prove first an elementary lemma which will be heavily used in the proof of Theorem~\ref{thm:collapse_Hartree} but will also be useful in the rest of this work.
\begin{lemma}\label{Limit_and_rate_Hartree_interaction_energy}
	Let $u\in H^1(\R)$. Then,
	\begin{equation}\label{ineq_cv:Hartree_to_NLS_2body}
		0 \leq \int_{\R}|u(x)|^4\dix - \iint_{\R^2} U_N\left(x-y\right)|u(x)|^2|u(y)|^2\dix\diy \leq \norm{u}_{H^1}^4 o(1)_{N\to+\infty}
	\end{equation}
	and
	\begin{equation}\label{ineq_cv:Hartree_to_NLS_3body}
		0 \leq \int_{\R}|u(x)|^6\dix - \iiint_{\R^3} W_N\left(x-y,x-z\right)|u(x)|^2|u(y)|^2|u(z)|^2\dix\diy\diz \leq \norm{u}_{H^1}^6 o(1)_{N\to+\infty} \,,
	\end{equation}
	with the $o(1)$'s independent of $u$.
	
	Furthermore, if $x U(x) \in L^1(\R)$ and $xW(x,y) \in L^1(\R^2)$ then
	\begin{equation}\label{cv_rate:Hartree_to_NLS_2body}
		0 \leq \int_{\R}|u(x)|^4\dix - \iint_{\R^2} U_N\left(x-y\right)|u(x)|^2|u(y)|^2\dix\diy \leq 2 N^{-\alpha} \norm{x U(x)}_{1} \norm{u}_{6}^3 \normSM{u'}_{2}
	\end{equation}
	and
	\begin{align}\label{cv_rate:Hartree_to_NLS_3body}
		0 \leq{}& \int_{\R}|u(x)|^6 \dix - \iiint_{\R^3} W_N(x-y,x-z) |u(x)|^2 |u(y)|^2 |u(z)|^2 \dix \diy \diz \nonumber \\
		\leq{}& 4 N^{-\beta} \norm{x W(x,y)}_{1} \norm{u}_{10}^5 \normSM{u'}_{2}\,.
	\end{align}
\end{lemma}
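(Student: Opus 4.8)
The statement is purely about approximating a delta-type interaction by its mollified version, so everything reduces to writing the difference as an integral against $U_N$ (respectively $W_N$) and exploiting that $U_N$, $W_N$ are approximate identities. I would treat the two-body and three-body cases in parallel, the latter being the one requiring a little more bookkeeping because of the two-variable scaling.

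For the two-body estimate, write $|u(x)|^2=:\rho(x)$, so that $\int_\R \rho^2 - \iint U_N(x-y)\rho(x)\rho(y)\,\dix\diy = \iint U_N(x-y)\big(\rho(x)-\rho(y)\big)\rho(x)\,\dix\diy$ after using $\int U_N = 1$, and symmetrize in $x\leftrightarrow y$ to get $\tfrac12\iint U_N(x-y)\big(\rho(x)-\rho(y)\big)^2\,\dix\diy\geq 0$, which gives the nonnegativity (alternatively this is just convexity / the fact that $\widehat{U}\geq$ nothing — but the pointwise symmetrization argument is cleanest and avoids sign assumptions on $\widehat U$). For the quantitative bound under $xU(x)\in L^1$, go back to the non-symmetrized form, substitute $y = x - N^{-\alpha}t$ so that $\iint U_N(x-y)(\rho(x)-\rho(y))\rho(x) = \int_\R U(t)\int_\R \big(\rho(x)-\rho(x-N^{-\alpha}t)\big)\rho(x)\,\dix\,\dit$, bound the inner $x$-integral by $\|\rho(\cdot)-\rho(\cdot-N^{-\alpha}t)\|_{L^{3/2}}\|\rho\|_{L^3}$ via Hölder (exponents $3/2$ and $3$), then use $\|\rho(\cdot-h)-\rho\|_{L^{3/2}} \leq |h|\,\|\rho'\|_{L^{3/2}}$ and finally $\|\rho'\|_{L^{3/2}} = \big\||u|^2{}'\big\|_{3/2} \leq 2\||u|\,|u'|\|_{3/2}\leq 2\|u\|_6\|u'\|_2$ (Hölder with exponents $3$ and $2$ for the last step, noting $\|u\|_6 = \|\,|u|^2\|_3^{1/2}$ so $\|u\|_6$ appears squared), together with $\rho = |u|^2$ giving $\|\rho\|_{L^3} = \|u\|_6^2$. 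Collecting, the $|h| = N^{-\alpha}t$ factor combines with $\int U(t)\,t\,\dit = \|xU\|_1$ (using $U\geq 0$) to produce exactly $2N^{-\alpha}\|xU\|_1\|u\|_6^3\|u'\|_2$.

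For the three-body estimate the structure is identical. With $\rho=|u|^2$, after using $\iint W_N = 1$ write the difference as $\iiint W_N(x-y,x-z)\big(\rho(x)-\rho(y)\big)\rho(x)\rho(z)$; symmetrizing over the full $S_3$ action on $(x,y,z)$ — which is exactly what the symmetry hypothesis \eqref{condition:potential_three_body_symmetry} is there for — turns the integrand into a nonnegative combination of squared differences $\big(\rho(x)-\rho(y)\big)^2\rho(z)$, giving nonnegativity. For the rate, change variables $y = x - N^{-\beta}t_1$, $z = x - N^{-\beta}t_2$, so the difference equals $\iint W(t_1,t_2)\int_\R\big(\rho(x)-\rho(x-N^{-\beta}t_1)\big)\rho(x)\rho(x-N^{-\beta}t_2)\,\dix$; Hölder in $x$ with three factors and exponents chosen so that each $\rho$-factor contributes an $L^{?}$-norm — concretely $\|\rho(\cdot)-\rho(\cdot-N^{-\beta}t_1)\|_{5/2}\|\rho\|_{5}\|\rho\|_{5}$ works since $\tfrac{2}{5}+\tfrac15+\tfrac15 = 1$ wait that is $4/5$; use instead exponents $(5/3,5,5)$ with $\tfrac35+\tfrac15+\tfrac15=1$, so the first factor is in $L^{5/3}$ — then translation estimate $\|\rho(\cdot-h)-\rho\|_{5/3}\leq |h|\,\|\rho'\|_{5/3}$, $\|\rho'\|_{5/3}\leq 2\|u\|_{10}\|u'\|_2$ (Hölder with exponents $5$ and $2$, and $\|u\|_{10}^2 = \|\rho\|_5$), and $\|\rho\|_5 = \|u\|_{10}^2$ for the two remaining factors. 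The factor $|h| = N^{-\beta}|t|$ combines with $\iint W\,|t_1|\leq \iint W\,\|(t_1,t_2)\|\cdots$ — more simply $\iint W(t_1,t_2)|t_1|\,\dit_1\dit_2 \leq \|xW\|_1$ in the paper's notation (with $x$ denoting the first coordinate) — and the combinatorial prefactor from writing $\rho(x)\rho(x)$ after Hölder, plus not symmetrizing, yields the stated constant $4$.

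For the qualitative bounds \eqref{ineq_cv:Hartree_to_NLS_2body}–\eqref{ineq_cv:Hartree_to_NLS_3body} \emph{without} the moment assumptions, I would not use a translation/derivative estimate (there is no $\|xU\|_1$ available); instead, by density take $u\in C_c^\infty$, for which $\rho$ is uniformly continuous, so $\|\rho(\cdot-h)-\rho\|_{L^\infty}\to 0$ and the localized version gives a bound $\leq \varepsilon_N\|u\|_{H^1}^4$ with $\varepsilon_N\to 0$; then a standard approximation argument — controlling the difference of the two sides by $\|u\|_{H^1}^{4}$ uniformly via Gagliardo–Nirenberg ($\|u\|_4^4\leq 2\|u\|_2^3\|u'\|_2$ and $\iint U_N\rho\rho\leq \|U_N\|$ times... ) — upgrades it to all of $H^1$ with an $N$-uniform, $u$-uniform $o(1)$. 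The only genuinely delicate point is making the $o(1)$ in \eqref{ineq_cv:Hartree_to_NLS_2body}–\eqref{ineq_cv:Hartree_to_NLS_3body} truly independent of $u$; this follows because, after dividing by $\|u\|_{H^1}^4$ (resp.\ $^6$), the relevant quantity depends on $u$ only through $\rho/\|u\|_{H^1}^2$, which ranges over a set of densities with a uniform modulus of continuity estimate coming from Gagliardo–Nirenberg in 1D, so a single rate $\varepsilon_N$ works. I expect this uniformity bookkeeping, rather than any inequality, to be the main obstacle; the quantitative bounds themselves are a routine change of variables plus Hölder.
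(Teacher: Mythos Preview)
Your approach is essentially the paper's. The nonnegativity arguments are equivalent---your symmetrization identity $a^3+b^3+c^3-3abc=\tfrac12(a+b+c)\sum(a-b)^2\geq 0$ is the same inequality the paper invokes as ``Cauchy--Schwarz on reals'' (AM--GM). For the quantitative rates~\eqref{cv_rate:Hartree_to_NLS_2body}--\eqref{cv_rate:Hartree_to_NLS_3body} you do exactly what the paper does: change of variables, fundamental theorem of calculus (your translation estimate $\|\rho(\cdot-h)-\rho\|_p\leq|h|\|\rho'\|_p$ is the FTC), and H\"older with the exponents $(3/2,3)$ and $(5/3,5,5)$ that land on the stated norms. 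One bookkeeping slip: your three\nobreakdash-body starting expression $\iiint W_N(\rho(x)-\rho(y))\rho(x)\rho(z)$ is not the full difference; you need the telescoping $\rho(x)^2-\rho(y)\rho(z)=\rho(x)(\rho(x)-\rho(y))+\rho(y)(\rho(x)-\rho(z))$, which produces the two terms whose sum gives your factor~$4$.

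For the qualitative bounds~\eqref{ineq_cv:Hartree_to_NLS_2body}--\eqref{ineq_cv:Hartree_to_NLS_3body} your density step is a detour and does not by itself give uniformity in~$u$; the paper instead splits the outer integral into $\{|t|>L\}$ (bounded by $C\int_{|t|>L}U$ or $W$, small for $L$ large) and $\{|t|\leq L\}$ (bounded via the translation/derivative estimate by $CN^{-\alpha}L$ or $CN^{-\beta}L$), then sends $L\to\infty$ with $LN^{-\alpha}\to 0$. Your observation that Gagliardo--Nirenberg in 1D gives a uniform $W^{1,1}$ (or $C^{1/2}$) bound on~$\rho/\|u\|_{H^1}^2$ is exactly what makes the $\{|t|\leq L\}$ bound uniform in~$u$, so you already have the right ingredient---just pair it with the splitting rather than with approximation by $C_c^\infty$.
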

We emphasize that~\eqref{ineq_cv:Hartree_to_NLS_2body} and~\eqref{ineq_cv:Hartree_to_NLS_3body} are not a consequence of~\eqref{cv_rate:Hartree_to_NLS_2body} and~\eqref{cv_rate:Hartree_to_NLS_3body}, as the former hold even if $\norm{x U(x)}_{1}$ and $\norm{x W(x,y)}_{1}$ are infinite. The point of~\eqref{ineq_cv:Hartree_to_NLS_2body} and~\eqref{ineq_cv:Hartree_to_NLS_3body} is precisely to hold in full generality, but without any explicit rate of convergence, while the point of~\eqref{cv_rate:Hartree_to_NLS_2body} and~\eqref{cv_rate:Hartree_to_NLS_3body} is precisely the rate of convergence, which requires these extra assumptions. Notice that, by the symmetry of $W$ in~\eqref{condition:potential_three_body_symmetry}, $\norm{x W(x,y)}_{1} < +\infty$ is equivalent to $\norm{(x,y)W(x,y)}_{1} < +\infty$.
\begin{proof}
	The nonnegativities in~\eqref{ineq_cv:Hartree_to_NLS_2body} and in~\eqref{ineq_cv:Hartree_to_NLS_3body} are a consequence of the Cauchy--Schwarz inequality on reals, combined respectively with $U_N\geq0$ and with $W_N\geq0$,
	\begin{align*}
		\iint_{\R^2} U_N(x-y)|u(x)|^2|u(y)|^2\dix\diy \leq \iint_{\R^2} U_N(x-y)\frac{|u(x)|^4+|u(y)|^4}{2}\dix\diy = \int_{\R}|u(x)|^4\dix
		\intertext{and}
		\begin{multlined}[t][\textwidth]
			\iiint_{\R^3} W_N(x-y,x-z)|u(x)|^2|u(y)|^2|u(z)|^2\dix\diy\diz \\
			\leq \iiint_{\R^3} W_N(x-y,x-z)\frac{|u(x)|^6+|u(y)|^6+|u(z)|^6}{3}\dix\diy\diz = \int_{\R}|u(x)|^6\dix,
		\end{multlined}
	\end{align*}
	where for the equalities we used $\norm{U_N}_{1}=\norm{U}_{1}=1$ for the former and the symmetries of~$W_N$ together with $\norm{W_N}_{1}=\norm{W}_{1}=1$ for the latter.
	
	The limit in~\eqref{ineq_cv:Hartree_to_NLS_2body} is the 1D analog of the limit in~\cite[Lemma~7]{LewNamRou-17} for the 2-body interaction in~2D.
	
	For the limit in~\eqref{ineq_cv:Hartree_to_NLS_3body}, we adapt the arguments therein to our 3-body interaction in 1D case. By the assumption $\norm{W}_{1}=1$, the changes of variables $y\to x - N^{-\beta}y$ and $z\to x - N^{-\beta}z$ and $W(y,z)=W(z,y)$, we have
	\begin{multline}\label{limit_Hartree_energy_bounds_to_NLS_first_step}
		\int_{\R}|u(x)|^6\dix - \iiint_{\R^3} W_N\left( x-y,x-z \right)|u(x)|^2|u(y)|^2|u(z)|^2\dix\diy\diz \\
		= \iiint_{\R^3} W(y,z) \left( |u(x)|^2 + |u(x-N^{-\beta}y)|^2 \right) |u(x)|^2 \left( |u(x)|^2 - |u(x-N^{-\beta}z)|^2 \right)\diz\diy\dix\,.
	\end{multline}
	As in~\cite{LewNamRou-17}, we split the integral over $z$ into two parts. On $|z|>L$, we use
	\[
		2|u(x)|^2|u(x-N^{-\beta}z)|^2\leq|u(x)|^4 + |u(x-N^{-\beta}z)|^4\,,
	\]
	then H{\"o}lder inequality (on the integral on $x$). On $|z|\leq L$, we use
	\begin{equation}\label{limit_Hartree_energy_bounds_to_NLS_integral_rewriting}
		|u(x)|^2 - |u(x-N^{-\beta}z)|^2 = N^{-\beta}z\int_0^1 \left( |u|^2 \right)'(x-tN^{-\beta}z) \dit\,,
	\end{equation}
	then the diamagnetic inequality $||u|'| \leq |u'|$, hence $|(|u|^2)'|=2|u|||u|'|\leq 2|u||u'|$. We obtain
	\begin{align*}
		\int_{\R}|u(x)|^6&\dix - \iiint_{\R^3} W_N\left( x-y,x-z \right)|u(x)|^2|u(y)|^2|u(z)|^2\dix\diy\diz \\
		\leq{}&\begin{multlined}[t]
			\frac12\int_{|\cdot|>L}\int_\R W(y,z)\int_{\R}\left( |u(x)|^2 + |u(x-N^{-\beta}y)|^2 \right) \left( 3|u(x)|^4 + |u(x-N^{-\beta}z)|^4 \right)\dix\diy\diz \\
			+ \frac{2L}{N^\beta}\int_0^1 \int_{|\cdot|\leq L} \int_{\R}W(y,z)\int_{\R} \left(|u(x)|^2 + |u(x-N^{-\beta}y)|^2 \right) \\
			\times |u(x)|^2 |u(x-tN^{-\beta})| |u'(x-tN^{-\beta}z)| \dix\diz\diy\dit 
		\end{multlined} \\
		\leq{}& 4\int_{|\cdot|>L}\int_\R W(y,z)\norm{u}_{6}^6\diy\diz + 4N^{-\beta}L\int_0^1 \int_{|\cdot|\leq L} \int_{\R}W(y,z) \norm{u}_{\infty}^4 \norm{u}_{2} \normSM{u'}_{2}\diz\diy\dit \\
		\leq{}& 4C\norm{u}_{H^1}^6\left( \int_{|\cdot|>L}\int_\R W(y,z)\diy\diz+4N^{-\beta}L \norm{W}_{1} \right)\underset{\substack{L\to\infty\\LN^{-\beta}\to0}}{\longrightarrow}0\,,
	\end{align*}
	where we used Sobolev inequalities for the last line.
	
	The convergent rate~\eqref{cv_rate:Hartree_to_NLS_2body} is~\cite[Lemma~4.1]{LewNamRou-16}. See the details of its proof for our exact formulation (recalling that $\norm{\nabla |u|}_{2}\leq\norm{\nabla u}_{2}$). For the convergent rate~\eqref{cv_rate:Hartree_to_NLS_3body}, we adapt the arguments therein to our 3-body interaction in 1D case. We start by applying~\eqref{limit_Hartree_energy_bounds_to_NLS_integral_rewriting} to~\eqref{limit_Hartree_energy_bounds_to_NLS_first_step} without splitting the integral on $z$, which leads to
	\[
	\begin{multlined}
		\iiint_{\R^3} W_N\left( x-y,x-z \right)|u(x)|^2|u(y)|^2|u(z)|^2\dix\diy\diz -\int_{\R}|u(x)|^6\dix \\
		= -N^{-\beta}\int_0^1 \iint_{\R^2} z W(y,z)\int_{\R}|u(x)|^2 \left( |u(x-N^{-\beta}y)|^2 + |u(x)|^2 \right)\left( |u|^2 \right)'(x-tN^{-\beta}z) \dix\diz\diy\dit\,.
	\end{multlined}
	\]
	By the diamagnetic inequality $|(|u|^2)'| \leq 2|u||u'|$ and H\"older's inequality we have
	\[
		\int_{\R}|u(x)|^2|u(x-N^{-\beta}\tilde{z})|^2(|u|^2)'(x-tN^{-\beta}\tilde{y}) \dix \leq 2\norm{u}_{10}^5 \normSM{u'}_{2}\,.
	\]
	Similarly, we have
	\[
		\int_{\R}|u(x)|^{4}(|u|^2)'(x-sN^{-\beta}\tilde{y}) \dix \leq 2\norm{u}_{10}^5 \normSM{u'}_{2}\,.
	\]
	Thus, by triangle ineqality we obtain~\eqref{cv_rate:Hartree_to_NLS_3body}, concluding the proof of Lemma~\ref{Limit_and_rate_Hartree_interaction_energy}.
\end{proof}

With this lemma at hand, we can now prove Theorem~\ref{thm:collapse_Hartree}.
\begin{proof}[Proof of Theorem~\ref{thm:collapse_Hartree}]
	The proof for approximate ground states being the same, up to very few changes, as the one for ground states, we only write the latter.
	
	We first consider the limit $N\to\infty$ and prove~\eqref{cv:Hartree_to_NLS_ground_state} and~\eqref{cv:Hartree_to_NLS_energy}. When $a \leq 0$ and $0 < b < \mathfrak{b}$, it is immediate from the nonnegativity in~\eqref{ineq_cv:Hartree_to_NLS_2body} and~\eqref{ineq_cv:Hartree_to_NLS_3body} that $E_{a, b}^{\mathrm{H}, N} \geq E_{a,b}^{\NLS}$.
	On the other hand, choosing a ground state for $E_{a,b}^{\NLS}$ as a trial state for $E_{a, b}^{\mathrm{H}, N}$, the variational principle together with~\eqref{ineq_cv:Hartree_to_NLS_2body} and~\eqref{ineq_cv:Hartree_to_NLS_3body} imply that
	\begin{equation}\label{upperbound:Hartree_NLS}
		o(1) + E_{a,b}^{\NLS} \geq E_{a, b}^{\mathrm{H}, N}\,.
	\end{equation}
	Next, we consider the case where $a > 0$ and $0 < b \leq \mathfrak{b}$. We use again Lemma~\ref{Limit_and_rate_Hartree_interaction_energy} to obtain~\eqref{upperbound:Hartree_NLS}. For the energy lower bound, we process as follows. Let $u_N$ be a ground state for $E_{a, b}^{\mathrm{H}, N}$. On one hand, when $0< b < \mathfrak{b}$, it follows from~\eqref{upperbound:Hartree_NLS}, the nonnegativity in~\eqref{ineq_cv:Hartree_to_NLS_3body}, and~\eqref{ineq:GNS} that $\normSM{u_N'}_{2}$ is bounded uniformly in $N$. Then, the variational principle together with Lemma~\ref{Limit_and_rate_Hartree_interaction_energy} imply that
	\[
		E_{a, b}^{\mathrm{H}, N} \geq E_{a,b}^{\NLS} + o(1).
	\]
	On another hand, when $b = \mathfrak{b}$, we prove that $\normSM{u_N'}_{2}$ is bounded uniformly in $N$ under the assumption $\alpha>\beta>0$. We assume on the contrary that $\norm{u_N'}_{2} \to \infty$ as $N\to\infty$. Define $v_N = \eps_N^{1/2} u_N(\eps_N \cdot)$ with $\eps_N := \norm{u_N'}_{2}^{-1}$. Then, we have $\norm{v_N}_{2} = 1 =\norm{v_N'}_{2}$ and
	\[
		\mathcal{E}_{a, \mathfrak{b}}^{\mathrm{H}, N}(u_N) \geq \eps_N^{-2}\mathcal{F}^{\NLS}_{\mathfrak{b}}(v_N) + \frac{a}{2}\iint_{\R^2} U_N(\eps_N(x-y))|v_N(x)|^2|v_N(y)|^2\dix\diy.
	\]
	Here, we have used~\eqref{ineq_cv:Hartree_to_NLS_3body} on $v_N$ and the nonnegativity of the external potential. Since $\eps_N \to 0$ as $N\to\infty$, the above and \eqref{upperbound:Hartree_NLS} imply that
	\[
		\mathcal{F}^{\NLS}_{\mathfrak{b}}(v_N) \to 0 \quad \text{and} \quad \iint_{\R^2}\eps_NU_N(\eps_N(x-y))|v_N(x)|^2|v_N(y)|^2\dix\diy \to 0 
	\]
	as $N\to\infty$, where we recall that $\mathcal{F}^{\NLS}_{\mathfrak{b}}$ is defined in~\eqref{NLS_Def_mathcal_F}. As usual, the former implies that $v_N\to v$ strongly in $L^r(\R)$ for any $2\leq r < \infty$, where $v$ is an optimizer for~\eqref{ineq:GNS}. On the other hand, we observe that $\eps_N \geq CN^{-\beta}$. This follows from~\eqref{upperbound:Hartree_NLS}, which implies
	\[
		\eps_N^{-2} = \norm{u_N'}_{2}^2 \leq \frac{\mathfrak{b}}{6}\iiint_{\R^3} W_N(x-y,x-z) |u_N(x)|^2 |u_N(y)|^2 |u_N(z)|^2\dix\diy\diz + E_{a,b}^{\NLS} + o(1)\,,
	\]
	and $\norm{W_N}_{\infty}=N^{2\beta}\norm{W}_{\infty}$. When $\alpha > \beta$, we have that $N^{\alpha}\eps_N \geq CN^{\alpha-\beta} \to \infty$ as $N\to\infty$. Then, the scaled two\nobreakdash-body interaction $\eps_NU_N(\eps_N\cdot)$ must converge to the delta interaction $\delta_{x=y}$.
	More precisely, by~\eqref{ineq_cv:Hartree_to_NLS_2body} ---which applies the same to $\eps_N U_N(\eps_N\cdot)$ as long as $\eps_N N^\alpha\to+\infty$---, we have
	\[
		0 \leq \int_{\R}|v_N(x)|^4\dix - \iint_{\R^2} \eps_N U_N(\eps_N(x-y))|v_N(x)|^2|v_N(y)|^2\dix\diy \leq \norm{v_N}_{H^1}^4 o(1) = o(1)\,,
	\]
	and we now use crucially that this $o(1)$ does not depend on $v_N$, in order to obtain the contradiction
	\begin{align*}
		\int_{\R}|v(x)|^4\dix &= \lim_{N\to\infty} \int_{\R}|v(x)|^4\dix - \iint_{\R^2} \eps_N U_N(\eps_N(x-y))|v_N(x)|^2|v_N(y)|^2\dix\diy \\
			&= \lim_{N\to\infty} \int_{\R}|v_N(x)|^4\dix - \iint_{\R^2} \eps_N U_N(\eps_N(x-y))|v_N(x)|^2|v_N(y)|^2\dix\diy = 0\,.
	\end{align*}
	Therefore, we must have desired convergence of energy~\eqref{cv:Hartree_to_NLS_energy}.
	
	We now turn to the proof of~\eqref{cv:Hartree_to_NLS_ground_state}. On one hand, we have proved in all cases that the sequence of Hartree ground states $\{u_N\}_{N}$ is uniformly bounded in $H^1(\R)$. On another hand, \eqref{upperbound:Hartree_NLS}, \eqref{ineq_cv:Hartree_to_NLS_2body}, \eqref{ineq_cv:Hartree_to_NLS_3body}, and~\eqref{ineq:GNS} imply the uniform boundedness of $\norm{|\cdot|^{s/2}u_N}_{2}$. Therefore, up to a subsequence, $u_N\to \phi$ weakly in $H^1(\R)$ and strongly in $L^r(\R)$ for $2\leq r<+\infty$. By~\eqref{ineq_cv:Hartree_to_NLS_2body}, \eqref{ineq_cv:Hartree_to_NLS_3body}, and the weak lower semicontinuity, we have
	\[
		\lim_{N\to\infty}\mathcal{E}_{a,b}^{{\rm H}, N}(u_N) \geq \mathcal{E}_{a,b}^{\NLS}(\phi) \geq E_{a,b}^{\NLS}\,.
	\]
	Together with~\eqref{cv:Hartree_to_NLS_energy}, the above implies that $\phi$ is a ground state of~$E_{a,b}^{\NLS}$. We also obtain
	\[
		\lim_{N\to\infty}\norm{u_N'}_{2}^2 = \norm{\phi'}_{2}^2\,.
	\]
	Hence, $u_N\to \phi$ strongly in $H^1(\R)$. This completes the proof of~\eqref{cv:Hartree_to_NLS_ground_state}.
	
	As for the proof of collapse for Hartree ground states in~\eqref{blowup_Hartree_ground_state}, we start with the upper bound matching the ground state energy expansion. We rewrite
	\begin{multline*}
		\mathcal{E}_{a_N, b_N}^{\mathrm{H}, N}(u) = \mathcal{E}_{a_N, b_N}^{\NLS}(u) + \frac{a_N}{2}\iint_{\R^2} U_N(x-y)|u(x)|^2|u(y)|^2\dix\diy - \frac{a_N}{2}\iint_{\R^2}|u(x)|^4\dix \\
		- \frac{b_N}{6}\iiint_{\R^3} W_N(x-y,x-z)|u(x)|^2|u(y)|^2|u(z)|^2\dix\diy\diz + \frac{b_N}{6}\iiint_{\R^3} |u(x)|^6 \dix
	\end{multline*}
	for any $u\in H^1(\R)$. Choosing~$Q_N := \ell_N^{-1/2}Q_{0}(\ell_N^{-1}\cdot)$ and using~\eqref{cv_rate:Hartree_to_NLS_2body} and~\eqref{cv_rate:Hartree_to_NLS_3body}, we obtain
	\[
		E_{a_N, b_N}^{\mathrm{H}, N} \leq \mathcal{E}_{a_N, b_N}^{\mathrm{H}, N}(Q_N) \leq \mathcal{E}_{a_N, b_N}^{\NLS}(Q_N) - C \min\{0,a_N\} N^{-\alpha} \ell_N^{-2} + C b_N N^{-\beta} \ell_N^{-3}\,.
	\]
	By Theorem~\ref{thm:collapse_NLS} and since $\ell_N=N^{-\eta}$, it yields
	\[
		E_{a_N, b_N}^{\mathrm{H}, N} \leq E_{a_N, b_N}^{\NLS} \left( 1 + o(1) - C \min\{0,a_N\} \ell_N^{-(s+1)} N^{\eta-\alpha} + C N^{\eta(s+3)-\beta} \right)
	\]
	and the error term is negligible since~$\eta<\min\left\{\beta/(s+3), \alpha\right\}$ and~$a_N \ell_N^{-(s+1)}=O(1)$ for all $\zeta\geq0$.

	Let $u_N$ be a ground state of~$E_{a_N, b_N}^{\mathrm{H}, N}$ and $\tilde{u}_N:= \ell_N^{1/2} u_N(\ell_N\cdot)$. Then, $\norm{\tilde{u}_N}_{2} = \norm{u_N}_{2}=1$~and
	\begin{equation}\label{Hartree_general_lowerbound}
		E_{a_N, b_N}^{\mathrm{H}, N} \geq \ell_N^{-2} \mathcal{F}^{\NLS}_{b_N}(\tilde{u}_N) + \ell_N^s \norm{ |\cdot|^{\frac{s}{2}}\tilde{u}_n }_{2}^2 + \frac{a_N}{2}\iint_{\R^2} U_N(\ell_N(x-y))|\tilde{u}_N(x)|^2|\tilde{u}_N(y)|^2\dix\diy \,,
	\end{equation}
	where we used the nonnegativities in~\eqref{ineq_cv:Hartree_to_NLS_3body} and of~$b_N$. We now prove the claimed convergence~\eqref{blowup_Hartree_ground_state}, which immediately implies the ground state energy expansion. Again, we distinguish the two cases $\zeta\neq0$ and $\zeta\neq6$.
	
	\medskip
	\noindent\textbf{Case $\zeta\neq0$.}
	Recall that~$B_N = \mathfrak{b} - b_N > 0$. Combining the upper bound on~$E_{a_N, b_N}^{\mathrm{H}, N}$ with~\eqref{Hartree_general_lowerbound}, the definition of $\ell_N$ depending on $B_N$, and~\eqref{ineq:GNS}, we obtain
	\begin{multline}\label{Hartree_control_nrj_L2_for_gamma_not_0}
		\frac{s+1}{s} - \frac{\zeta}{12} + o(1) \geq \frac{\zeta}{\mathfrak{b}} \norm{\tilde{u}_N'}_{2}^2 + \mathcal{Q}_s^{-1} \norm{|\cdot|^{\frac{s}{2}}\tilde{u}_N}_{2}^2 \\
		+ \left( \frac{6-\zeta}{6} \cdot \frac{\pi}{2} + o(1) \right) \iint_{\R^2} \ell_N U_N(\ell_N(x-y))|\tilde{u}_N(x)|^2|\tilde{u}_N(y)|^2\dix\diy \,.
	\end{multline}
	Thus, by~\eqref{ineq_cv:Hartree_to_NLS_2body} and~$\norm{\tilde{u}_N}_{4}^4\leq 2 \norm{\tilde{u}_N'}_{2}$, we have
	\[
		\frac{s+1}{s} - \frac{\zeta}{12} + o(1) \geq \frac{\zeta}{\mathfrak{b}} \norm{\tilde{u}_N'}_{2}^2 - \left| \frac{6-\zeta}{6} \cdot \frac{\pi}{2} + o(1) \right| \norm{\tilde{u}_N'}_{2} + \mathcal{Q}_s^{-1} \norm{|\cdot|^{\frac{s}{2}}\tilde{u}_N}_{2}^2 \,,
	\]
	implying that~$\{\tilde{u}_N\}_N$ and~$\left\{|\cdot|^{s/2}\tilde{u}_N \right\}_N$ are uniformly bounded in~$H^1(\R)$ and in~$L^2(\R)$, respectively.
	Hence, up to a subsequence, $\tilde{u}_N$ converges to $u$ weakly in $H^1(\R)$ and strongly in $L^r(\R)$ for $2 \leq r < \infty$. Therefore, by~\eqref{ineq_cv:Hartree_to_NLS_2body} ---which applies the same to $\ell_N U_N(\ell_N\cdot)$ as long as $\ell_N N^\alpha\to+\infty$, which is ensured by $\eta < \min\left\{\beta/(s+3), \alpha\right\}$---, we have
	\[
		0 \leq \int_{\R}|\tilde{u}_N(x)|^4\dix - \iint_{\R^2} \ell_N U_N(\ell_N(x-y))|\tilde{u}_N(x)|^2|\tilde{u}_N(y)|^2\dix\diy \leq \norm{\tilde{u}_N}_{H^1}^6 o(1) = o(1)\,,
	\]
	and we now use crucially that this $o(1)$ does not depend on $\tilde{u}_N$ in order to obtain
	\[
		\lim_{N\to\infty} \iint_{\R^2} \ell_N U_N(\ell_N(x-y))|\tilde{u}_N(x)|^2|\tilde{u}_N(y)|^2\dix\diy = \lim_{N\to\infty} \norm{\tilde{u}_n}_{4}^4\dix = \norm{u}_{4}^4\,.
	\]
	Inserting the above into~\eqref{Hartree_control_nrj_L2_for_gamma_not_0} and adapting the arguments in the proof of Theorem~\ref{thm:collapse_NLS}, it concludes the proof in the case $\zeta\neq0$.
	
	\medskip
	\noindent\textbf{Case $\zeta\neq6$.}
	Recall that~$A_N/(6-\zeta)>0$ with $A_N = a_N/\pi$. Combining the upper bound on~$E_{a_N, b_N}^{\mathrm{H}, N}$ with~\eqref{Hartree_general_lowerbound}, with the definition of $\ell_N$ depending on $A_N$, and with~\eqref{ineq:GNS}, we obtain
	\begin{multline*}
		\frac{s+1}{s} - \frac{\zeta}{12} + o(1) \geq \frac{ \zeta + o(1) }{\mathfrak{b}} \norm{\tilde{u}_N'}_{2}^2 + \mathcal{Q}_s^{-1} \norm{|\cdot|^{\frac{s}{2}}\tilde{u}_N}_{2}^2 \\
		+ \frac{6-\zeta}{6} \times \frac{\pi}{2}\iint_{\R^2} \ell_N U_N(\ell_N(x-y))|\tilde{u}_N(x)|^2|\tilde{u}_N(y)|^2\dix\diy
	\end{multline*}
	with $\zeta + o(1) \geq0$ even if $\zeta=0$, as in the proof of Theorem~\ref{thm:collapse_NLS}.
	
	\begin{itemize}[leftmargin=1em]
		\item If $\zeta>0$, we conclude the proof by similar arguments as in the previous case (but with the formula of $\ell_N$ depending on $A_N$ and not $B_N$).
		\item If $\zeta=0$, we assume further that $\beta<\alpha$. As it is now usual, we cannot recover the $H^1(\R)$-boundedness as before, but we have
		\[
			\frac{s+1}{s} + o(1) \geq \frac{ o(1) }{\mathfrak{b}} \norm{\tilde{u}_N'}_{2}^2 + \mathcal{Q}_s^{-1} \norm{|\cdot|^{\frac{s}{2}}\tilde{u}_N}_{2}^2 + \frac{\pi}{2} \iint_{\R^2} \ell_N U_N(\ell_N(x-y))|\tilde{u}_N(x)|^2|\tilde{u}_N(y)|^2\dix\diy\,,
		\]
		with the factor of $\norm{\tilde{u}_N'}_{2}^2$ being nonnegative. Hence,~$\{|\cdot|^{s/2}\tilde{u}_N\}_N$ is uniformly bounded in~$L^2(\R)$. We now prove the uniform boundedness of $\{\tilde{u}_N \}_N$ in $H^1(\R)$. Assume, on the contrary, that it is not the case. Then, there exists a subsequence (still denoted the same) such that $\eps_N := \norm{\tilde{u}_N'}_{2}^{-1} \to0$. 
		Let $v_N:=\eps_N^{1/2} \tilde{u}_N(\eps_N\cdot)$. It satisfies $\norm{v_N}_{2}=1=\norm{v_N'}_{2}$ and it follows from~\eqref{Hartree_general_lowerbound} and the upper bound on~$E_{a_N, b_N}^{\mathrm{H}, N}$ that
		\begin{multline*}
			\left( \frac{s+1}{s} + o(1) \right) \mathcal{Q}_s \ell_N^s \geq \eps_N^{-2}\ell_N^{-2}\mathcal{F}^{\NLS}_{b_N}(v_N) + \eps_N^s \ell_N^s \norm{ |\cdot|^\frac{s}{2}v_N }_{2}^2 \\
			+ \frac{a_N}{2}\iint_{\R^2} U_N(\eps_N \ell_N(x-y)) |v_N(x)|^2 |v_N(y)|^2\dix\diy \,.
		\end{multline*}
		All terms being nonnegative (using~\eqref{ineq:GNS} again) and since $\ell_N, \eps_N \to 0$ as $N\to\infty$, it implies
		\[
			0 \leq \iint_{\R^2} \eps_N \ell_N U_N( \eps_N \ell_N(x-y)) |v_N(x)|^2 |v_N(y)|^2\dix\diy \leq \frac{2}{\pi} \left( \frac{s+1}{s} + o(1) \right) \eps_N \underset{N\to\infty}{\longrightarrow} 0
		\]
		and
		\[
			0 \leq \mathcal{F}^{\NLS}_{b_N}(v_N) \leq \left(1 + o(1)\right) \mathcal{Q}_s \ell_N^{s+2}\eps_N^2 \underset{N\to\infty}{\longrightarrow} 0\,.
		\]
		The latter implies that $v_N\to v$ strongly in~$L^r(\R)$ for any $2\leq r < \infty$, where $v$ is an optimizer to~\eqref{ineq:GNS}, by the same arguments as in the proof of Theorem~\ref{thm:existence_NLS}. On the other hand, we observe that $\eps_N \ell_N = \ell_N \norm{\tilde{u}_N'}_{2}^{-1} \geq CN^{-\beta}$. Indeed, since $a_N>0$, the upper bound on~$E_{a_N, b_N}^{\mathrm{H}, N}$ yields
		\[
			\ell_N^{-2} \norm{\tilde{u}_N'}_{2}^2 \leq \frac{\mathfrak{b}}{6} \iiint_{\R^3} W_N( \ell_N(x-y), \ell_N(x-z) )|\tilde{u}_N(x)|^2 |\tilde{u}_N(y)|^2 |\tilde{u}_N(z)|^2\dix\diy\diz + O(\ell_N^s) = O(N^{2\beta})\,,
		\]
		where we also used~$\norm{W_N}_{\infty} = N^{2\beta}\norm{W}_{\infty}$ and~$\ell_N \sim N^{-\eta}$. Hence, $\eps_N \ell_N N^\alpha \geq C N^{\alpha-\beta} \to +\infty$ and~\eqref{ineq_cv:Hartree_to_NLS_2body} ---which applies the same to $\eps_N \ell_N U_N(\eps_N \ell_N\cdot)$ as long as $\eps_N \ell_N N^\alpha\to+\infty$--- implies
		\[
			0 \leq \norm{v_N}_{4}^4 - \iint_{\R^2} \eps_N U_N(\eps_N(x-y))|v_N(x)|^2|v_N(y)|^2\dix\diy \leq \norm{v_N}_{H^1}^6 o(1) = o(1)\,,
		\]
		and we now use crucially that this $o(1)$ does not depend on $v_N$, in order to obtain the contradiction
		\[
			\norm{v}_{4}^4 = \lim_{N\to\infty} \norm{v_N}_{4}^4 - \lim_{N\to\infty} \iint_{\R^2} \eps_N \ell_N U_N( \eps_N \ell_N(x-y)) |v_N(x)|^2 |v_N(y)|^2\dix\diy =0\,.
		\]
		This concludes the proof of the uniformly $H^1(\R)$-bounded of $\{\tilde{u}_N\}_N$. Together with the uniformly boundedness of $\norm{|\cdot|^{s/2}\tilde{u}_N}_{2}$, it gives (up to a subsequence) that $\tilde{u}_N \to u$ weakly in $H^1(\R)$ and strongly in $L^r(\R)$ for $2 \leq r < \infty$. From here, the proof is the same as in the case $\zeta\neq0$ above and we therefore omit it.
	\end{itemize}
\end{proof}

\section{The mean-field limit and collapse of the many-body ground states}\label{sec:many_body}
	The goal of this section is to prove Theorems~\ref{thm:many_body_critical} and~\ref{thm:many_body_noncritical}. The strategy and main difficulty is to compare the quantum energy $E_{a, b}^{\mathrm{Q}, N}$ in~\eqref{energy:quantum} and the Hartree energy $E_{a, b}^{\mathrm{H}, N}$ in~\eqref{energy:hartree}. The upper bound can be obtained using trial states $u^{\otimes N}$, and the main difficulty lies in the lower bound. Once those bounds are obtained, the result then comes from Theorem~\ref{thm:collapse_Hartree}.

\subsection{Proof of Theorem~\ref{thm:many_body_critical}}
We start by working under the setting $0<b\leq\mathfrak{b}$ and $a>0$. Note that the two\nobreakdash-$\R$\nobreakdash-body interacting term is controlled by the kinetic term, using the 1D Sobolev's inequality and removing center of mass as in~\cite[Lemma~3.7]{LewNamRou-16}. We have
\begin{equation}\label{ineq:2body}
	aN^{\alpha}U(N^{\alpha}(x-y)) \leq \eps(h_x+h_y) + C_\eps
\end{equation}
for every $\alpha > 0$ and $\eps > 0$. Furthermore, the three\nobreakdash-$\R$\nobreakdash-body interacting term $W_N$ defined in~\eqref{Def_W_N} satisfies $\norm{W_N}_{1}=\norm{W}_{1}=1$ and $\norm{W_N}_{\infty}=N^{2\beta}\norm{W}_{\infty}$. Hence,
\[
	E_{a, b}^{\mathrm{Q}, N} = \frac{\pscalSM{ \Psi_N, H_{a, b}^{N} \Psi_N }}{N} \geq \tr\left[h\gamma_{\Psi_N}^{(1)}\right] - \frac{\mathfrak{b}}{6}\norm{W_N}_{\infty} \geq \tr\left[h\gamma_{\Psi_N}^{(1)}\right] - CN^{2\beta}\,,
\]
where we also used $b\leq\mathfrak{b}$ and $a>0$. This gives an \emph{a priori} estimate on the kinetic energy, i.e.,
\begin{equation}\label{Apriori_estimate_kinetic}
	\tr\left[h\gamma_{\Psi_N}^{(1)}\right] \leq CN^{2\beta}\,.
\end{equation}
Taking~\eqref{Apriori_estimate_kinetic} as a starting point and using the localization method in Fock space~\cite{Lewin-11}, one can obtain~\eqref{cv:energy_quantum_to_NLS_critical}. A major ingredient in our proof is the quantitative version of the quantum de~Finetti theorem, originally proved in~\cite{ChrKonMitRen-07} (see also~\cite{FanVan-06,Chiribella-11,Harrow-13-arxiv,LewNamRou-15,LewNamRou-16,LewNamRou-17}). The following formulation is taken from~\cite[Lemma~18]{NamRicTri-23} for its first part and from~\cite[Proof of Lemma~19]{NamRicTri-23} for its second part.
\begin{theorem}[\textbf{Quantitative quantum de Finetti}]\label{thm:quantitative_deF}\leavevmode\\
	Let $\Psi_N\in\mathfrak{H}^N=\bigotimes_{\mathrm{sym}}^{N}L^2(\R)$ and $P$ be a finite-rank orthogonal projector. Then, there exists a positive Borel measure $\mu_{\Psi_N}$ on the unit sphere $SP\mathfrak{H}$ such that
	\begin{equation}\label{ineq:quantitative_deF}
		\tr\Big|P^{\otimes 3}\gamma_{\Psi_N}^{(3)} P^{\otimes 3} - \int_{SP\mathfrak{H}}|u^{\otimes 3}\rangle \langle u^{\otimes 3}| \di\mu_{\Psi_N}(u)\Big| \leq \frac{12\dim(P\mathfrak{H})}{N}\,.
	\end{equation}
	Furthermore, with $Q=\1-P$, we have
	\begin{equation}\label{ineq:quantitative_deF_measure}
		1 \geq \int_{SP\mathfrak{H}} \di \mu_{\Psi_N}(u) = \tr\left[P^{\otimes 3}\gamma_{\Psi_N}^{(3)} P^{\otimes 3}\right] \geq 1- 3\tr\left[Q\gamma_{\Psi_N}^{(1)}\right].
	\end{equation}
\end{theorem}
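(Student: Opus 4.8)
The plan is to establish this as an instance of the constructive quantitative quantum de Finetti theorem, following Chiribella and Christandl–König–Mitchison–Renner in the form used in~\cite{NamRicTri-23}. First I would pass to finite dimensions: put $d:=\dim(P\mathfrak H)$ and $\mathcal H:=P\mathfrak H$, so that for $u\in SP\mathfrak H$ the vector $u^{\otimes N}$ lies in the symmetric space $\bigvee^N\mathcal H$, whose dimension is $\binom{N+d-1}{N}$. The backbone is the coherent-state resolution of the identity
\[
	\binom{N+d-1}{N}\int_{SP\mathfrak H}|u^{\otimes N}\rangle\langle u^{\otimes N}|\,\di u=\1_{\bigvee^N\mathcal H},
\]
where $\di u$ is the uniform probability measure on the sphere $SP\mathfrak H$; this comes from Schur's lemma, since $\bigvee^N\mathcal H$ is an irreducible representation of $U(d)$ on which $u\mapsto|u^{\otimes N}\rangle\langle u^{\otimes N}|$ is covariant. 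I would then take $\mu_{\Psi_N}$ to be the explicit de Finetti measure on $SP\mathfrak H$ attached to $\Psi_N$ and $P$ by this construction, which is a positive Borel measure by construction.

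For~\eqref{ineq:quantitative_deF}, the crux is the \emph{exact moment identity}. Expanding the integrand and carrying out the $\di u$-integration with the explicit moments of the uniform measure on the complex sphere (Wick/Weingarten formulas, which here reduce to symmetric-group sums) rewrites $\int_{SP\mathfrak H}|u^{\otimes 3}\rangle\langle u^{\otimes 3}|\,\di\mu_{\Psi_N}(u)$ as $P^{\otimes 3}\gamma_{\Psi_N}^{(3)}P^{\otimes 3}$ plus an explicit linear combination of the symmetrizations of $P^{\otimes 2}\gamma_{\Psi_N}^{(2)}P^{\otimes 2}\otimes P$, of $P\gamma_{\Psi_N}^{(1)}P\otimes P^{\otimes 2}$, and of $P^{\otimes 3}$, with combinatorial coefficients of size $O(1/N)$. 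Since $\tr[P^{\otimes j}\gamma_{\Psi_N}^{(j)}P^{\otimes j}]\le 1$ and $\tr P=d$, every correction term has trace norm $O(d/N)$; summing them gives the bound~\eqref{ineq:quantitative_deF}, the constant being the value $2k(k-1)=12$ at $k=3$. I expect the organization and simplification of this moment identity, together with pinning down the sharp constant, to be the one genuinely technical point; everything around it is soft. (Alternatively, this is imported verbatim from~\cite[Lemma~18]{NamRicTri-23}.)

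For~\eqref{ineq:quantitative_deF_measure}, I would first record the total-mass identity $\int_{SP\mathfrak H}\di\mu_{\Psi_N}(u)=\tr[P^{\otimes 3}\gamma_{\Psi_N}^{(3)}P^{\otimes 3}]$, obtained by taking the trace in the moment identity (or directly from the resolution of the identity). Next, by definition of the reduced density matrix and the bosonic symmetry of $\Psi_N$, one has $\tr[P^{\otimes 3}\gamma_{\Psi_N}^{(3)}P^{\otimes 3}]=\langle\Psi_N,(P\otimes P\otimes P\otimes\1^{\otimes(N-3)})\Psi_N\rangle$, which is $\le 1$ because $0\le P\le\1$. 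For the lower bound I would use the operator inequality
\[
	\1-P\otimes P\otimes P=Q\otimes\1\otimes\1+P\otimes Q\otimes\1+P\otimes P\otimes Q\le Q_1+Q_2+Q_3,
\]
where $Q=\1-P$ and $Q_i$ denotes $Q$ acting on the $i$-th factor (using $P\le\1$ and $Q\ge0$); taking the expectation in $\Psi_N$ and using symmetry once more, namely $\langle\Psi_N,Q_i\Psi_N\rangle=\tr[Q\gamma_{\Psi_N}^{(1)}]$ for $i=1,2,3$, yields $\tr[P^{\otimes 3}\gamma_{\Psi_N}^{(3)}P^{\otimes 3}]\ge 1-3\tr[Q\gamma_{\Psi_N}^{(1)}]$, which completes the proof.
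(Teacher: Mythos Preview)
The paper does not prove this theorem at all: it simply imports the statement, writing that ``the following formulation is taken from~\cite[Lemma~18]{NamRicTri-23} for its first part and from~\cite[Proof of Lemma~19]{NamRicTri-23} for its second part.'' Your proposal is consistent with this---you even flag the same citation---and goes further by sketching the standard Chiribella/CKMR construction (Schur's lemma, the explicit coherent-state measure, the moment expansion) together with the elementary $\1-P^{\otimes 3}\le Q_1+Q_2+Q_3$ argument for~\eqref{ineq:quantitative_deF_measure}. This is the correct approach and matches what the cited references do; the only point where you are slightly vague is the exact bookkeeping that produces the constant $12$, which you rightly identify as the one technical spot and which is indeed handled in~\cite{NamRicTri-23,LewNamRou-15}.
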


We will apply the above with $P$ a spectral projector below an energy cut-off $L > 0$ for the one-body operator $h$ given by~\eqref{eq:one_particle_operator}
\begin{equation}\label{Def_projection_P}
	P:=\1(h\leq L)\,.
\end{equation}
Note that, from~\cite[Lemma~3.3]{LewNamRou-16}, we have the following semi-classical inequality ``{\`a} la Cwikel--Lieb--Rozenblum''
\begin{equation}\label{CwikelLiebRosenblum_type_onedimensional}
	\dim(P\mathfrak{H})\lesssim L^{\frac12+\frac1s} \,.
\end{equation}
This, together with~\eqref{ineq:quantitative_deF}, gives a good control the error term made of the energy estimate in low dimension, which is the case in our present work. Now, we use Theorem~\ref{thm:quantitative_deF} to derive an energy lower via the de~Finetti measure.

\begin{lemma}[\textbf{Energy lower bound with the de~Finetti measure}]\leavevmode\\
	Let $\Psi_N$ be an arbitrary wave function in $\mathfrak{H}^N$. Let $\mu_{\Psi_N}$ be the Finetti measure defined in Theorem~\ref{thm:quantitative_deF} with the projector $P$ given by~\eqref{Def_projection_P}. Then, for all $L \gg 1$ we have
	\begin{equation}\label{lowerbound:deF_quantitative}
		\frac{\pscalSM{ \Psi_N, H_{a, b}^{N} \Psi_N }}{N} \geq \int_{SP\mathfrak{H}} \mathcal{E}_{a, b}^{\mathrm{H}, N}(u)\di\mu_{\Psi_N} (u) - C\frac{L+N^{2\beta}}{N}\dim(P\mathfrak{H}) - C\frac{N^{3\beta}}{L^{\frac{1}{2}}}\,.
	\end{equation}
\end{lemma}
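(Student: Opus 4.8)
The plan is to decompose the many-body energy into one-, two-, and three-body contributions, localize everything to the low-energy sector $P\mathfrak{H}$ using the a priori kinetic bound~\eqref{Apriori_estimate_kinetic}, apply the de~Finetti theorem (Theorem~\ref{thm:quantitative_deF}) on that sector, and finally reconstruct the Hartree functional integrated against $\mu_{\Psi_N}$ while keeping track of all error terms. First I would write, with $Q := \1 - P$,
\[
	\frac{\pscalSM{ \Psi_N, H_{a, b}^{N} \Psi_N }}{N} = \tr\left[h\gamma_{\Psi_N}^{(1)}\right] + \frac{a}{2}\tr\left[U_N\gamma_{\Psi_N}^{(2)}\right] - \frac{b}{6}\tr\left[W_N\gamma_{\Psi_N}^{(3)}\right]\,.
\]
For the kinetic term I would use $\tr[h\gamma_{\Psi_N}^{(1)}] \geq \tr[h\,P\gamma_{\Psi_N}^{(1)}P] + L\tr[Q\gamma_{\Psi_N}^{(1)}]$ (since $h \geq L$ on $Q\mathfrak{H}$ and $h \geq 0$), so that the ``lost mass'' $\tr[Q\gamma_{\Psi_N}^{(1)}]$ is penalized by a factor $L$; this is what later absorbs the defect in~\eqref{ineq:quantitative_deF_measure}. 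For the attractive three-body term, the only available bound is the crude one $W_N \leq \norm{W_N}_\infty = N^{2\beta}\norm{W}_\infty$, and the point is to replace the full $\gamma_{\Psi_N}^{(3)}$ by its localization $P^{\otimes 3}\gamma_{\Psi_N}^{(3)}P^{\otimes 3}$: the commutator/cross terms involving at least one $Q$ are controlled by $N^{2\beta}\tr[Q\gamma_{\Psi_N}^{(1)}]$ (again a $Q$-mass term, now with prefactor $N^{2\beta}$, matching the $N^{2\beta}/N \cdot \dim(P\mathfrak{H})$ error) using $\gamma_{\Psi_N}^{(3)} \leq \gamma_{\Psi_N}^{(1)} \otimes (\text{stuff})$-type trace bounds and $0 \leq W_N$. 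The two-body term is handled similarly but is easier since~\eqref{ineq:2body} already controls $U_N$ by $\eps(h_x + h_y) + C_\eps$, so its $Q$-localization error is even smaller.

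Next I would feed the localized three-body density into~\eqref{ineq:quantitative_deF}: write $\tr[W_N P^{\otimes 3}\gamma_{\Psi_N}^{(3)}P^{\otimes 3}] \leq \int \tr[W_N |u^{\otimes 3}\rangle\langle u^{\otimes 3}|]\,\di\mu_{\Psi_N}(u) + \norm{W_N}_\infty \cdot \tfrac{12\dim(P\mathfrak{H})}{N}$, which produces the $N^{2\beta}\dim(P\mathfrak{H})/N$ error term; and similarly express the one- and two-body localized densities in terms of $\gamma^{(k)}$ of the de~Finetti state, using that $\gamma_{\Psi_N}^{(1)}$ and $\gamma_{\Psi_N}^{(2)}$ are consistent marginals of $\gamma_{\Psi_N}^{(3)}$ so that the same measure $\mu_{\Psi_N}$ works for all of them (with errors controlled by~\eqref{ineq:quantitative_deF} traced down, or directly). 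Assembling these, the integrand becomes exactly $\mathcal{E}_{a,b}^{\mathrm{H},N}(u)$ — up to the fact that $u$ lives on $SP\mathfrak{H}$ rather than being normalized in $\mathfrak{H}$, but since $\mu_{\Psi_N}$ is supported on the unit sphere of $P\mathfrak{H} \subset \mathfrak{H}$ this is fine, $\norm{u}_2 = 1$. The remaining bookkeeping: the total $Q$-mass $\tr[Q\gamma_{\Psi_N}^{(1)}]$ appears with prefactor of order $L + N^{2\beta}$; by the a priori bound~\eqref{Apriori_estimate_kinetic}, $\tr[Q\gamma_{\Psi_N}^{(1)}] \leq L^{-1}\tr[h\gamma_{\Psi_N}^{(1)}] \leq CN^{2\beta}/L$, and combined with $\norm{W_N}_\infty = N^{2\beta}\norm{W}_\infty$ and the de~Finetti normalization defect this yields the $N^{3\beta}/L^{1/2}$-type term (the $L^{1/2}$ rather than $L^1$ presumably coming from optimizing the split of the three-body cross terms, or from a Cauchy–Schwarz; I would check the exact power but the mechanism is that a $(L+N^{2\beta})N^{2\beta}/L$ contribution, after also paying the price $\dim(P\mathfrak H)\lesssim L^{1/2+1/s}$ elsewhere, is dominated by $N^{3\beta}L^{-1/2}$ for the relevant range of $L$).

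The main obstacle, and where I would spend the most care, is the three-body localization: unlike the two-body interaction there is no Sobolev control of $W_N$ by the kinetic energy in 1D (as the authors emphasize), so one must live with the $\norm{W_N}_\infty = N^{2\beta}\norm{W}_\infty$ bound, and the challenge is to show that every term in the expansion of $\gamma_{\Psi_N}^{(3)} = (P+Q)^{\otimes 3}\gamma_{\Psi_N}^{(3)}(P+Q)^{\otimes 3}$ other than the fully-projected one is bounded, after pairing with $W_N \geq 0$, by something of the form $N^{2\beta}\tr[Q\gamma_{\Psi_N}^{(1)}]$ times a constant — this requires using the positivity $0 \leq W_N$ together with operator inequalities relating the three-body density matrix to one-body quantities (e.g. $\tr[W_N\,Q_1 A] \leq \norm{W_N}_\infty \tr[Q_1 \gamma_{\Psi_N}^{(1)}]$-style estimates after tracing out two particles and using $\gamma_{\Psi_N}^{(3)} \geq 0$). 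Everything else — the kinetic splitting, the two-body bound via~\eqref{ineq:2body}, and the final arithmetic with $\dim(P\mathfrak{H}) \lesssim L^{1/2+1/s}$ and~\eqref{Apriori_estimate_kinetic} — is routine once that localization is in hand.
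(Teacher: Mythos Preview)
Your overall plan---decompose, localize to $P\mathfrak H$, apply the quantitative de~Finetti theorem, reconstruct the Hartree functional---matches the paper's, and you correctly flag the three\nobreakdash-body localization as the crux. But the specific mechanism you sketch for that step does not work as written, and your closing arithmetic does not actually produce $N^{3\beta}/L^{1/2}$.

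The gap is in the off-diagonal pieces $P^{\otimes 3}\gamma_{\Psi_N}^{(3)}(1-P^{\otimes 3}) + (1-P^{\otimes 3})\gamma_{\Psi_N}^{(3)}P^{\otimes 3}$ of $\gamma_{\Psi_N}^{(3)} - P^{\otimes 3}\gamma_{\Psi_N}^{(3)}P^{\otimes 3}$. These are not sign-definite, so you cannot get ``cross terms $\lesssim N^{2\beta}\tr[Q\gamma_{\Psi_N}^{(1)}]$'' from positivity of $W_N$ and $\gamma_{\Psi_N}^{(3)}$ alone; tracing out two legs gives nothing for these terms. The paper instead uses the operator Cauchy--Schwarz inequality $\pm(XY+Y^*X^*)\leq r XX^*+r^{-1}Y^*Y$ to obtain, for all $r>0$,
\[
	\pm 2\left(\gamma_{\Psi_N}^{(3)} - P^{\otimes 3}\gamma_{\Psi_N}^{(3)}P^{\otimes 3}\right) \leq 2r\,(1-P^{\otimes 3})\gamma_{\Psi_N}^{(3)}(1-P^{\otimes 3}) + r^{-1}\left(\gamma_{\Psi_N}^{(3)} + P^{\otimes 3}\gamma_{\Psi_N}^{(3)}P^{\otimes 3}\right),
\]
then traces against $W_N$ and optimizes over $r$. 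The first term on the right contributes $\lesssim N^{2\beta}\tr[Q\gamma_{\Psi_N}^{(1)}]\lesssim N^{4\beta}/L$ (via $W_N\leq CN^{2\beta}$, $1-P^{\otimes 3}\leq Q_1+Q_2+Q_3$, $Q\leq L^{-1}h$, and~\eqref{Apriori_estimate_kinetic}); the second contributes $\lesssim N^{2\beta}$. Their \emph{geometric mean} after optimizing $r$ is exactly $N^{3\beta}/L^{1/2}$. That is where the $L^{1/2}$ comes from---not from the combination $(L+N^{2\beta})N^{2\beta}/L = N^{2\beta}+N^{4\beta}/L$ you wrote (neither summand equals $N^{3\beta}/L^{1/2}$), and not from the $\dim(P\mathfrak H)$ factor, which belongs to the separate de~Finetti replacement error.

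A smaller point: for the kinetic plus two\nobreakdash-body localization error the paper does not simply use $\tr[h\gamma_{\Psi_N}^{(1)}]\geq \tr[hP\gamma_{\Psi_N}^{(1)}P]+L\tr[Q\gamma_{\Psi_N}^{(1)}]$, but rather an operator argument combining $h$ and $U_N$ together (splitting via the projection $\Pi=1-P^{\otimes 3}$, using~\eqref{ineq:2body} with $\eps\sim L^{-1/2}$, and $Qh\geq LQ$) to get the contribution $-CL^{-1/2}\tr[h\gamma_{\Psi_N}^{(1)}]\geq -CL^{-1/2}N^{2\beta}$, which is then absorbed into the $N^{3\beta}/L^{1/2}$ term.
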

This kind of energy lower bound with the de~Finetti measure can be found, e.g., in~\cite{LewNamRou-17,CheHol-17b,Shen-21}.
\begin{proof}
	Following the proof of~\cite[Lemma~4]{LewNamRou-17}, we write
	\begin{equation}\label{lowerbound:deF_quantitative_decomposition}
		\frac{\pscalSM{ \Psi_N, H_{a, b}^{N} \Psi_N }}N = \frac{1}{3}\tr\left[ H_3\gamma_{\Psi_N}^{(3)} \right] = \frac{1}{3}\tr\left[ H_3P^{\otimes3}\gamma_{\Psi_N}^{(3)}P^{\otimes3} \right] + \frac{1}{3}\tr\left[ H_3\left( \gamma_{\Psi_N}^{(3)}-P^{\otimes3}\gamma_{\Psi_N}^{(3)}P^{\otimes3} \right) \right]
	\end{equation}
	and will bound each term from below.
	
	For the main term in~\eqref{lowerbound:deF_quantitative_decomposition}, we use~\eqref{ineq:quantitative_deF} to obtain
	\begin{align*}
		\frac{1}{3}\tr\left[H_3P^{\otimes3}\gamma_{\Psi_N}^{(3)}P^{\otimes3}\right] \geq{}& \frac{1}{3}\tr\left[H_3\int_{SP\mathfrak{H}}|u^{\otimes 3}\rangle \langle u^{\otimes 3}| \di\mu_{\Psi_N}(u)\right] - C \frac{L}{N} \dim(P\mathfrak{H}) - C N^{2\beta-1} \dim(P\mathfrak{H}) \nonumber \\
		={}& \int_{SP\mathfrak{H}} \mathcal{E}_{a, b}^{\mathrm{H}, N}(u) \di\mu_{\Psi_N}(u) - C \frac{L}{N} \dim(P\mathfrak{H}) - C N^{2\beta-1} \dim(P\mathfrak{H})\,.
	\end{align*}
	Here, the first error term is due to~\eqref{ineq:2body} and $Ph \leq LP$, while the second error term follows from the simple estimate
	\[
		P^{\otimes 3} W_N P^{\otimes 3} \leq CN^{2\beta}P^{\otimes 3}\,.
	\]
	
	Let's estimate the error term in~\eqref{lowerbound:deF_quantitative_decomposition}. We use that
	\[
		h = PhP + QhQ
	\]
	and obtain the following decomposition for the three\nobreakdash-body noninteracting Hamiltonian
	\begin{align*}
		h \otimes \1 \otimes \1 + \1 \otimes h \otimes \1 + \1 \otimes \1 \otimes h ={} &
		P^{\otimes 3}\left( h \otimes \1 \otimes \1 + \1 \otimes h \otimes \1 + \1 \otimes \1 \otimes h \right)P^{\otimes 3} \\
		& + PhP \otimes P \otimes Q + PhP \otimes Q \otimes P + PhP \otimes Q \otimes Q \\
		& + P \otimes PhP \otimes Q + Q \otimes PhP \otimes P + Q \otimes PhP \otimes Q \\
		& + P \otimes Q \otimes PhP + Q \otimes P \otimes PhP + Q \otimes Q \otimes PhP \\
		& + QhQ \otimes \1 \otimes \1 + \1 \otimes QhQ \otimes \1 + \1 \otimes \1 \otimes QhQ\,.
	\end{align*}
	For the two\nobreakdash-body interaction term, we write
	\begin{multline*}
		\frac{1}{2}\left( U_{N} \otimes \1 + \1 \otimes U_{N} \right) = \frac{1}{2}P^{\otimes 3}\left( U_{N} \otimes \1 + \1 \otimes U_{N} \right)P^{\otimes 3} + \frac{1}{2}\Pi\left( U_{N} \otimes \1 + \1 \otimes U_{N} \right)\Pi \\
		+ \frac{1}{2}P^{\otimes 3}\left( U_{N} \otimes \1 + \1 \otimes U_{N} \right)\Pi + \frac{1}{2}\Pi\left( U_{N} \otimes \1 + \1 \otimes U_{N} \right)P^{\otimes 3}
	\end{multline*}
	with the orthogonal projection
	\[
		\Pi := P \otimes P \otimes Q + P \otimes Q \otimes P + P \otimes Q \otimes Q + Q \otimes \1 \otimes \1\,.
	\]
	To bound the error terms we use the inequality
	\[
		XAY + YAX \geq -\eps X|A|X - \eps^{-1} Y|A|Y\,,
	\]
	valid for any self-adjoint operator $A$, and any orthogonal projectors $X$, $Y$. Using now $Qh \geq LQ$, \eqref{ineq:2body}, and collecting our estimates, we find
	\begin{align*}
		h \otimes \1 \otimes \1 &+ \1 \otimes h \otimes \1 + \1 \otimes \1 \otimes h + \frac{1}{2}\left( U_{N} \otimes \1 + \1 \otimes U_{N} \right) \\
		\geq{}&
		\begin{multlined}[t]	
			P^{\otimes 3}\left( h \otimes \1 \otimes \1 + \1 \otimes h \otimes \1 + \1 \otimes \1 \otimes h + \frac{1}{2}\left( U_{N} \otimes \1 + \1 \otimes U_{N} \right) \right)P^{\otimes 3} \\
			+ \Pi\left(\frac{L}{4} + \frac{1}{2}(h \otimes \1 \otimes \1 + \1 \otimes h \otimes \1 + \1 \otimes \1 \otimes h) + \frac{1 - \eps^{-1}}{2}\left(U_{N} \otimes \1 + \1 \otimes U_{N} \right) \right)\Pi \\
			- \eps C P^{\otimes 3}\left(U_{N} \otimes \1 + \1 \otimes U_{N} \right)P^{\otimes 3}
		\end{multlined} \\
		\geq{}&
		\begin{multlined}[t]
			P^{\otimes 3}\left( h \otimes \1 \otimes \1 + \1 \otimes h \otimes \1 + \1 \otimes \1 \otimes h + \frac{1}{2}\left( U_{N} \otimes \1 + \1 \otimes U_{N} \right) \right)P^{\otimes 3} \\
			+ \left( \frac{L}{4}-C-C\eps^{-2} \right)\Pi - \eps C\left( h \otimes \1 \otimes \1 + \1 \otimes h \otimes \1 + \1 \otimes \1 \otimes h + C\right)\,.
		\end{multlined}
	\end{align*}
	Choosing now $\eps$ proportional to $L^{-1/2}$ such that $L/4-C-C/\eps^2>0$ for $L\gg1$, taking the trace against $\gamma_{\Psi_N}^{(3)}$, and finally using~\eqref{Apriori_estimate_kinetic}, we obtain for $L\gg1$ that
	\begin{multline}\label{lowerbound:energy_error_quantitative_0}
		\tr\left[\left( h \otimes \1 \otimes \1 + \1 \otimes h \otimes \1 + \1 \otimes \1 \otimes h \vphantom{\frac{1}{2}}+ \frac{1}{2}\left( U_{N} \otimes \1 + \1 \otimes U_{N} \right) \right)\left( \gamma_{\Psi_{N}}^{(3)}-P^{\otimes3}\gamma_{\Psi_{N}}^{(3)}P^{\otimes3} \right)\right] \\
		\geq -CL^{-\frac{1}{2}}\left(\tr\left[h\gamma_{\Psi_N}^{(1)}\right]+C\right) \geq -CL^{-\frac{1}{2}}N^{2\beta}\,.
	\end{multline}
	To deal with the three\nobreakdash-body interaction term, we use the Cauchy--Schwarz inequality for operators
	\[
		\pm(XY+Y^*X^*)\leq r XX^*+ r^{-1}Y^*Y, \quad \forall\, r>0\,,
	\]
	and we write, for all $r>0$, that
	\begin{align*}
		&\pm 2 \left( \gamma_{\Psi_N}^{(3)}-P^{\otimes3}\gamma_{\Psi_N}^{(3)}P^{\otimes3} \right) \\
			={}& \pm\left( \left( 1-P^{\otimes3} \right)\gamma_{\Psi_N}^{(3)} + \gamma_{\Psi_N}^{(3)}\left( 1-P^{\otimes3} \right) + P^{\otimes3}\gamma_{\Psi_N}^{(3)}\left( 1-P^{\otimes3} \right) + \left( 1-P^{\otimes3} \right)\gamma_{\Psi_N}^{(3)}P^{\otimes3} \right) \\
			\leq{}&2r\left( 1-P^{\otimes3} \right)\gamma_{\Psi_N}^{(3)}\left( 1-P^{\otimes3} \right) + r^{-1}\left( \gamma_{\Psi_N}^{(3)}+P^{\otimes3}\gamma_{\Psi_N}^{(3)}P^{\otimes3} \right).
	\end{align*}
	Taking now the trace against $W_N$ and optimizing over $r>0$, we find
	\begin{multline}\label{lowerbound:energy_error_quantitative_1}
		\tr\left[ W_N\left( \gamma_{\Psi_N}^{(3)}-P^{\otimes3}\gamma_{\Psi_N}^{(3)}P^{\otimes3} \right) \right] \geq - \sqrt{2}\left(\tr\left[W_N\left( \gamma_{\Psi_N}^{(3)}+P^{\otimes3}\gamma_{\Psi_N}^{(3)}P^{\otimes3} \right) \right]\right)^{\frac{1}{2}} \\
		\times \left(\tr\left[ W_N\left( 1-P^{\otimes3} \right)\gamma_{\Psi_N}^{(3)}\left( 1-P^{\otimes3} \right) \right] \right)^{\frac{1}{2}}\,.
	\end{multline}
	For the first trace under the square root, we use $W_N \leq CN^{2\beta}$ to obtain
	\begin{equation}\label{lowerbound:energy_error_quantitative_2}
		\tr\left[ W_N\left( \gamma_{\Psi_N}^{(3)}+P^{\otimes3}\gamma_{\Psi_N}^{(3)}P^{\otimes3} \right) \right] \leq CN^{2\beta}\,.
	\end{equation}
	For the second trace under the square root, using again $W_N \leq CN^{2\beta}$, $Q\leq L^{-1}h$ and~\eqref{Apriori_estimate_kinetic}, we find that
	\begin{equation}\label{lowerbound:energy_error_quantitative_3}
		\tr\left[ W_N\left( 1-P^{\otimes3} \right)\gamma_{\Psi_N}^{(3)}\left( 1-P^{\otimes3} \right) \right] \leq CN^{2\beta}\tr\left[ QhQ\gamma_{\Psi_N}^{(1)} \right] \leq \frac{CN^{2\beta}}{L}\tr\left[ h\gamma_{\Psi_N}^{(1)} \right] \leq \frac{CN^{4\beta}}{L}\,.
	\end{equation}
	The second error term in the lower bound~\eqref{lowerbound:deF_quantitative} to the quantum energy follows from~\eqref{lowerbound:energy_error_quantitative_0}--\eqref{lowerbound:energy_error_quantitative_3}.
\end{proof}

Coming back to~\eqref{lowerbound:deF_quantitative} and taking into account~\eqref{CwikelLiebRosenblum_type_onedimensional}, we obtain
\[
	E_{a, b}^{\mathrm{H}, N} \geq E_{a, b}^{\mathrm{Q}, N} \geq \int_{SP\mathfrak{H}} \mathcal{E}_{a, b}^{\mathrm{H}, N}(u) \di\mu_{\Psi_N}(u) - C\frac{L^{\frac{3}{2}+\frac{1}{s}}}{N} - C\frac{N^{3\beta}}{L^{\frac{1}{2}}}\,.
\]
By choosing optimally $L = N^{(3\beta+1) s/(2s+1)}$, we finally obtain
\begin{align}\label{main_ineq_proof_thm_many_body_critical}
	C \geq E_{a, b}^{\mathrm{H}, N} \geq E_{a, b}^{\mathrm{Q}, N} &\geq \int_{SP\mathfrak{H}} \mathcal{E}_{a, b}^{\mathrm{H}, N}(u) \di\mu_{\Psi_N}(u) - CN^{\frac{3\beta(3s+2)-s}{4s+2}} \nonumber \\
		&\geq E_{a, b}^{\mathrm{H}, N} \int_{SP\mathfrak{H}} \di\mu_{\Psi_N}(u) - CN^{\frac{3\beta(3s+2)-s}{4s+2}}\,.
\end{align}
Note that, with this choice of $L$, using again $Q\leq L^{-1}h$ and~\eqref{Apriori_estimate_kinetic}, we have
\[
	0 \leq \tr\left[Q\gamma_{\Psi_N}^{(1)}\right] \leq L^{-1} \tr\left[h\gamma_{\Psi_N}^{(1)}\right] \leq C L^{-1} N^{2\beta} \ll L^{-1} N^{6\beta} \underset{N\to+\infty}{\longrightarrow} 0
\]
and, consequently, by~\eqref{ineq:quantitative_deF_measure} the de~Finetti measure $\mu_{\Psi_N}$ converges to $1$ as $N\to\infty$
\begin{equation}\label{cv_deFinetti_measure}
	\int_{SP\mathfrak{H}} \di\mu_{\Psi_N}(u) \underset{N\to+\infty}{\longrightarrow} 1 \,.
\end{equation}

We can now complete the proof of Theorem~\ref{thm:many_body_critical}.
With the above work, the proof of \emph{(i)} reduces to the combination of~\eqref{main_ineq_proof_thm_many_body_critical} at $b=\mathfrak{b}$ with~\eqref{cv:Hartree_to_NLS_energy} ---hence the requirement $\alpha>\beta$---, since it gives~\eqref{cv:energy_quantum_to_NLS_critical}.

For the proof of \emph{(ii)}, the collapse of ground state energy in~\eqref{asymptotic:quantum_energy_critical} follows from~\eqref{main_ineq_proof_thm_many_body_critical}, \eqref{cv_deFinetti_measure}, and Theorem~\ref{thm:collapse_Hartree}. Hence, we are left with proving~\eqref{asymptotic:many_body_ground_state_critical}. Note that, by~\eqref{main_ineq_proof_thm_many_body_critical}, \eqref{cv_deFinetti_measure}, and the assumption~\eqref{collapse:speed_critical}, we have
\begin{equation}\label{cv:energy_measure_critical}
	\lim_{N\to\infty} \int_{SP\mathfrak{H}} \frac{\mathcal{E}_{a_N, b_N}^{\mathrm{H}, N}(u)}{E_{a_N, b_N}^{\mathrm{H}, N}} \di\mu_{\Psi_N}(u) = \lim_{N\to\infty} \int_{SP\mathfrak{H}} \di\mu_{\Psi_N}(u) = 1\,.
\end{equation}
Note that we also used here the fact that $E_{a_N, b_N}^{\mathrm{H}, N}>0$ with a non-trivial main order, since $\zeta=0$ by assumption in Theorem~\ref{thm:many_body_critical}\emph{(ii)}.
From~\eqref{ineq:quantitative_deF} and Cauchy--Schwarz inequality, we also obtain
\[
	\lim_{N\to\infty} \tr\left|\gamma_{\Psi_N}^{(3)} - \int_{SP\mathfrak{H}} |u^{\otimes 3}\rangle \langle u^{\otimes 3}| \di\mu_{\Psi_N}(u)\right| = 0\,,
\]
which in turn implies that
\[
	\lim_{N\to\infty} \tr\left|\gamma_{\Psi_N}^{(1)} - \int_{SP\mathfrak{H}} |u\rangle \langle u| \di\mu_{\Psi_N}(u)\right| = 0\,.
\]
To complete the proof, it suffices to prove the convergence of one-body density matrix, which is equivalent to
\begin{equation}\label{proof_critical_last_step}
	\lim_{N\to\infty} \int_{SP\mathfrak{H}} \left|\pscal{ u, Q_N }\right| \di\mu_{\Psi_N}(u) = 1 \,,
\end{equation}
where $Q_N = \ell_N^{-1/2}Q_{0}(\ell_N^{-1}\cdot)$ as in the proof of Theorem~\ref{thm:collapse_Hartree}. Defining
\[
	\delta_N := \int_{SP\mathfrak{H}} \left( \frac{\mathcal{E}_{a_N, b_N}^{\mathrm{H}, N}(u)}{E_{a_N, b_N}^{\mathrm{H}, N}} - 1 \right) \di\mu_{\Psi_N}(u)\,,
\]
we have $\delta_N \geq 0$ (recall that $E_{a_N, b_N}^{\mathrm{H}, N}>0$) and, by~\eqref{cv:energy_measure_critical}, $\delta_N \to 0$. Let $T_N$ be the set of all positive normalized functions $u$ in $SP\mathfrak{H}$ satisfying
\begin{equation}\label{behavior_approximate_critical}
	0\leq \frac{ \mathcal{E}_{a_N, b_N}^{\mathrm{H}, N}(u) }{E_{a_N, b_N}^{\mathrm{H}, N}} -1 \leq \sqrt{\delta_N} \,.
\end{equation}
The $T_N$'s are non-empty since they contain Hartree ground states.

We prove that we must have
\begin{equation}\label{limit:PsiQ_critical}
	\lim_{N\to\infty}\inf_{u\in T_N} \left|\pscal{ u, Q_N }\right| = 1 \,.
\end{equation}
If this was not the case, and since $\left|\pscal{ u, Q_N }\right| \leq \norm{u}_2 \norm{Q_N}_2=1$, there would exist a (sub)sequence $\{u_N\} \subset T_N$ such that
\begin{equation}\label{limit:PsiQ_fraud_critical}
	\limsup_{N\to\infty} \left|\pscal{ u_N, Q_N }\right| < 1 \,.
\end{equation}
Since $u_N \in T_N$ and $\delta_N \to 0$, we would deduce from~\eqref{behavior_approximate_critical} that
\[
	\lim_{N\to\infty} \frac{ \mathcal{E}_{a_N, b_N}^{\mathrm{H}, N}(u_N) }{E_{a_N, b_N}^{\mathrm{H}, N}} = 1 \,.
\]
That is, $\{u_N\}_N$ would be a sequence of approximate ground states. Theorem~\ref{thm:collapse_Hartree} would then imply
\[
	\lim_{N\to\infty} \left|\pscal{ u_N, Q_N }\right| = 1 \,,
\]
contradicting~\eqref{limit:PsiQ_fraud_critical}. Hence, we must have~\eqref{limit:PsiQ_critical}.

On the other hand, by the choice of $T_N$, we have
\[
	\frac{ \mathcal{E}_{a_N, b_N}^{\mathrm{H}, N}(u) }{E_{a_N, b_N}^{\mathrm{H}, N}} -1 \geq \sqrt{\delta_N} \,.
\]
for any $u\in {}^\complement T_N := SP\mathfrak{H} \setminus T_N$. Therefore,
\[
	\delta_N = \int_{SP\mathfrak{H}} \left( \frac{ \mathcal{E}_{a_N, b_N}^{\mathrm{H}, N}(u) }{E_{a_N, b_N}^{\mathrm{H}, N}} - 1 \right) \di\mu_{\Psi_N}(u) \geq \int_{{}^\complement T_N} \left( \frac{ \mathcal{E}_{a_N, b_N}^{\mathrm{H}, N}(u) }{E_{a_N, b_N}^{\mathrm{H}, N}} - 1 \right) \di\mu_{\Psi_N}(u) \geq \sqrt{\delta_N} \mu_{\Psi_N}({}^\complement T_N) \,.
\]
Thus, $\mu_{\Psi_N}({}^\complement T_N) \leq \sqrt{\delta_N} \to 0$ and consequently $\mu_{\Psi_N}(T_N) \to 1$, by~\eqref{cv:energy_measure_critical}. The latter convergence and~\eqref{limit:PsiQ_critical} imply that
\[
	\int_{SP\mathfrak{H}} \left|\pscal{ u, Q_N }\right| \di\mu_{\Psi_N}(u) \geq \int_{T_N} \left|\pscal{ u, Q_N }\right| \di\mu_{\Psi_N}(u) \geq \mu_{\Psi_N}(T_N)\inf_{u \in T_N} \left|\pscal{ u, Q_N }\right| \to 1 \,.
\]
Thus~\eqref{proof_critical_last_step} holds true and the proof of Theorem~\ref{thm:many_body_critical} is finished. \qed

\subsection{Proof of Theorem~\ref{thm:many_body_noncritical}}
We start with an inequality on the one-body kinetic operator $h_x$ defined in~\eqref{eq:one_particle_operator}.
\begin{lemma}\label{1D_apriori_lowerbound_on_h}
	As operators, we have
	\[
		-\frac{\mathrm{d}^2}{\mathrm{d} x^2} + 1 \leq C h_x \,.
	\]
\end{lemma}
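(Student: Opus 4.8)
The plan is to read the asserted bound as an inequality between the associated quadratic forms on the form domain of $h_x$, i.e.\ on $\{f\in H^1(\R):\int_\R|x|^s|f(x)|^2\dix<\infty\}\subset H^1(\R)$; this is the natural meaning since $D(h_x)$ sits inside this set and the operator $-\frac{\mathrm d^2}{\mathrm dx^2}+1$ is defined on all of $H^1(\R)$. Since $\pscal{f,(-\frac{\mathrm d^2}{\mathrm dx^2}+1)f}=\norm{f'}_2^2+\norm{f}_2^2$ while $\pscal{f,h_xf}=\norm{f'}_2^2+\int_\R|x|^s|f(x)|^2\dix$, it suffices to exhibit a constant $C>1$ such that
\[
	\norm{f}_2^2\leq (C-1)\norm{f'}_2^2+C\int_\R|x|^s|f(x)|^2\dix
\]
for every $f$ in that form domain, since adding $\norm{f'}_2^2\leq C\norm{f'}_2^2$ then gives the claim.

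To prove this reduced inequality I would split $\norm{f}_2^2=\int_{|x|\leq1}|f(x)|^2\dix+\int_{|x|>1}|f(x)|^2\dix$. On $\{|x|>1\}$ one has $1\leq|x|^s$, hence $\int_{|x|>1}|f(x)|^2\dix\leq\int_\R|x|^s|f(x)|^2\dix$. On $\{|x|\leq1\}$ I would invoke the one-dimensional Sobolev bound $\norm{f}_\infty^2\leq2\norm{f}_2\norm{f'}_2$ (already used in the proof of Theorem~\ref{thm:existence_NLS}) to get $\int_{|x|\leq1}|f(x)|^2\dix\leq2\norm{f}_\infty^2\leq4\norm{f}_2\norm{f'}_2$, and then Young's inequality $4ab\leq\tfrac12a^2+8b^2$ with $a=\norm{f}_2$ and $b=\norm{f'}_2$ to absorb part of the $L^2$-mass on the left: $\int_{|x|\leq1}|f(x)|^2\dix\leq\tfrac12\norm{f}_2^2+8\norm{f'}_2^2$. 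Adding the two regions yields $\norm{f}_2^2\leq\tfrac12\norm{f}_2^2+8\norm{f'}_2^2+\int_\R|x|^s|f(x)|^2\dix$, hence $\norm{f}_2^2\leq16\norm{f'}_2^2+2\int_\R|x|^s|f(x)|^2\dix$, which is the reduced inequality with $C=17$; consequently the lemma holds with $C=17$.

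I do not expect any genuine obstacle here: the only analytic input is the 1D Sobolev embedding, and in fact the argument uses only $|x|^s\geq1$ for $|x|>1$ together with $|x|^s\geq0$, so it would go through verbatim for any nonnegative potential that is bounded below by a positive constant outside a bounded set. The one point deserving a word of care is simply to phrase the statement at the level of quadratic forms (so that the comparison is meaningful on $H^1(\R)$ rather than on $D(h_x)$), which is routine.
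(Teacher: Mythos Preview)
Your proof is correct and follows essentially the same approach as the paper: split $\norm{f}_2^2$ at $|x|=1$, use $|x|^s\geq1$ outside, control the inner piece by the 1D Sobolev bound $\norm{f}_\infty^2\lesssim\norm{f}_2\norm{f'}_2$, and absorb via Young's inequality. The only difference is in the constants (the paper obtains $C=5$ rather than $C=17$), which is immaterial.
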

\begin{proof}
	From Sobolev's inequality in 1D (see, e.g.,~\cite[Theorem~8.5]{LieLos-01}), we have
	\begin{align*}
		\norm{f}_{2}^{2} = \int_{-1}^1|f|^2+\int_{\R\setminus(-1,1)}|f|^2 &\leq 2\norm{f}_{\infty}^2+\int_{\R\setminus(-1,1)}|x|^s|f(x)|^2\dix \\
			&\leq 2\norm{f'}_{2}\norm{f}_{2}+\int_{\R}|x|^s|f(x)|^2\dix \\
			&\leq \frac{1}{2}\norm{f}_{2}^{2} + 2\norm{f'}_{2}^{2} + \int_{\R}|x|^s|f(x)|^2\dix \,.
	\end{align*}
	for every function $f\in H^1(\R)$. Hence
	\[
		\norm{f}_{2}^{2} \leq 4\norm{f'}_{2}^{2} + 2\int_{\R}|x|^s|f(x)|^2\dix
	\]
	which is nothing else than the operator inequality $1\leq -4\frac{\mathrm{d}^2}{\mathrm{d} x^2}+2|x|^s$. Thus,
	\[
		-\frac{\mathrm{d}^2}{\mathrm{d} x^2} + 1 \leq 5\left( -\frac{\mathrm{d}^2}{\mathrm{d} x^2} + |x|^s \right) = 5 h_x. \qedhere
	\]
\end{proof}

Note that \eqref{Apriori_estimate_kinetic} is a bad priori estimate on the kinetic energy and cannot be improved in the case $b=\mathfrak{b}$. In the contrary case where $b < \mathfrak{b}$, one can derive moment estimates. Together with a bootstrap argument as in~\cite{LewNamRou-17,NamRou-20}, one can obtain the convergence of energy~\eqref{cv:energy_quantum_to_NLS_noncritical} for a wider range of $\beta>0$. A major ingredient in our proof is the information-theoretic quantum de~Finetti theorem from~\cite{LiSmi-15,BraHar-17}. The following formulation is taken from~\cite[Theorem~3.2]{Girardot-20} (see~\cite{Rougerie-20a,Rougerie-20b} for a general discussion and more references).
\begin{theorem}[\textbf{Information-theoretic quantum de Finetti}]\label{thm:information_deF}\leavevmode\\
	Let $\mathfrak{H}$ be a complex separable Hilbert space, and $\mathfrak{H}_N = \mathfrak{H}^{\otimes_{\mathrm{sym}} N}$ the corresponding bosonic space. Let $\gamma_{\Psi_N}^{(3)}$ be the $3$\nobreakdash-body reduced density matrix of a $N$\nobreakdash-body state vector $\Psi_N \in \mathfrak{H}_N$ (or a general mixed state) and $P$ be a finite dimensional orthogonal projector. There exists a Borel measure $\mu_{\Psi_N}$ with total mass $\leq 1$ on the set of one-body mixed states
	\[
		\mathcal{S}_P := \left\{ \gamma \mbox{ positive trace-class self-adjoint operator on } P \mathfrak{H}, \, \tr \gamma = 1 \right\}
	\]
	such that
	\begin{equation}\label{ineq:information_deF}
		\sup_{0\leq A,B,C\leq 1}\tr\left| A \otimes B \otimes C \left( P^{\otimes 3} \gamma_{\Psi_N}^{(3)} P^{\otimes 3} - \int_{\mathcal{S}_P} \gamma^{\otimes 3} \di\mu_{\Psi_N}(\gamma) \right)\right| \leq C \sqrt{\frac{\log (\dim P)}{N}}
	\end{equation}
	where the sup is over bounded operators on $P\mathfrak{H}$. Furthermore,
	\[
		1\geq \int_{\mathcal{S}_P} \di\mu_{\Psi_N}(\gamma) = \tr\left[P^{\otimes 3}\gamma_{\Psi_N}^{(3)} P^{\otimes 3}\right] \geq 1- 3\tr\left[Q\gamma_{\Psi_N}^{(1)}\right].
	\]
\end{theorem}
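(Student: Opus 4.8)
The plan is to recall the proof of the information-theoretic de~Finetti theorem (Brand\~{a}o--Harrow~\cite{BraHar-17}, see also~\cite{LiSmi-15}), of which~\eqref{ineq:information_deF} is the $3$-body specialization stated in~\cite[Theorem~3.2]{Girardot-20}; I only sketch its structure. Set $d:=\dim(P\mathfrak{H})$ and $Q:=\1-P$. The first step is a reduction to a finite-dimensional problem: by cyclicity of the trace, $P^{\otimes 3}\gamma_{\Psi_N}^{(3)}P^{\otimes 3}$ is (up to its trace) the $3$-body reduced density matrix of the permutation-symmetric, possibly sub-normalized, state $P^{\otimes N}\Psi_N$ on $(P\mathfrak{H})^{\otimes N}$; so it suffices to establish a de~Finetti estimate with error $O(\sqrt{\log d/N})$ for symmetric states on $N$ copies of a $d$-dimensional space. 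Here it is essential that the relevant distance is the one dual to \emph{local two-outcome measurements} --- exactly the supremum over $0\le A,B,C\le 1$ appearing in~\eqref{ineq:information_deF}, that is, the ``locally measured'' norm --- and not the full trace norm: the logarithmic-in-$d$ rate is available only for this weaker norm.

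The core of the argument is information-theoretic. By permutation symmetry the reduced state on the first $\ell$ systems is independent of which $\ell$ of them are chosen, and the chain rule for quantum mutual information, together with $S\le\log\dim$, yields
\[
	\sum_{\ell=4}^{N} I\big(A_1A_2A_3 : A_\ell \,\big|\, A_4\cdots A_{\ell-1}\big) = I\big(A_1A_2A_3 : A_4\cdots A_N\big) \le 2S(A_1A_2A_3) \le 6\log d,
\]
so some term is $\le 6\log d/(N-3)$. Fixing such an $\ell$ and measuring the conditioning systems $A_4,\dots,A_{\ell-1}$ with an \emph{informationally complete} POVM, the monotonicity of quantum relative entropy (data processing) shows that, on average over outcomes, the post-measurement state on $A_1A_2A_3$ has conditional mutual information of order $\log d/N$ between any two of its three factors; a Pinsker-type (Alicki--Fannes) inequality then upgrades this to closeness of order $\sqrt{\log d/N}$, in the measured norm, to a product $\sigma^{\otimes 3}$, where informational completeness together with the symmetry among \emph{all} sites forces the three tensor factors to be the same one-body state $\sigma$. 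One defines $\mu_{\Psi_N}$ as the push-forward of the outcome distribution along the map sending each measurement outcome to its associated one-body state $\sigma\in\mathcal{S}_P$; since $P^{\otimes 3}\gamma_{\Psi_N}^{(3)}P^{\otimes 3}$ and $\int_{\mathcal{S}_P}\gamma^{\otimes 3}\di\mu_{\Psi_N}(\gamma)$ are the averages over outcomes of the post-measurement state and of $\sigma^{\otimes 3}$ respectively, the triangle inequality transfers the average bound to~\eqref{ineq:information_deF}.

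The mass statements are elementary and are part of the cited result: the total mass of $\mu_{\Psi_N}$ is, by construction, the probability of obtaining some outcome when measuring the $P$-projected state, namely $\tr[P^{\otimes 3}\gamma_{\Psi_N}^{(3)}P^{\otimes 3}]\le 1$; and the union-bound operator inequality $\1-P^{\otimes 3}\le Q\otimes\1\otimes\1+\1\otimes Q\otimes\1+\1\otimes\1\otimes Q$ combined with the permutation symmetry of $\gamma_{\Psi_N}^{(3)}$ gives $\tr[P^{\otimes 3}\gamma_{\Psi_N}^{(3)}P^{\otimes 3}]\ge 1-3\tr[Q\gamma_{\Psi_N}^{(1)}]$. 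The main obstacle is the information-theoretic core: a direct estimate in the full trace norm would cost a factor $d^2/N$ rather than $\log d/N$, and obtaining the logarithmic dependence genuinely requires passing to the measured norm and exploiting the monotonicity of relative entropy together with the informational completeness of the measurement --- precisely the content of the Brand\~{a}o--Harrow theorem, whose packaged $3$-body form is~\cite[Theorem~3.2]{Girardot-20}.
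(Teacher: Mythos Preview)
The paper does not prove this theorem: it is quoted verbatim from the literature, with the sentence ``The following formulation is taken from~\cite[Theorem~3.2]{Girardot-20} (see~\cite{Rougerie-20a,Rougerie-20b} for a general discussion and more references)'' preceding the statement, and no proof is given. Your sketch is a faithful outline of the Brand\~{a}o--Harrow/Li--Smith argument underlying those references, and your treatment of the mass identities (total mass $=\tr[P^{\otimes 3}\gamma_{\Psi_N}^{(3)}P^{\otimes 3}]$ and the union-bound lower bound via $\1-P^{\otimes 3}\le \sum Q_j$) matches how the paper uses them elsewhere (compare~\eqref{ineq:quantitative_deF_measure}). So there is nothing to compare against in the paper itself; your proposal simply supplies what the paper outsources to~\cite{Girardot-20,BraHar-17,LiSmi-15}.
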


We will again apply Theorem~\ref{thm:information_deF} to the spectral projector $P$ given by~\eqref{Def_projection_P}. In order to made use of~\eqref{ineq:information_deF}, we shall decompose the interaction operators using the Fourier transform in the manner
\begin{align}
	N^{\alpha} U(N^\alpha(x-y)) &= \int_{\R} \widehat{U}(N^{-\alpha}k) e^{ikx} e^{-iky}\di k\label{eq:decomp_U_noncritical}
	\intertext{and}
	N^{2\beta} W(N^\beta(x-y),N^\beta(x-z)) &= \iint_{\R^2} \widehat{W}(N^{-\beta}k_1,N^{-\beta}k_2)] e^{i(k_{1}+k_{2})x} e^{-ik_{1}y} e^{-ik_{2}z}\di k_{1}\di k_{2} \,.\label{eq:decomp_W_noncritical}
\end{align}
The involved multiplication operators in the above integral are indeed of the form $A \otimes B \otimes C$. We will interject a simple control of each term separately.

Our two key ingredients in the proof of Theorem~\ref{thm:many_body_noncritical} are the new two lemmas.
\begin{lemma}[\textbf{Lower bound with the de Finetti measure}]\label{lem:lowerbound_deF}\leavevmode\\
	Let $\Psi_N$ be an arbitrary wave function in $\mathfrak{H}^N$. Let $\mu_{\Psi_N}$ be the Finetti measure defined in Theorem~\ref{thm:information_deF} with the projector $P$ given by~\eqref{Def_projection_P}. Then, for all $L \gg 1$ and $0<\delta \leq 1/2$ we have
	\begin{multline}\label{lowerbound:information_deF}
		\frac{\pscalSM{ \Psi_N, H_{a, b}^{N} \Psi_N }}{N} \geq \int_{\mathcal{S}_P} \mathcal{E}_{a, b}^{\mathrm{mH}, N}(\gamma)\di\mu_{\Psi_N}(\gamma) - C\sqrt{\frac{\log L}{N}} L \log N - CL^{-\frac{1}{2}}\tr\left[h\gamma_{\Psi_N}^{(1)}\right] \\
		- C_{\delta}L^{-\frac{1}{4}+\frac{1}{2}\delta}\left( \tr\left[ h \gamma^{(1)}_{\Psi_N} \right] \right)^{\frac14-\frac\delta2}\left( \tr\left[ h\otimes h \gamma^{(2)}_{\Psi_N} \right] \right)^{\frac12+\delta}\,.
	\end{multline}
	Here, $\mathcal{E}_{a, b}^{\mathrm{mH}, N}(\gamma)$ is the modified Hartree energy defined on positive trace-class self-adjoint operators,
	\begin{multline*}
		\mathcal{E}_{a, b}^{\mathrm{mH}, N}(\gamma) := \tr\left[ h\gamma \right] + \frac{a}{2} \iint_{\R^2} N^{\alpha} U(N^{\alpha}(x-y)) \rho_{\gamma}(x)\rho_{\gamma}(y) \dix\diy \\
		- \frac{b}{6} \iiint_{\R^3} N^{2\beta} W(N^{\beta}(x-y),N^{\beta}(x-z)) \rho_{\gamma}(x)\rho_{\gamma}(y)\rho_{\gamma}(z) \dix\diy\diz
	\end{multline*}
	and the density $\rho_{\gamma}(x)=\gamma(x,x)$ ---defined properly by the spectral decomposition--- satisfies $\int_{\R}\rho_{\gamma} = \tr \gamma =1$. Moreover,
	\begin{equation}\label{measure_information_deF}
		1\geq \int_{\mathcal{S}_P} \di\mu_{\Psi_N}(\gamma) \geq 1 - 3\tr\left[Q\gamma_{\Psi_N}^{(1)} \right] \geq 1-3L^{-1} \tr\left[ h\gamma_{\Psi_N}^{(1)} \right]\,.
	\end{equation}
\end{lemma}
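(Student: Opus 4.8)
The plan is to reproduce, in the information-theoretic setting, the scheme of the energy lower bound of the critical case --- the splitting~\eqref{lowerbound:deF_quantitative_decomposition} and the estimates that follow it --- but replacing the quantitative de~Finetti theorem by Theorem~\ref{thm:information_deF}, using the Fourier decompositions~\eqref{eq:decomp_U_noncritical}--\eqref{eq:decomp_W_noncritical} to make the interactions testable, and, crucially, \emph{keeping} $\tr[h\gamma_{\Psi_N}^{(1)}]$ and the second moment $\tr[h\otimes h\,\gamma_{\Psi_N}^{(2)}]$ as they are (the ``second estimate'' in the spirit of~\cite{ErdSchYau-10}) rather than inserting the lossy a~priori bound~\eqref{Apriori_estimate_kinetic}, which is useless once $b<\mathfrak b$. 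First I would write $N^{-1}\pscalSM{\Psi_N,H_{a,b}^N\Psi_N}=\tfrac13\tr[H_3\gamma_{\Psi_N}^{(3)}]$ and split $\gamma_{\Psi_N}^{(3)}=P^{\otimes3}\gamma_{\Psi_N}^{(3)}P^{\otimes3}+(\gamma_{\Psi_N}^{(3)}-P^{\otimes3}\gamma_{\Psi_N}^{(3)}P^{\otimes3})$ with $P=\1(h\le L)$, bounding each piece from below.

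For the main term $\tfrac13\tr[H_3\,P^{\otimes3}\gamma_{\Psi_N}^{(3)}P^{\otimes3}]$ the obstruction is that~\eqref{ineq:information_deF} only controls test operators $A\otimes B\otimes C$ with $0\le A,B,C\le1$. I would therefore use $PhP\le LP$, so that $L^{-1}PhP$ is admissible, and expand $U_N$ and $W_N$ via~\eqref{eq:decomp_U_noncritical}--\eqref{eq:decomp_W_noncritical}: each plane wave $e^{ikx}$ is unitary, hence a bounded combination of operators in $[0,1]$, while the frequency integrals are truncated at a polynomial-in-$N$ scale, the high-frequency tails being absorbed into $\tr[h\gamma_{\Psi_N}^{(1)}]$ through the momentum localization carried by $P$ together with $Q\le L^{-1}h$. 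Applying~\eqref{ineq:information_deF} mode by mode, integrating against the sup-norm-bounded weights $\widehat U,\widehat W$, and using $\dim(P\mathfrak H)\lesssim L^{1/2+1/s}$ from~\eqref{CwikelLiebRosenblum_type_onedimensional} (so $\log\dim(P\mathfrak H)\lesssim\log L$), one recovers $\int_{\mathcal S_P}\mathcal E_{a,b}^{\mathrm{mH},N}(\gamma)\di\mu_{\Psi_N}(\gamma)$ up to the error $C\sqrt{\log L/N}\,L\log N$, the factor $L$ coming from the kinetic cutoff and the factor $\log N$ absorbing the bookkeeping of the frequency truncation (a dyadic splitting of the Fourier integrals).

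For the remainder $\tfrac13\tr[H_3(\gamma_{\Psi_N}^{(3)}-P^{\otimes3}\gamma_{\Psi_N}^{(3)}P^{\otimes3})]$ I would treat the one- plus two-body part exactly as in~\eqref{lowerbound:energy_error_quantitative_0} --- via $Qh\ge LQ$, the operator bound~\eqref{ineq:2body}, and $XAY+YAX\ge-\eps X|A|X-\eps^{-1}Y|A|Y$, then optimizing $\eps\sim L^{-1/2}$ --- to get a lower bound $\ge-CL^{-1/2}\tr[h\gamma_{\Psi_N}^{(1)}]$. The three-body part $-\tfrac b6\tr[W_N(\gamma_{\Psi_N}^{(3)}-P^{\otimes3}\gamma_{\Psi_N}^{(3)}P^{\otimes3})]$ is the delicate one: by the operator Cauchy--Schwarz $\pm(XY+Y^*X^*)\le rXX^*+r^{-1}Y^*Y$ and the usual decomposition of $\gamma_{\Psi_N}^{(3)}-P^{\otimes3}\gamma_{\Psi_N}^{(3)}P^{\otimes3}$ into pieces carrying a factor $\1-P^{\otimes3}$, the task reduces to estimating $\tr[W_N\gamma_{\Psi_N}^{(3)}]$ and $\tr[W_N(\1-P^{\otimes3})\gamma_{\Psi_N}^{(3)}(\1-P^{\otimes3})]$. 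Since in 1D there is no direct Sobolev control of the delta-type three-body interaction by a single power of the kinetic energy, I would instead use $\nabla_1W\in L^q(\R^2)$ ($q>1$) to bound $\tr[W_N\gamma_{\Psi_N}^{(3)}]$ by a fractional power of $\tr[h\gamma_{\Psi_N}^{(1)}]$ times a fractional power of $\tr[h\otimes h\,\gamma_{\Psi_N}^{(2)}]$, up to a harmless power of $N^\beta$, and for the second trace the projection $\1-P^{\otimes3}$ supplies an extra power $L^{-\theta}$ through $Q\le L^{-1}h$. Optimizing over $r$ and interpolating --- the free parameter $\delta$ tuning how heavily one leans on the two-body moment --- yields precisely the last error term $-C_\delta L^{-1/4+\delta/2}(\tr[h\gamma_{\Psi_N}^{(1)}])^{1/4-\delta/2}(\tr[h\otimes h\,\gamma_{\Psi_N}^{(2)}])^{1/2+\delta}$. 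The bound~\eqref{measure_information_deF} is then just the last assertion of Theorem~\ref{thm:information_deF} combined with $Q\le L^{-1}h$.

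I expect the three-body remainder estimate to be the main obstacle: unlike the two-$\R$-body interaction in 1D (or the two-$\R^2$-body one in 2D), the attractive three-body potential is not tamed by a single Sobolev inequality, so one is forced to bring in the two-particle second moment $\tr[h\otimes h\,\gamma_{\Psi_N}^{(2)}]$ --- whence the companion moment estimates of Lemma~\ref{Lemma_Moments_Estimates}, the restriction $b,b_N<\mathfrak b$, and the regularity of $W$ --- and to keep precise track of the powers of $N^\beta$, of $L$, and of the interpolation exponent $\delta$, so that after those moment estimates and a polynomial choice $L=L(N)$ all error terms become subleading. A secondary point needing care is the Fourier/de~Finetti bookkeeping, to ensure the truncated high-frequency tails are genuinely absorbed and only the stated $L\log N$ loss survives.
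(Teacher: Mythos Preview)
Your overall architecture is right and matches the paper: split via $P^{\otimes3}$, handle the main term with the information-theoretic de~Finetti theorem tested against Fourier modes, and handle the remainder with the operator Cauchy--Schwarz estimate~\eqref{lowerbound:energy_error_quantitative_1}. Two points, however, diverge from what the paper actually does.

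\medskip
\textbf{Main term.} The paper does not absorb high-frequency Fourier tails into $\tr[h\gamma_{\Psi_N}^{(1)}]$ via $Q\le L^{-1}h$. Instead it invokes a dedicated plane-wave lemma (the 1D analogue of \cite[Lemma~3.4]{NamRou-20}): for $\mathbf e_k\in\{\cos(k\cdot),\sin(k\cdot)\}$,
\[
\pm P\mathbf e_k P \le \min\{1,\,CL^{1/2}/|k|\}\,.
\]
This $|k|$-decay, combined with $\widehat U\in L^\infty$ and $\widehat W\in L^2\cap L^\infty$, is what makes the frequency integrals converge and is the precise source of the $\log N$ (integrate $\min\{1,L^{1/2}/|k|\}$ and its products up to scale $\sim N^\beta$). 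Your ``dyadic splitting plus absorption'' sketch is not the mechanism used and would need to be made precise to recover exactly $L\log N$.

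\medskip
\textbf{Three-body remainder.} This is where your plan has a genuine gap. You propose to invoke $\nabla_1W\in L^q$ and accept a ``harmless power of $N^\beta$''. The paper does \emph{not} use the gradient condition here at all; that assumption enters only in the moment estimates (Lemma~\ref{Lemma_Moments_Estimates}) via the anticommutator bound~\eqref{W_operator_bounds_three_R_body_hx_hy_Lp_grad_Ls}. For Lemma~\ref{lem:lowerbound_deF} the key tool is instead the operator bound~\eqref{W_operator_bounds_three_R_body_hx_hy_L1_delta}:
\[
|W_N(x-y,x-z)|\le C_\delta\,\|W_N\|_1\,(h_yh_z)^{1/2+\delta}=C_\delta\,(h_yh_z)^{1/2+\delta}\,,
\]
with constant independent of $N$ because $\|W_N\|_1=\|W\|_1=1$. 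Combined with the elementary identity $t^\kappa=\inf_{r>0}\bigl(\kappa r^{-1}t+(1-\kappa)r^{\kappa/(1-\kappa)}\bigr)$, this yields directly $\tr[W_N\gamma_{\Psi_N}^{(3)}]\lesssim(\tr[h\otimes h\,\gamma_{\Psi_N}^{(2)}])^{1/2+\delta}$ for the first square root in~\eqref{lowerbound:energy_error_quantitative_1}, and, after inserting $Qh^\kappa\le L^{\kappa-1}h$ and $Ph^\kappa\le L^\kappa P$ into $(\1-P^{\otimes3})(h_yh_z)^{1/2+\delta}(\1-P^{\otimes3})$, the bound $\lesssim L^{-1/2+\delta}(\tr[h\otimes h\,\gamma^{(2)}])^{1/2+\delta}(\tr[h\gamma^{(1)}])^{1/2-\delta}$ for the second. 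Multiplying the square roots gives exactly the stated error with \emph{no} $N^\beta$ factor. Your route would introduce such a factor (any $L^p$- or gradient-based bound on $W_N$ with $p>1$ scales like $N^{2\beta(p-1)/p}$), which is not harmless: after the choice $L=N^{2/5}$ it would force an additional restriction on $\beta$ that the paper avoids.
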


Lemma~\ref{lem:lowerbound_deF} provides a sharp lower bound to the ground state energy if we have a strong enough a priori control of the error terms in~\eqref{lowerbound:information_deF}. To go further, we need the following moment estimates.

\begin{lemma}[\textbf{Moments Estimates}]\label{Lemma_Moments_Estimates}\leavevmode\\
	Let $a\in\R$, $b>0$, and $0<\alpha, \beta<1$. Let $\Psi_N\in \mathfrak{H}^N$ be a ground state of~$H^{N}_{a, b}$, and $E_{a, b}^{\mathrm{Q}, N, \eps}$ be the ground state energy of the modified Hamiltonian $H^{N,\eps}_{a,b} := H^{N}_{a,b} - \eps\sum_{j=1}^N h_j$ , i.e.,
	\[
		E_{a, b}^{\mathrm{Q}, N, \eps} := N^{-1} \inf \left\{ \pscal{\Psi_N, H^{N,\eps}_{a,b} \Psi_N}:\Psi_N\in\mathfrak{H}^N, \norm{\Psi_N}_{2}=1\right\}.
	\]
	Then, for every $\eps\in(0,1)$, we have
	\begin{equation}\label{Moments_Estimates}
		\tr\left[ h \gamma^{(1)}_{\Psi_N} \right]\lesssim \frac{1+\left|E_{a, b}^{\mathrm{Q}, N, \eps}\right|}\eps \qquad \textrm{ and } \qquad \tr\left[ h\otimes h \gamma^{(2)}_{\Psi_N} \right] \lesssim \left( \frac{1+\left|E_{a, b}^{\mathrm{Q}, N, \eps}\right|}\eps \right)^2.
	\end{equation}
\end{lemma}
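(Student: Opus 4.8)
We follow the moment-estimate strategy of Lewin--Nam--Rougerie~\cite{LewNamRou-17,NamRou-20}, combining the ground-state eigenvalue equation with the variational meaning of $E_{a, b}^{\mathrm{Q}, N, \eps}$; the new ingredient is a ``second estimate'' in the spirit of Erdős--Schlein--Yau~\cite{ErdSchYau-10} to handle the square of the three\nobreakdash-body interaction. Write $\mathcal{T}_N := \sum_{j=1}^N h_{x_j}$ for the one\nobreakdash-body part of $H^N_{a, b}$ and $\mathcal{W}_N := H^N_{a, b} - \mathcal{T}_N = \mathcal{W}_N^{(2)} + \mathcal{W}_N^{(3)}$ for the interaction, split into its two\nobreakdash- and three\nobreakdash-body parts as in~\eqref{Hamiltonian}; set $\lambda_N := N E_{a, b}^{\mathrm{Q}, N}$, so that $H^N_{a, b}\Psi_N = \lambda_N\Psi_N$ and $\mathcal{T}_N\Psi_N = (\lambda_N - \mathcal{W}_N)\Psi_N$; and abbreviate $M_N := \big(1 + |E_{a, b}^{\mathrm{Q}, N, \eps}|\big)/\eps \geq 1$. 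A trial-state bound (discarding the non-negative three\nobreakdash-body term in~\eqref{functional:hartree} and using $a\iint U_N|u|^2|u|^2 \leq |a|\norm{u}_4^4$ from Lemma~\ref{Limit_and_rate_Hartree_interaction_energy}) gives $E_{a, b}^{\mathrm{Q}, N} \leq E_{a, b}^{\mathrm{H}, N} \leq \int_\R(|u'|^2 + |x|^s|u|^2) + \frac{|a|}{2}\norm{u}_4^4 =: C_0$ for any fixed normalised $u$, while $E_{a, b}^{\mathrm{Q}, N, \eps} \leq E_{a, b}^{\mathrm{Q}, N}$ since $\eps\mathcal{T}_N \geq 0$; hence $|\lambda_N| \lesssim N M_N$. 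For the first moment, $\pscal{\Psi_N, H^{N,\eps}_{a,b}\Psi_N} = \lambda_N - \eps N\tr[h\gamma_{\Psi_N}^{(1)}] \geq N E_{a, b}^{\mathrm{Q}, N, \eps}$ by definition of the infimum, so $\tr[h\gamma_{\Psi_N}^{(1)}] \leq (E_{a, b}^{\mathrm{Q}, N} - E_{a, b}^{\mathrm{Q}, N, \eps})/\eps \lesssim M_N$, which is the first bound in~\eqref{Moments_Estimates}.

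\textbf{Second moment: reduction.} From $\mathcal{T}_N\Psi_N = (\lambda_N - \mathcal{W}_N)\Psi_N$ and $\pscal{\Psi_N,\mathcal{W}_N\Psi_N} = \lambda_N - \pscal{\Psi_N,\mathcal{T}_N\Psi_N}$ one gets $\pscal{\Psi_N,\mathcal{T}_N^2\Psi_N} = -\lambda_N^2 + 2\lambda_N\pscal{\Psi_N,\mathcal{T}_N\Psi_N} + \pscal{\Psi_N,\mathcal{W}_N^2\Psi_N}$. Since $\mathcal{T}_N^2 = \sum_j h_{x_j}^2 + \sum_{i\neq j}h_{x_i}h_{x_j} \geq \sum_{i\neq j}h_{x_i}h_{x_j}$, this yields
\[
	N(N-1)\,\tr\big[h\otimes h\,\gamma_{\Psi_N}^{(2)}\big] \leq 2|\lambda_N|\,\pscal{\Psi_N,\mathcal{T}_N\Psi_N} + \pscal{\Psi_N,\mathcal{W}_N^2\Psi_N}\,.
\]
By the first moment and $|\lambda_N| \lesssim N M_N$ the first term on the right is $\lesssim N^2 M_N^2$, so the whole matter reduces to proving $\pscal{\Psi_N,\mathcal{W}_N^2\Psi_N} \lesssim N^2 M_N^2$.

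\textbf{The two\nobreakdash-body square and the easy three\nobreakdash-body terms.} Bound $\mathcal{W}_N^2 \leq 2(\mathcal{W}_N^{(2)})^2 + 2(\mathcal{W}_N^{(3)})^2$ and expand each square as a sum over pairs of index sets grouped by the number of shared particles. For $(\mathcal{W}_N^{(2)})^2$: the diagonal and one\nobreakdash-shared terms use $\pscal{\Psi_N, U_N(x_i-x_j)^2\Psi_N} \lesssim \norm{U_N^2}_1\norm{\partial_{x_i}\Psi_N}_2 \lesssim N^\alpha(1 + \tr[h\gamma_{\Psi_N}^{(1)}])$ (one-dimensional Sobolev in the shared variable, together with $U \in L^2$), contributing $\lesssim N^{1+\alpha} M_N \lesssim N^2 M_N^2$ as $\alpha < 1$; the disjoint terms use the operator inequality $U_N(x_i-x_j) \leq \eps' h_{x_i} + C_{\eps'}$ (valid in 1D because $\norm{U_N}_1 = 1$) and a commutation argument for disjoint pairs, contributing $\lesssim \eps'^2 N^2\,\tr[h\otimes h\,\gamma_{\Psi_N}^{(2)}] + C_{\eps'} N^2 M_N^2$ --- the first piece to be absorbed at the end by taking $\eps'$ small. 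For $(\mathcal{W}_N^{(3)})^2$: the diagonal terms satisfy $W_N(x_i-x_j,x_i-x_k)^2 \leq N^{2\beta}\norm{W}_\infty\,W_N(x_i-x_j,x_i-x_k)$, and the two\nobreakdash-shared ones are AM--GM-dominated by these, so their total contribution is $\lesssim N^{-4}N^{2\beta}\pscal{\Psi_N,\sum_{i<j<k}W_N(x_i-x_j,x_i-x_k)\Psi_N}$; since $\pscal{\Psi_N,\sum_{i<j<k}W_N(x_i-x_j,x_i-x_k)\Psi_N} = b^{-1}(N-1)(N-2)\,|\pscal{\Psi_N,\mathcal{W}_N^{(3)}\Psi_N}|$ and $|\pscal{\Psi_N,\mathcal{W}_N^{(3)}\Psi_N}| \leq |\lambda_N| + \pscal{\Psi_N,\mathcal{T}_N\Psi_N} + |\pscal{\Psi_N,\mathcal{W}_N^{(2)}\Psi_N}| \lesssim N M_N$ (using $|\mathcal{W}_N^{(2)}| \leq \mathcal{T}_N + CN$ from~\eqref{ineq:2body}), this is $\lesssim N^{2\beta} M_N \lesssim N^2 M_N^2$ as $\beta < 1$.

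\textbf{The main obstacle and the conclusion.} The remaining terms --- the one\nobreakdash-shared three\nobreakdash-body products $\pscal{\Psi_N, W_N(x_1-x_2,x_1-x_3)\,W_N(x_1-x_4,x_1-x_5)\Psi_N}$, of which there are $\sim N^5$ --- are the delicate ones: the crude AM--GM bound costs an unaffordable factor $N^{2\beta+3}$, and, in contrast with the two\nobreakdash-body interaction in 1D or the three\nobreakdash-body interaction in higher dimension, there is no Sobolev inequality directly controlling $W_N$ by the kinetic energy in one dimension. Here one invokes the ``second estimate'' of~\cite{ErdSchYau-10}: integrating by parts in the variable $x_1$ shared by the two factors transfers a derivative onto one $W_N$, producing $\nabla_1 W_N$, which is estimated by means of the assumption $\nabla_1 W \in L^q(\R^2)$ paired with the already-controlled $\partial_{x_1}\Psi_N$; the extra power of $N^\beta$ generated by the derivative is exactly compensated by the $\big((N-1)(N-2)\big)^{-2}$ prefactor and by the interaction-energy bound \emph{provided} $0 < \beta < 1$ (or $0 < \beta < q/(3q-2)$ if only $q \in (1,2)$ is available), leaving this contribution $\lesssim N^2 M_N^2$ as well. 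Collecting the three contributions in the displayed inequality, absorbing the term $\propto \eps'^2 N^2\,\tr[h\otimes h\,\gamma_{\Psi_N}^{(2)}]$ into the left-hand side, and dividing by $N(N-1)$ gives $\tr[h\otimes h\,\gamma_{\Psi_N}^{(2)}] \lesssim M_N^2$, i.e.\ the second bound in~\eqref{Moments_Estimates}. I expect this Erdős--Schlein--Yau step for the one\nobreakdash-shared three\nobreakdash-body terms to be the main obstacle: the integration by parts and the bookkeeping of the powers of $N$ are the most technical part, and it is precisely what dictates the restriction $\beta < 1$ and the integrability hypothesis on $\nabla_1 W$.
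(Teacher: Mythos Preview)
Your first-moment argument matches the paper's exactly. For the second moment, however, your route diverges from the paper's and has a genuine gap.

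The paper does \emph{not} expand $\pscal{\Psi_N,\mathcal{W}_N^2\Psi_N}$. Instead it works with the anticommutator identity
\[
	\Bigl\{\mathcal{T}_N, H^N_{a,b}\Bigr\} = 2\mathcal{T}_N^2 + \sum_i \{h_i,\mathcal{W}_N\}\,,
\]
and splits $\{h_i,\mathcal{W}_N\}$ according to whether $i$ appears in the interaction index set. When it does not, $h_i$ commutes with that interaction term and $\{h_i,\cdot\}=2h_i\cdot(\cdot)$; the sum of such interaction terms is then re-expressed as $H^{N,\eps}_{a,b}-(1-\eps)\mathcal{T}_N$ minus a few terms involving $i$, and bounded \emph{from below} by $NE_{a,b}^{\mathrm{Q},N,\eps}-(1-\eps)\mathcal{T}_N$ via the variational principle. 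This is the key step: it converts the dominant ``disjoint'' contribution into $-2(1-\eps)\mathcal{T}_N^2$ plus controlled errors, leaving $2\eps\,\mathcal{T}_N^2$ on the left. When $i$ does appear, one uses the operator anticommutator bound~\eqref{W_operator_bounds_three_R_body_hx_hy_Lp_grad_Ls} (and its two-body analogue~\eqref{U_hx_anticommutator_operator_bounds}), which is where $\nabla_1 W$ enters---as a commutator $[h_x,W]$, not via an integration by parts in $W^2$.

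Your approach requires instead an \emph{upper} bound on $\pscal{\Psi_N,\mathcal{W}_N^2\Psi_N}$, and here the expansion of $(\mathcal{W}_N^{(3)})^2$ produces, besides the diagonal, two-shared and one-shared terms you list, also the \emph{zero-shared} (fully disjoint) products $W_N(x_1{-}x_2,x_1{-}x_3)\,W_N(x_4{-}x_5,x_4{-}x_6)$, of which there are $\sim N^6$. You do not mention them. They cannot be handled by AM--GM (that reduces to the diagonal and costs $N^{2\beta}$, giving a total $\sim N^{2+2\beta}M_N$, too large), nor by the operator bound~\eqref{W_operator_bounds_three_R_body_h_Lp} (which yields $\sim N^{4\beta r/(1+r)}N^2\,\tr[h\otimes h\,\gamma^{(2)}_{\Psi_N}]$, not absorbable for any $r>0$). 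There is no integration-by-parts trick available either, since no derivative is present in $W_N\cdot W_N$; the ``second estimate'' of~\cite{ErdSchYau-10} is really about commutators $[h,W]$, exactly as in the paper's anticommutator route, not about squaring the potential. In short, the disjoint three-body block is precisely what forces the anticommutator/variational trick of the paper, and your scheme as written cannot close without it.
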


The first moment estimate in~\eqref{Moments_Estimates} can be obtaind directly using the Schr{\"o}\-dinger equation satisfying by the ground state $\Psi_N$. In order to obtain the second moment estimate in~\eqref{Moments_Estimates}, we need the following bounds for the three\nobreakdash-body interactions (which are similar to~\cite[Lemma~3.2]{NamRouSei-16}, obtained for two\nobreakdash-body interactions), some of which will also be used in the proof of Lemma~\ref{lem:lowerbound_deF}.

\begin{lemma}[\textbf{Operator bounds for three\nobreakdash-body interaction}]\label{lem:W_operator_bounds_three_R_body}\leavevmode\\
	Let $h$ be given by~\eqref{eq:one_particle_operator}. For every $W \in L^p(\R\times\R, \R)$ with $p > 1$ and satisfying the symmetries~\eqref{condition:potential_three_body_symmetry}, the multiplication operator $W(x-y,x-z)$ on $L^2(\R^3)$ satisfies
	\begin{equation}\label{W_operator_bounds_three_R_body_h_Lp}
		|W(x-y,x-z)| \lesssim \norm{W}_{p}(1-\Delta_y-\Delta_z) \lesssim \norm{W}_{p}(h_y+h_z)
	\end{equation}
	and
	\begin{equation}\label{W_operator_bounds_three_R_body_hx_hy_L1_delta}
		\begin{aligned}[b]
			|W(x-y,x-z)| &\leq \norm{G_\delta}_{\infty}^2\norm{W}_{1}(1-\Delta_y)^{\frac12+\delta}(1-\Delta_z)^{\frac12+\delta} \\
				&\leq C_{\delta}\norm{G_\delta}_{\infty}^2\norm{W}_{1}(h_y h_z)^{\frac12+\delta}
		\end{aligned}
	\end{equation}
	for any $\delta > 0$. Here $G_\delta$ is the Green function of $(1-\Delta_x)^{-1/2-\delta}$.
	
	If in addition $\nabla_{1}W \in L^q(\R\times\R, \R)$ with $q>1$, then
	\begin{equation}\label{W_operator_bounds_three_R_body_hx_hy_Lp_grad_Ls}
		\pm \{ h_x, W(x-y,x-z) \} \lesssim \left(\norm{W}_{p} + \norm{\nabla_{1}W}_{q}\right) h_x(h_y+h_z) \,.
	\end{equation}
\end{lemma}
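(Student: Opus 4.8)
The plan is to prove the three operator bounds one at a time, in each case fibering over the variable $x$ — which is touched by none of $-\Delta_y$, $-\Delta_z$, $h_y$, $h_z$ — so that $W(x-y,x-z)$ becomes, for each fixed $x$, multiplication by a function on $\R^2_{y,z}$ whose $L^p(\R^2)$-norm equals $\norm{W}_p$ (the substitution $(y,z)\mapsto(x-y,x-z)$ being measure-preserving). For~\eqref{W_operator_bounds_three_R_body_h_Lp}, fixing $x$, Hölder's inequality followed by the two-dimensional Gagliardo--Nirenberg--Sobolev embedding $H^1(\R^2)\hookrightarrow L^{2p'}(\R^2)$ — available precisely because $p>1$ makes $2p'<\infty$ — gives $\pscal{g,|W(x-\cdot,x-\cdot)|\,g}_{L^2(\R^2)}\leq\norm{W}_p\norm{g}_{2p'}^2\lesssim\norm{W}_p\pscal{g,(1-\Delta_y-\Delta_z)g}$; integrating in $x$ yields the first inequality in~\eqref{W_operator_bounds_three_R_body_h_Lp}, and the second one follows from Lemma~\ref{1D_apriori_lowerbound_on_h} applied in the $y$ and $z$ variables ($1-\Delta_y\leq Ch_y$ and $-\Delta_z\leq Ch_z$, hence $1-\Delta_y-\Delta_z\leq C(h_y+h_z)$).

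For~\eqref{W_operator_bounds_three_R_body_hx_hy_L1_delta}, write $B:=(1-\Delta_y)^{1/2+\delta}(1-\Delta_z)^{1/2+\delta}$; the first bound is equivalent to the operator-norm estimate $\bigl\||W|^{1/2}B^{-1}|W|^{1/2}\bigr\|\leq\norm{G_\delta}_\infty^2\norm{W}_1$, where $|W|^{1/2}$ denotes multiplication by $|W(x-y,x-z)|^{1/2}$. Fibering over $x$ and recalling that $(1-\partial^2)^{-1/2-\delta}$ on $L^2(\R)$ has integral kernel $G_\delta(\cdot-\cdot)$ with $\norm{G_\delta}_\infty<\infty$ for every $\delta>0$, each fiber of $|W|^{1/2}B^{-1}|W|^{1/2}$ is the operator on $L^2(\R^2)$ with kernel $|W(x-y,x-z)|^{1/2}G_\delta(y-y')G_\delta(z-z')|W(x-y',x-z')|^{1/2}$; bounding $G_\delta(y-y')^2G_\delta(z-z')^2\leq\norm{G_\delta}_\infty^4$ and integrating the two remaining copies of $|W|$ shows its Hilbert--Schmidt norm is at most $\norm{G_\delta}_\infty^2\norm{W}_1$, hence so is its operator norm, and taking the supremum over $x$ gives the claim. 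The second inequality in~\eqref{W_operator_bounds_three_R_body_hx_hy_L1_delta} again follows from Lemma~\ref{1D_apriori_lowerbound_on_h} together with operator monotonicity of $t\mapsto t^{1/2+\delta}$ (this is the point where only small $\delta$ is relevant).

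For~\eqref{W_operator_bounds_three_R_body_hx_hy_Lp_grad_Ls} — the delicate one — I would split $h_x=-\Delta_x+|x|^s$. The potential part contributes $\{|x|^s,W\}=2|x|^sW(x-y,x-z)\geq0$, and by~\eqref{W_operator_bounds_three_R_body_h_Lp} fibered in $x$ together with $|x|^s\leq h_x$ (which commutes with $h_y+h_z$) it is $\lesssim\norm{W}_p\,h_x(h_y+h_z)$. For the kinetic part, an integration by parts in $x$ gives, as quadratic forms, $\{-\Delta_x,W\}=2\nabla_x^*W\nabla_x+\bigl(\nabla_x^*v+v\nabla_x\bigr)$ with $v:=\nabla_x[W(x-y,x-z)]=(\nabla_1W)(x-y,x-z)+(\nabla_1W)(x-z,x-y)$ (using $W(a,b)=W(b,a)$), a sum of two functions of ``$W$-type'' each with $L^q(\R^2)$-norm $\norm{\nabla_1W}_q$. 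The first term is nonnegative and, since $W\lesssim\norm{W}_p(h_y+h_z)$ by~\eqref{W_operator_bounds_three_R_body_h_Lp} while $h_y,h_z$ commute with $\nabla_x$, it is $\leq C\norm{W}_p(h_y+h_z)(-\Delta_x)\leq C\norm{W}_p\,h_x(h_y+h_z)$. For the cross term I would run Cauchy--Schwarz with the \emph{operator} weight $\Lambda:=h_y+h_z$ (which commutes with $\nabla_x$): $\pm(\nabla_x^*v+v\nabla_x)\leq\varepsilon\,\nabla_x^*\Lambda\nabla_x+\varepsilon^{-1}v\Lambda^{-1}v=\varepsilon\,\Lambda(-\Delta_x)+\varepsilon^{-1}v\Lambda^{-1}v$; then, using that $\Lambda^{-1}$ has a nonnegative integral kernel (Feynman--Kac for $e^{-t\Lambda}=e^{-th_y}\otimes e^{-th_z}$), the diamagnetic inequality, and the bound $|v|\lesssim\norm{\nabla_1W}_q\Lambda$ coming again from~\eqref{W_operator_bounds_three_R_body_h_Lp}, one gets $v\Lambda^{-1}v\lesssim\norm{\nabla_1W}_q^2\,\Lambda$; choosing $\varepsilon\sim\norm{\nabla_1W}_q$ and using $h_x\gtrsim1$ then yields $\pm(\nabla_x^*v+v\nabla_x)\lesssim\norm{\nabla_1W}_q\,h_x(h_y+h_z)$. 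Adding the three contributions gives~\eqref{W_operator_bounds_three_R_body_hx_hy_Lp_grad_Ls}.

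The main obstacle is precisely this last cross term: a plain scalar Cauchy--Schwarz would produce $v^2\in L^{q/2}$, which is useless through~\eqref{W_operator_bounds_three_R_body_h_Lp} when $q\leq2$, so the operator weight $\Lambda=h_y+h_z$ — and the positivity of the kernel of $\Lambda^{-1}$, which lets one replace $v$ by $|v|$ and invoke $|v|\lesssim\norm{\nabla_1W}_q\Lambda$ — is what keeps the estimate linear in $\norm{\nabla_1W}_q$ and valid for all $q>1$. Throughout, the inequalities are to be read as quadratic-form inequalities on the form domain of $h_x+h_y+h_z$ (equivalently, first on $C_c^\infty(\R^3)$ and then extended by density), which is routine.
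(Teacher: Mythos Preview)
Your proofs of \eqref{W_operator_bounds_three_R_body_h_Lp} and \eqref{W_operator_bounds_three_R_body_hx_hy_L1_delta} are essentially the paper's proofs: fiber over $x$, use H\"older plus the 2D Sobolev embedding for the first, and the kernel bound $|G_\delta(y-y')G_\delta(z-z')|\leq\norm{G_\delta}_\infty^2$ for the second (the paper bounds the quadratic form directly rather than going through the Hilbert--Schmidt norm, but this is cosmetic). The same goes for the potential part $\{|x|^s,W\}$ and the diagonal kinetic piece $2\nabla_x^*W\nabla_x$ in \eqref{W_operator_bounds_three_R_body_hx_hy_Lp_grad_Ls}. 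One small slip: ``the first term is nonnegative'' presumes $W\geq0$, which the lemma does not assume; you should write $\pm\,2\nabla_x^*W\nabla_x\leq 2\nabla_x^*|W|\nabla_x$ and then invoke \eqref{W_operator_bounds_three_R_body_h_Lp} for $|W|$, as the paper does.

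Where you genuinely diverge is the cross term $\nabla_x^*v+v\nabla_x$. Your operator-weighted Cauchy--Schwarz with $\Lambda=h_y+h_z$, followed by kernel positivity of $\Lambda^{-1}$ (Feynman--Kac) and the diamagnetic inequality to pass from $v\Lambda^{-1}v$ to $|v|\Lambda^{-1}|v|\leq C^2\Lambda$, is correct, but it is considerably heavier than what the paper does. The paper simply writes, pointwise,
\[
\pm\,2\Re\pscal{\nabla_x f,\,v\,f}\leq \pscal{\nabla_x f,\,|v|\,\nabla_x f}+\pscal{f,\,|v|\,f},
\]
and then applies \eqref{W_operator_bounds_three_R_body_h_Lp} to the \emph{first power} $|v|=|(\nabla_1W)(x-y,x-z)|\in L^q(\R^2)$. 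Your worry that ``a plain scalar Cauchy--Schwarz would produce $v^2\in L^{q/2}$'' stems from choosing the wrong plain version ($\pm(\nabla_x^*v+v\nabla_x)\leq \eps(-\Delta_x)+\eps^{-1}v^2$); splitting $|v|$ symmetrically as $|v|^{1/2}\cdot|v|^{1/2}$ between the two factors avoids squaring $v$ altogether and keeps the estimate linear in $\norm{\nabla_1W}_q$ for every $q>1$, with no need for Feynman--Kac or diamagnetic inequalities. Both routes reach the same conclusion, but the paper's is the two-line one.
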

Note that, from now on and for shortness, we denote by $\{A,B\}$ the anticommutator $AB+BA$.

\begin{remark*}
	The bound in~\eqref{W_operator_bounds_three_R_body_hx_hy_L1_delta} could be distributed to all variables and we can actually prove that
	\[
		|W(x-y,x-z)| \leq C(1-\Delta_x)^{\frac13+\delta}(1-\Delta_y)^{\frac13+\delta}(1-\Delta_z)^{\frac13+\delta}, \quad \forall\, \delta > 0.
	\]
	With this better operator bound, however, we need to control the third moment estimate which is not available.
\end{remark*}
\begin{proof}[Proof of Lemma~\ref{lem:W_operator_bounds_three_R_body}]
	The proof of the first inequality of~\eqref{W_operator_bounds_three_R_body_h_Lp} relies on the Sobolev embedding $H^1(\R^2)\hookrightarrow L^r(\R^2)$ for $2\leq r<\infty$ and goes as follow. For $2\leq 2q<\infty$ and $1<p=q/(q-1)\leq\infty$, we have
	\begin{align*}
		\pscal{f(x,y,z),|W(x-y,x-z)| f(x,y,z)}_{L^2(\R^3)} \leq{}& \norm{W}_{p}\int_{\R}\norm{f(x,\cdot,\cdot)}_{L^{2q}(\R^2)}^2\dix \nonumber \\
		\lesssim{}& \norm{W}_{p} \int_{\R}\norm{f(x,\cdot,\cdot)}_{H^1(\R^2)}^2\dix \nonumber \\
		={}& \norm{W}_{p} \pscal{f(x,y,z),(1-\Delta_y-\Delta_z)f(x,y,z)}_{L^2(\R^3)} \,.
	\end{align*}
	The second inequality of~\eqref{W_operator_bounds_three_R_body_h_Lp} is obtain using Lemma~\ref{1D_apriori_lowerbound_on_h}.
	
	For the proof of~\eqref{W_operator_bounds_three_R_body_hx_hy_L1_delta}, we use as in~\cite[Proof of Lemma 3.2]{NamRouSei-16} that~\eqref{W_operator_bounds_three_R_body_hx_hy_L1_delta} is equivalent to
	\[
		\sqrt{|W(x-y,x-z)|}(1-\Delta_y)^{-\frac12-\delta}(1-\Delta_z)^{-\frac12-\delta}\sqrt{|W(x-y,x-z)|} \leq \norm{G_\delta}_{\infty}^2\norm{W}_{1} \,.
	\]
	The Fourier transform of the Green function $G_\delta$ of $(1-\Delta_x)^{-1/2-\delta}$ is
	\[
		\widehat{G_\delta}(\xi):=\int_{\R}e^{-2\pi\xi x}G_\delta(x)\dix=\frac1{(1+4\pi^2\xi^2)^{\frac12+\delta}}\,,
	\]
	which verifies for any $\delta>0$, that
	\[
		\norm{G_\delta}_{\infty} \leq \normSM{\widehat{G_\delta}}_{1} = \int_\R \frac{\di\xi}{(1+4\pi^2\xi^2)^{\frac12+\delta}}<+\infty \,.
	\]
	Thus, for every $f\in L^2(\R^3)$, recalling the symmetries~\eqref{condition:potential_three_body_symmetry} of $W$, we have
	\begin{align*}
		&\pscal{f,\sqrt{|W(x-y,x-z)|}(1-\Delta_y)^{-\frac12-\delta}(1-\Delta_z)^{-\frac12-\delta}\sqrt{|W(x-y,x-z)|}f} \\
		={}& \begin{multlined}[t][0.9\textwidth]
			\int_{\R^5}\overline{f(x,y,z)}\sqrt{|W(x-y,x-z)|}G_\delta(y-y') \\
			\times G_\delta(z-z')\sqrt{|W(x-y',x-z')|}f(x,y',z')\dix\diy\diz\diy'\diz'
			\end{multlined} \\
		\leq{}& \norm{G_\delta}_{\infty}^2\int_{\R^5}|f(x,y,z)|\sqrt{|W(x-y,x-z)} \sqrt{|W(x-y',x-z')|}|f(x,y',z')|\dix\diy\diz\diy'\diz' \\
		={}& \norm{G_\delta}_{\infty}^2 \norm{W}_{1}\norm{f}_{L^2(\R^3)}^{2}\,.
	\end{align*}
	The inequality is obtained again using Lemma~\ref{1D_apriori_lowerbound_on_h}.
	
	For the proof of~\eqref{W_operator_bounds_three_R_body_hx_hy_Lp_grad_Ls}, recalling the symmetries~\eqref{condition:potential_three_body_symmetry} of $W$, we have
	\[
		W(X,Y)=W(Y,X) \Rightarrow \nabla_{1} W(X,Y) = \nabla_{2} W(Y,X) \,.
	\]
	Hence, we obtain
	\begin{align*}
		\nabla_x[W(x-y,x-z)] &= \nabla_{1}[W(x-y,x-z)] + \nabla_{2}[W(x-y,x-z)] \\
			&= \nabla_{1}[W(x-y,x-z)] + \nabla_{1}[W(x-z,x-y)] \,.
	\end{align*}
	Then, a straightforward computation using integration by part gives us
	\begin{align*}
		\pscal{f, \pm\left\{ -\Delta_x, W(x-y,x-z) \right\} f} ={}& \pm2\Re\pscal{\nabla_xf,\nabla_x\left[W(x-y,x-z)f\right]} \\
		={}& \pm 2\pscal{\nabla_xf,W(x-y,x-z)\nabla_xf} \\
		& \pm 2\Re\pscal{\nabla_x f,(\nabla_{1}W)(x-y,x-z) f} \\
		& \pm 2\Re\pscal{\nabla_x f,(\nabla_{1} W)(x-z,x-y) f}
	\end{align*}
	for any $f\in H^2(\R^3)$. Moreover
	\[
		\pm\pscal{\nabla_xf,W(x-y,x-z)\nabla_xf}\leq\pscal{\nabla_xf,|W(x-y,x-z)|\nabla_xf}
	\]
	and
	\begin{align*}
		& \mp 2\Re\pscal{\nabla_xf,(\nabla_{1}W)(x-y,x-z)f} \\
		\leq{}& 2\pscal{|\nabla_xf|\sqrt{|(\nabla_{1}W)(x-y,x-z)|},|f|\sqrt{|(\nabla_{1}W)(x-y,x-z)|}} \\
		\leq{}& \pscal{\nabla_xf,|(\nabla_{1}W)(x-y,x-z)|\nabla_xf}+\pscal{f,|(\nabla_{1}W)(x-y,x-z)|f}\,.
	\end{align*}
	Now, we apply~\eqref{W_operator_bounds_three_R_body_h_Lp} to $|W(x-y,x-z)|$, $|(\nabla_{1}W)(x-y,x-z)|$ and $|(\nabla_{1}W)(x-z,x-y)|$, together with Lemma~\ref{1D_apriori_lowerbound_on_h}, and obtain
	\begin{align}\label{W_operator_bounds_three_R_body_hx_hy_Lp_grad_Ls_case1_part1}
		&\pscal{f, \pm\left\{ -\Delta_x, W(x-y,x-z) \right\} f} \nonumber \\
		\lesssim{}& \norm{W}_{p}\pscal{f,(1-\Delta_y-\Delta_z)(-\Delta_x)f} + \norm{\nabla_{1}W}_{q}\pscal{f,(1-\Delta_y-\Delta_z)(1-\Delta_x)f} \nonumber \\
		\lesssim{}& \norm{W}_{p}\pscal{f,(h_y+h_z)(-\Delta_x)f} + \norm{\nabla_{1}W}_{q}\pscal{f,(h_y+h_z)(1-\Delta_x)f}\,.
	\end{align}
	We conclude by proving, for any $s>0$, that
	\[
		\pm\left\{ |x|^s, W(x-y,x-z) \right\} =\pm2 W(x-y,x-z)|x|^s \lesssim \norm{W}_{p}(h_y+h_z)|x|^s\,.
	\]
	For any $f\in L^2(\R^3)$, we can suppose $|x|^{s/2}f(x,y,z)\in L^2(\R^3)$ as the inequality is otherwise trivially true since the r.h.s.\ is then $+\infty$. Therefore, we can use~\eqref{W_operator_bounds_three_R_body_h_Lp} to obtain
	\[
		\pm \pscal{f,|x|^s W(x-y,x-z)f} \lesssim \norm{W}_{p}\pscal{|x|^{\frac{s}{2}}f,(h_y+h_z)|x|^{\frac{s}{2}}f}= \norm{W}_{p}\pscal{f,(h_y+h_z)|x|^s f}\,.
	\]
	Together with~\eqref{W_operator_bounds_three_R_body_hx_hy_Lp_grad_Ls_case1_part1}, this gives the announced result
	\[
		\pm\{ h_x, W(x-y,x-z) \} \lesssim \left( \norm{W}_{p}+ \norm{\nabla_{1}W}_{q}\right) (h_y+h_z)h_x \,.
	\]
	This concludes the proof of Lemma~\ref{lem:W_operator_bounds_three_R_body}.
\end{proof}

We also need the following bounds on the two\nobreakdash-body interactions. These bounds are the 1D analogue of~\cite[Lemma~3.2]{NamRouSei-16} in 3D (see also~\cite{LewNamRou-17} in 2D) with a very similar proof that we therefore omit for brevity.
\begin{lemma}[\textbf{Operator bounds for two\nobreakdash-body interaction}]\label{lem:U_operator_bounds_two_R_body}\leavevmode\\
	For every $U \in L^1(\R)$, the multiplication operator $U(x-y)$ on $L^2(\R^2)$ satisfies
	\begin{align}
		&|U(x-y)| \leq \norm{U}_1 (1 - \Delta_x ) \lesssim \norm{U}_1 h_x\,, \label{U_operator_bounds_two_R_body_h_L1} \\
		&|U(x-y)| \leq \norm{G_\delta}_2^2 \norm{U}_1 (1 - \Delta_x )^{\frac14+\delta} (1 - \Delta_y )^{\frac14+\delta} \lesssim \norm{G_\delta}_2^2 \norm{U}_{1} (h_y h_z)^{\frac14+\delta}\,, \quad \forall\, \delta>0\,, \label{U_operator_bounds_two_R_body_hx_hy_L1_delta}
		\intertext{where $G_\delta$ is the Green function of $(1-\Delta_x)^{-1/4-\delta}$, and if additionally $U \in L^2(\R)$, then}
		\pm &\{ h_x, U(x-y) \} \leq 2 \left( \norm{U}_1 + \norm{U}_2 \right) h_x h_y\,. \label{U_hx_anticommutator_operator_bounds}
	\end{align}
\end{lemma}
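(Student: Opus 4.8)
The plan is to follow the proof of~\cite[Lemma~3.2]{NamRouSei-16} (and of its two-dimensional counterpart in~\cite{LewNamRou-17}), replacing the higher-dimensional Sobolev embeddings by their one-dimensional analogues and postponing the passage from $1-\Delta_x$ to $h_x$ to the very last step via Lemma~\ref{1D_apriori_lowerbound_on_h}. For~\eqref{U_operator_bounds_two_R_body_h_L1} I would argue fiberwise: for fixed $y$, $\int_{\R}|U(x-y)||f(x,y)|^2\dix \leq \norm{U}_1\,\norm{f(\cdot,y)}_\infty^2$, and the one-dimensional Sobolev inequality $\norm{g}_\infty^2 \leq 2\norm{g}_2\norm{g'}_2 \leq \norm{g}_2^2+\norm{g'}_2^2$ gives $\norm{f(\cdot,y)}_\infty^2 \leq \pscal{f(\cdot,y),(1-\Delta_x)f(\cdot,y)}$; integrating over $y$ yields $|U(x-y)| \leq \norm{U}_1(1-\Delta_x)$, and Lemma~\ref{1D_apriori_lowerbound_on_h} upgrades this to $\lesssim \norm{U}_1 h_x$. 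The same computation with $x$ and $y$ interchanged also produces $|U(x-y)| \leq \norm{U}_1(1-\Delta_y) \lesssim \norm{U}_1 h_y$, which will be useful below.

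For~\eqref{U_operator_bounds_two_R_body_hx_hy_L1_delta} I would use that $A \leq B$ is equivalent to $\sqrt{A}\,B^{-1}\sqrt{A} \leq \1$ for positive self-adjoint $B$, reducing the claim to the operator bound
\[
	\sqrt{|U(x-y)|}\,(1-\Delta_x)^{-\frac14-\delta}(1-\Delta_y)^{-\frac14-\delta}\,\sqrt{|U(x-y)|} \leq \norm{G_\delta}_2^2\,\norm{U}_1
\]
on $L^2(\R^2)$, where $G_\delta$ is the Green function of $(1-\Delta_x)^{-1/4-\delta}$, so that $(1-\Delta_x)^{-1/4-\delta}(1-\Delta_y)^{-1/4-\delta}$ has integral kernel $G_\delta(x-x')G_\delta(y-y')$; since $\widehat{G_\delta}(\xi)=(1+4\pi^2\xi^2)^{-1/4-\delta}$, Plancherel gives $\norm{G_\delta}_2 = \normSM{\widehat{G_\delta}}_2 < \infty$ exactly because $\delta>0$. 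I would then expand the corresponding quadratic form, use the symmetries~\eqref{condition:potential_three_body_symmetry} of $U$ in their two-body form, and conclude by Cauchy--Schwarz together with Young's convolution inequality, exactly as in the proof of the three-body analogue~\eqref{W_operator_bounds_three_R_body_hx_hy_L1_delta} given above; a final application of Lemma~\ref{1D_apriori_lowerbound_on_h} turns $(1-\Delta_x)^{1/4+\delta}(1-\Delta_y)^{1/4+\delta}$ into $(h_x h_y)^{1/4+\delta}$.

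For~\eqref{U_hx_anticommutator_operator_bounds}, with $h_x = -\Delta_x + |x|^s$ as in~\eqref{eq:one_particle_operator}, I would split $\{h_x,U(x-y)\} = \{-\Delta_x,U(x-y)\} + 2\,|x|^s\,U(x-y)$, the second identity holding because multiplication by $|x|^s$ and by $U(x-y)$ commute. The potential part is immediate: $\pm 2\,|x|^s U(x-y) \leq 2\,|x|^s|U(x-y)|$ and, by~\eqref{U_operator_bounds_two_R_body_h_L1} in the variable $y$ together with $|x|^s \leq h_x$, one gets $\pscal{f,|x|^s|U(x-y)|f} \leq \norm{U}_1\pscal{f,|x|^s(1-\Delta_y)f} \lesssim \norm{U}_1\pscal{f,h_x h_y f}$. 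For the kinetic part, a quadratic-form computation gives $\pscal{f,\{-\Delta_x,U(x-y)\}f} = 2\pscal{\nabla_x f,U(x-y)\nabla_x f} + 2\Re\pscal{\nabla_x f,(\nabla U)(x-y)f}$; unlike in the three-body case, we make no integrability assumption on $\nabla U$, so I would rewrite $(\nabla U)(x-y) = -\nabla_y[U(x-y)]$ and integrate by parts in $y$, turning the last term into $2\Re\pscal{\nabla_x\nabla_y f,U(x-y)f} + 2\Re\pscal{\nabla_x f,U(x-y)\nabla_y f}$. The two ``one-derivative-on-each-side'' terms are then controlled, after Cauchy--Schwarz, by~\eqref{U_operator_bounds_two_R_body_h_L1} applied in the appropriate variable and Lemma~\ref{1D_apriori_lowerbound_on_h}, each by $\lesssim (\norm{U}_1+\norm{U}_2)\pscal{f,h_x h_y f}$; the $L^2$-norm enters here through the one-dimensional bound $\pscal{g,|U(x-y)|g} \leq \norm{U}_2\norm{g(\cdot,y)}_4^2 \lesssim \norm{U}_2\pscal{g,(1-\Delta_x)g}$, used whenever it is sharper than the $L^1$-one. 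The remaining, top-order term $\pscal{\nabla_x\nabla_y f,U(x-y)f}$ I would estimate by Cauchy--Schwarz combined with~\eqref{U_operator_bounds_two_R_body_hx_hy_L1_delta} (with, say, $\delta = 1/4$), so that $U$ is only charged ``half a derivative'' in each variable; collecting everything then yields~\eqref{U_hx_anticommutator_operator_bounds}.

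I expect the main obstacle to be precisely this last term, $\pscal{\nabla_x\nabla_y f,U(x-y)f}$: it carries two $x$-derivatives and two $y$-derivatives once they land back on $f$, so the cheap estimate~\eqref{U_operator_bounds_two_R_body_h_L1}, which would charge a full $1-\Delta_x$ (or $1-\Delta_y$) to $U$, overshoots $h_x h_y$, and one is forced to distribute the regularity evenly, which is exactly what~\eqref{U_operator_bounds_two_R_body_hx_hy_L1_delta} is for. Keeping the resulting fractional powers of $1-\Delta_x$ and $1-\Delta_y$ dominated by $h_x h_y$ also requires a little care, since $t \mapsto t^p$ is operator monotone only for $p \leq 1$; this is handled by functional calculus in the commuting pair $(-\Delta_x,-\Delta_y)$, followed by Lemma~\ref{1D_apriori_lowerbound_on_h}. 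This order of business ---establishing~\eqref{U_operator_bounds_two_R_body_h_L1} first, then~\eqref{U_operator_bounds_two_R_body_hx_hy_L1_delta}, and only then~\eqref{U_hx_anticommutator_operator_bounds}--- is therefore essential.
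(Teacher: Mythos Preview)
The paper omits this proof, pointing to~\cite[Lemma~3.2]{NamRouSei-16}, so your plan to transpose that argument to one dimension is exactly what is intended, and your treatment of~\eqref{U_operator_bounds_two_R_body_h_L1} is correct. For~\eqref{U_operator_bounds_two_R_body_hx_hy_L1_delta}, be aware that the two\nobreakdash-body case has no spectator variable, so one cannot simply pull $G_\delta$ out in $L^\infty$ as in the proof of~\eqref{W_operator_bounds_three_R_body_hx_hy_L1_delta}; this is why $\norm{G_\delta}_2$ rather than $\norm{G_\delta}_\infty$ appears, and the argument is not ``exactly'' the three\nobreakdash-body one, though it is close in spirit.

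There is, however, a real gap in your handling of the top\nobreakdash-order term $\Re\pscal{\nabla_x\nabla_y f, Uf}$ in~\eqref{U_hx_anticommutator_operator_bounds}. Cauchy--Schwarz combined with~\eqref{U_operator_bounds_two_R_body_hx_hy_L1_delta} yields, at best, a product of the form $\pscal{f,(1-\Delta_x)^{3/2}(1-\Delta_y)^{3/2}f}^{1/2}\pscal{f,(1-\Delta_x)^{1/2}(1-\Delta_y)^{1/2}f}^{1/2}$, and this \emph{cannot} be controlled by $\pscal{f,h_xh_yf}$: the first factor involves $(1-\Delta_x)^{3/2}$, and since $t\mapsto t^{3/2}$ is not operator monotone, Lemma~\ref{1D_apriori_lowerbound_on_h} does not upgrade it to a power of $h_x$; high\nobreakdash-frequency test functions show directly that $(1-\Delta_x)^{3/2}\not\lesssim h_x$. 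Functional calculus in the commuting pair $(-\Delta_x,-\Delta_y)$ does not rescue this, because the obstruction lies in comparing $(1-\Delta_x)^{3/2}$ with $h_x$, and those do not commute. The fix is to drop~\eqref{U_operator_bounds_two_R_body_hx_hy_L1_delta} for this term and use the $L^2$ hypothesis directly: by plain Cauchy--Schwarz $|\pscal{\nabla_x\nabla_yf, Uf}|\leq\norm{\nabla_x\nabla_yf}_2\,\norm{Uf}_2$, then $\norm{Uf}_2^2=\pscal{f,|U|^2f}\leq\norm{U}_2^2\pscal{f,(1-\Delta_y)f}$ by~\eqref{U_operator_bounds_two_R_body_h_L1} applied to $|U|^2\in L^1$, while $\norm{\nabla_x\nabla_yf}_2^2\leq\pscal{f,h_xh_yf}$; combining with $h_y\lesssim h_xh_y$ (from $h_x\gtrsim 1$, itself a consequence of Lemma~\ref{1D_apriori_lowerbound_on_h}) closes the estimate. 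This is where the extra $\norm{U}_2$ in~\eqref{U_hx_anticommutator_operator_bounds} actually enters --- not in the ``one\nobreakdash-derivative\nobreakdash-on\nobreakdash-each\nobreakdash-side'' terms, which are already handled by~\eqref{U_operator_bounds_two_R_body_h_L1} and $\norm{U}_1$ alone.
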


With Lemmas~\ref{lem:W_operator_bounds_three_R_body} and~\ref{lem:U_operator_bounds_two_R_body}, we can now prove Lemma~\ref{Lemma_Moments_Estimates}. The proof follows very closely the one of the same result~\cite[Lemma~5]{LewNamRou-17} for the 2-body interaction in 2D. The proof of the first result is even strictly the same. For the convenience of the reader we nevertheless do it here in details.

\begin{proof}[Proof of Lemma~\ref{Lemma_Moments_Estimates}]
	For the first result in~\eqref{Moments_Estimates}, we have on one hand $E_{a, b}^{\mathrm{Q}, N, \eps} \leq E_{a, b}^{\mathrm{Q}, N} \leq E_{a, b}^{\mathrm{H}, N}\leq C$ and, by definitions of $H^{N,\eps}_{a,b}$ and $E_{a, b}^{\mathrm{Q}, N, \eps}$, it implies that $\left|E_{a, b}^{\mathrm{Q}, N}\right| \leq C (1+ \left|E_{a, b}^{\mathrm{Q}, N, \eps}\right| )$. On another hand, again by definition of $E_{a, b}^{\mathrm{Q}, N, \eps}$ we have $H_{a,b}^{N,\eps} \geq N E_{a, b}^{\mathrm{Q}, N, \eps}$ and consequently
	\[
		E_{a, b}^{\mathrm{Q}, N, \eps} \leq \frac1N \pscal{\Psi_N, H_{a,b}^{N,\eps} \Psi_N} = E_{a, b}^{\mathrm{Q}, N} \norm{\Psi_N}_{2}^{2} - \frac{\eps}{N} \sum_{j=1}^N \pscal{\Psi_N, h_j\Psi_N} = E_{a, b}^{\mathrm{Q}, N} - \eps \pscal{\Psi_N, h_{1} \Psi_N},
	\]
	where we used that $\Psi_N$ is a ground state (in particular, $\norm{\Psi_N}_{2}=1$). Therefore, combining the above and using the definition of $\gamma^{(1)}_{\Psi_N}$, we obtain
	\begin{equation}\label{Proof_Lemma_Moments_Estimates_first}
		\eps \tr\left[ h \gamma^{(1)}_{\Psi_N} \right] = \eps \pscal{\Psi_N, h_{1} \Psi_N} \leq E_{a, b}^{\mathrm{Q}, N} - E_{a, b}^{\mathrm{Q}, N, \eps} \leq C\left (1 + \left|E_{a, b}^{\mathrm{Q}, N, \eps}\right| \right)\,.
	\end{equation}
	
	For the second result in~\eqref{Moments_Estimates}, the strategy is to bound $\tr\left[ h\otimes h \gamma^{(2)}_{\Psi_N} \right]$, using the definition of the two\nobreakdash-body density matrix $\gamma^{(2)}_{\Psi_N}$, as follows
	\begin{align}\label{Proof_Lemma_Moments_Estimates_second_initial_ineq}
		\pscal{\Psi_N, \biggl( \sum_{j=1}^N h_j \biggr)^2\Psi_N} ={} & \sum_{j=1}^N \pscal{\Psi_N,h_j^2\Psi_N} + \sum_{1\leq j\neq k\leq N} \pscal{\Psi_N,h_j h_k \Psi_N} \nonumber \\
		={} & N\tr\left[ h \gamma^{(1)}_{\Psi_N} \right] + N(N-1)\tr\left[ h\otimes h \gamma^{(2)}_{\Psi_N} \right] \geq C N^2 \tr\left[ h\otimes h \gamma^{(2)}_{\Psi_N} \right].
	\end{align}
	Then, we bound the LHS of~\eqref{Proof_Lemma_Moments_Estimates_second_initial_ineq} by
	\[
		\pscal{\Psi_N, \left\{ \sum_{j=1}^N h_j , H_{a, b}^{N} \right\} \Psi_N} \quad \text{and} \quad \frac{\left( 1+\left|E_{a, b}^{\mathrm{Q}, N, \eps}\right| \right)^2}\eps\,,
	\]
	and finally the former by the latter. We start from the identity
	\begin{equation}\label{Proof_Lemma_Moments_Estimates_second_initial_identity}
		\left\{ \sum_{j=1}^N h_j , H_{a, b}^{N} \right\} = \biggl( \sum_{j=1}^N h_j \biggr) H_{a, b}^{N} + H_{a, b}^{N} \biggl( \sum_{j=1}^N h_j \biggr) = 2\biggl( \sum_{j=1}^N h_j \biggr)^2 + \mathcal{W}_N\,,
	\end{equation}
	where
	\[
		\mathcal{W}_N := \!\!\! \sum_{1\leq i \leq N} \! \left\{ h_i, \frac{a}{N-1} \sum_{1\leq j<k\leq N} U_N(x_j-x_k) - \frac{b}{(N-1)(N-2)} \sum_{1\leq j<k<\ell\leq N} W_N(x_j-x_k, x_j-x_\ell) \right\}
	\]
	and distinguish two cases for the summations. On one hand, for any fixed $i$, using the splitting
	\[
		\sum_{\substack{1\leq j<k<\ell\leq N \\ j,k,\ell \neq i}} = \sum_{1\leq j<k<\ell\leq N } - \biggl( \sum_{i= j<k<\ell\leq N}+\sum_{1\leq j<k=i<\ell\leq N}+\sum_{1\leq j<k<\ell=i\leq N} \biggr),
	\]
	of the four indices summation, the similar one for the three indices summation, the definition of $H_{a,b}^{N,\eps}$ through the identity
	\[
		\frac{a}{N-1} \sum_{1\leq j<k\leq N} U_N(x_j-x_k) - \frac{b}{(N-1)(N-2)} \sum_{1\leq j<k<\ell\leq N} W_N(x_j-x_k, x_j-x_\ell) = H_{a,b}^{N,\eps} - (1-\eps) \sum_{j=1}^N h_j\,,
	\]
	and the symmetries of $U$ and $W$, we obtain for any $r>0$ (and still any fixed $i$) that
	\begin{multline*}
		\frac{a}{N-1} \sum_{\substack{1\leq j<k\leq N \\ j \neq i \neq k}} U_N(x_j-x_k) - \frac{b}{(N-1)(N-2)} \sum_{\substack{1\leq j<k<\ell\leq N \\ j,k,\ell \neq i}} W_N(x_j-x_k, x_j-x_\ell) \\
		\begin{aligned}[t]
			&=
			\begin{multlined}[t][0.75\textwidth]
				H_{a,b}^{N,\eps} - (1-\eps) \biggl( \sum_{j=1}^N h_j \biggr) - \frac{a}{N-1} \sum_{\substack{1\leq j \leq N \\ j \neq i }} U_N(x_i-x_j) \\
				+ \frac{b}{(N-1)(N-2)} \sum_{\substack{1\leq j<k\leq N \\ j \neq i \neq k}} W_N(x_i-x_j, x_i-x_k)
			\end{multlined} \\
			&\geq N E_{a, b}^{\mathrm{Q}, N, \eps} - (1-\eps) \biggl( \sum_{j=1}^N h_j \biggr) - a_+ \frac{\norm{U_N}_1}{N-1} \sum_{\substack{1\leq j\leq N \\ j \neq i}} h_j \\
			&\geq N E_{a, b}^{\mathrm{Q}, N, \eps} - \left( 1 - \eps + 2 |a| N^{-1} \right) \sum_{j=1}^N h_j\,.
		\end{aligned}
	\end{multline*}
	where $a_+:=\max\{0,a\}\leq |a|$. Here, we used the definition of $E_{a, b}^{\mathrm{Q}, N, \eps}$, $b>0$, $W\geq0$, and~\eqref{U_operator_bounds_two_R_body_h_L1} for the first inequality, and $h\geq0$ as well as $\norm{U_N}_1=\norm{U}_1=1$ and $1/(N-1)\leq2/N$ for the last one. Therefore,
	\begin{multline}\label{Proof_Lemma_Moments_Estimates_second_split1}
		\sum_{1\leq i \leq N} \left\{ h_i, \frac{a}{N-1} \sum_{\substack{1\leq j<k\leq N \\ j,k \neq i}} U_N(x_j-x_k) - \frac{b}{(N-1)(N-2)} \sum_{\substack{1\leq j<k<\ell\leq N \\ j,k,\ell \neq i}} W_N(x_j-x_k, x_j-x_\ell) \right\} \\
		\geq 2 N E_{a, b}^{\mathrm{Q}, N, \eps} \biggl( \sum_{j=1}^N h_j \biggr) - 2 \left( 1 - \eps + 2 |a| N^{-1} \right) \biggl( \sum_{j=1}^N h_j \biggr)^2.
	\end{multline}
	On another hand, by~\eqref{W_operator_bounds_three_R_body_hx_hy_Lp_grad_Ls} and~\eqref{U_hx_anticommutator_operator_bounds}, using the splitting
	\[
		\sum_{\substack{1\leq j<k<\ell\leq N \\ i\in\{j,k,\ell\} }} = \sum_{i= j<k<\ell\leq N}+\sum_{1\leq j<k=i<\ell\leq N}+\sum_{1\leq j<k<\ell=i\leq N}
	\]
	as well as $N/(N-2)\lesssim 1$ (for $N\geq3$), we obtain, for any $r,t>0$,
	\begin{multline*}
		\sum_{1\leq i \leq N} \left\{ h_i, \frac{a}{N-1} \sum_{\substack{1\leq j<k\leq N \\ i\in\{j,k\} }} U_N(x_j-x_k) - \frac{b}{(N-1)(N-2)} \sum_{\substack{1\leq j<k<\ell\leq N \\ i\in\{j,k,\ell\} }} W_N(x_j-x_k, x_j-x_\ell) \right\} \\
		\begin{aligned}[t]
			&=\sum_{1\leq i \leq N} \left\{ h_i, \frac{a}{N-1} \sum_{\substack{1\leq j\leq N \\ j \neq i }} U_N(x_i-x_j) - \frac{b}{(N-1)(N-2)} \sum_{\substack{1\leq j<k\leq N \\ j \neq i \neq k }} W_N(x_i-x_j, x_i-x_k) \right\} \\
			&\geq
			\begin{multlined}[t][0.85\textwidth]
				- \sum_{1\leq i \leq N} \left( \frac{2 |a|}{N-1} \left( \norm{U_N}_1 + \norm{U_N}_2 \right) \sum_{\substack{1\leq j\leq N \\ j \neq i }} h_i h_j \right. \\
				\left. + \frac{b}{(N-1)(N-2)} C \left(\norm{W_N}_{1+r} + \norm{\nabla_{1} W_N}_{1+t}\right) \sum_{\substack{1\leq j<k\leq N \\ j,k \neq i}} h_i (h_j + h_k) \right)
			\end{multlined} \\
			&= - \left( \frac{2 |a|}{N-1} \left( 1 + N^\alpha \norm{U}_2 \right) + C\frac{b}{N-1} \left(N^{\frac{2r}{1+r} \beta}\norm{W}_{1+r} + N^{\beta + \frac{2t}{1+t} \beta}\norm{\nabla_{1} W}_{1+t}\right) \right) \sum_{1\leq i \neq j\leq N} h_i h_j \\
			&\geq - \left( \frac{2 |a|}{N-1} \left( 1 + N^\alpha \norm{U}_2 \right) + C\frac{b}{N-1} \left(N^{\frac{2r}{1+r} \beta}\norm{W}_{1+r} + N^{\beta + \frac{2t}{1+t} \beta}\norm{\nabla_{1} W}_{1+t}\right) \right) \biggl( \sum_{j=1}^N h_j \biggr)^2.
		\end{aligned}
	\end{multline*}
	Restricting to $r>1$ and choosing $t = (r-1)/(r+3)\in(0,1)$ so that the two powers of $N$ in the factor of $b$ are equal, we have
	\begin{multline}\label{Proof_Lemma_Moments_Estimates_second_split2}
		\sum_{1\leq i \leq N} \left\{ h_i, \frac{a}{N-1} \sum_{\substack{1\leq j<k\leq N \\ i\in\{j,k\} }} U_N(x_j-x_k) - \frac{b}{(N-1)(N-2)} \sum_{\substack{1\leq j<k<\ell\leq N \\ i\in\{j,k,\ell\} }} W_N(x_j-x_k, x_j-x_\ell) \right\} \\
		\geq - \left( 4 |a| \left( N^{-1} + N^{\alpha-1} \norm{U}_2 \right) + C_r bN^{\frac{2r}{1+r}\beta-1} \right) \biggl( \sum_{j=1}^N h_j \biggr)^2,
	\end{multline}
	where, recalling the assumption $\nabla_{1}W \in L^q(\R^2)$ for $q>1$ and choosing a $q\in(1,2)$,
	\[
		C_r := C \left(\norm{W}_{1+r} + \norm{\nabla_{1} W}_{2\frac{r+1}{r+3}}\right) = C \left(\norm{W}_{1+r} + \norm{\nabla_{1} W}_q\right) < +\infty
	\]
	for the choice $r=(3q-2)/(2-q)>1$.
	
	Inserting~\eqref{Proof_Lemma_Moments_Estimates_second_split1} and~\eqref{Proof_Lemma_Moments_Estimates_second_split2} into~\eqref{Proof_Lemma_Moments_Estimates_second_initial_identity} yields
	\begin{multline}
		\left\{ \sum_{j=1}^N h_j , H_{a, b}^{N} \right\} \geq 2 N E_{a, b}^{\mathrm{Q}, N, \eps} \biggl( \sum_{j=1}^N h_j \biggr) \\
		+ \left( 2 \eps - 4 |a| \left( 2 N^{-1} + N^{\alpha-1} \norm{U}_2 \right) - C_r b N^{\frac{2r}{1+r} \beta - 1} \right) \biggl( \sum_{j=1}^N h_j \biggr)^2\,.
	\end{multline}	
	Hence, using the trivial bound $E_{a, b}^{\mathrm{Q}, N, \eps}\geq -\left|E_{a, b}^{\mathrm{Q}, N, \eps}\right|$ then~\eqref{Proof_Lemma_Moments_Estimates_first}, we obtain
	\begin{multline}\label{Proof_Lemma_Moments_Estimates_second_lowerbd}
		\pscal{\Psi_N, \left\{ \sum_{j=1}^N h_j , H_{a, b}^{N} \right\} \Psi_N}\\
			\begin{aligned}[b]
				&\geq
				\begin{multlined}[t][0.8\textwidth]
					-2 N^2 \left|E_{a, b}^{\mathrm{Q}, N, \eps}\right| \pscal{\Psi_N, h_{1} \Psi_N} \\
					+ \left( 2 \eps - 4 |a| \left( 2 N^{-1} + N^{\alpha-1} \norm{U}_2 \right) - C_r b N^{\frac{2r}{1+r} \beta - 1} \right) \pscal{\Psi_N, \biggl( \sum_{j=1}^N h_j \biggr)^2 \Psi_N}
				\end{multlined} \\
				&\geq
				\begin{multlined}[t][0.8\textwidth]
					- C N^2 \frac{\left(1 + \left|E_{a, b}^{\mathrm{Q}, N, \eps}\right|\right)^2}{\eps} \\
					+ \left( 2 \eps - 4 |a| \left( 2 N^{-1} + N^{\alpha-1} \norm{U}_2 \right) - C_r b N^{\frac{2r}{1+r} \beta - 1} \right) \pscal{\Psi_N, \biggl( \sum_{j=1}^N h_j \biggr)^2 \Psi_N}\,.
				\end{multlined}
			\end{aligned}
	\end{multline}
	
	For the upper bound on $\left\{ \sum_{j=1}^N h_j , H_{a, b}^{N} \right\}$, using that $\Psi_N$ is a ground state, we write
	\begin{align*}
		\pscal{\Psi_N, \left\{ \sum_{j=1}^N h_j , H_{a, b}^{N} \right\} \Psi_N} = 2 N E_{a, b}^{\mathrm{Q}, N} \pscal{\Psi_N, \biggl( \sum_{j=1}^N h_j \biggr) \Psi_N} &= 2 N^2 E_{a, b}^{\mathrm{Q}, N} \pscal{\Psi_N, h_{1} \Psi_N} \\
		&\leq C N^2 \frac{\left(1 + \left|E_{a, b}^{\mathrm{Q}, N, \eps}\right|\right)^2}{\eps}\,.
	\end{align*}
	Here we used again~\eqref{Proof_Lemma_Moments_Estimates_first} and $\left|E_{a, b}^{\mathrm{Q}, N}\right| \leq C \left( 1+ \left|E_{a, b}^{\mathrm{Q}, N, \eps}\right| \right)$.
	
	Finally, since $(r+1)/(2r)=q/(3q-2)\in(1/2, 1)$ can be made as close to $1$ as needed (because $q$ can be taken as close as $1$ as needed by assumption on~$\nabla_{1}W$) and since $\beta<1$, we can assume $2r\beta<r+1$. Consequently $2 \eps - 4 |a| \left( 2 N^{-1} + N^{\alpha-1} \norm{U}_2 \right) - C_r b N^{\frac{2r}{1+r} \beta - 1} \geq \eps>0$ for $N$ large enough, since we also assume $\alpha<1$, and the above inequality, together with~\eqref{Proof_Lemma_Moments_Estimates_second_initial_ineq} and~\eqref{Proof_Lemma_Moments_Estimates_second_lowerbd}, gives the wanted result
	\[
		\tr\left[ h\otimes h \gamma^{(2)}_{\Psi_N} \right] \leq C \left( \frac{1+\left|E_{a, b}^{\mathrm{Q}, N, \eps}\right|}\eps \right)^2. \qedhere
	\]
\end{proof}

We now turn to the proof of the energy lower bound~\eqref{lowerbound:information_deF} in Lemma~\ref{lem:lowerbound_deF}.
\begin{proof}[Proof of Lemma~\ref{lem:lowerbound_deF}]
	Again, we use the decomposition
	\begin{equation}\label{lowerbound:deF_decomposition_noncritical}
		\frac{ \pscalSM{ \Psi_N, H_{a, b}^{N} \Psi_N } }{N} = \frac{1}{3} \tr\left[ H_3P^{\otimes3}\gamma_{\Psi_N}^{(3)}P^{\otimes3} \right] + \frac{1}{3} \tr\left[ H_3\left( \gamma_{\Psi_N}^{(3)}-P^{\otimes3}\gamma_{\Psi_N}^{(3)}P^{\otimes3} \right] \right).
	\end{equation}
	For the main term in~\eqref{lowerbound:deF_decomposition_noncritical}, we define
	\[
		\widetilde{\gamma_{\Psi_N}}^{(3)} := \int_{\mathcal{S}_P} \gamma^{\otimes 3} \di\mu_{\Psi_N}(\gamma)\,.
	\]
	By decomposing the interaction term as announced in~\eqref{eq:decomp_U_noncritical} and~\eqref{eq:decomp_W_noncritical}, using the triangle inequality with~\eqref{ineq:information_deF} and recalling~\eqref{CwikelLiebRosenblum_type_onedimensional}, we obtain
	\begin{equation}\label{main_ineq_error_proof_lemma_lowerbound_deF}
		\frac{1}{3} \tr\left[ H_3P^{\otimes3}\gamma_{\Psi_N}^{(3)}P^{\otimes3} \right] \geq \frac{1}{3} \tr\left[ H_3P^{\otimes3}\widetilde{\gamma_{\Psi_N}}^{(3)}P^{\otimes3} \right] - \mathcal{R} = \int_{\mathcal{S}_P} \mathcal{E}_{a, b}^{\mathrm{mH}, N}(\gamma) \di\mu_{\Psi_N}(\gamma) - \mathcal{R} \,.
	\end{equation}
	Here, in order to control the error term
	\begin{multline*}
		\mathcal{R} := C \sqrt{\frac{\log L}{N}} \Biggl(L + \sum_{\mathbf{e}_{k}\in \{\cos(k\cdot), \sin (k\cdot)\}} \int_{\R} \norm{P \mathbf{e}_{k} P}^2 |\widehat{U}(N^{-\alpha}k)| \di k \\
		+ \sum_{\mathbf{e}_{k}\in \{\cos(k\cdot), \sin (k\cdot)\}} \iint_{\R^2} \norm{P \mathbf{e}_{k_{1}} P}\norm{P \mathbf{e}_{k_{2}} P}\norm{P \mathbf{e}_{k_{1}+k_{2}} P} |\widehat{W}(N^{-\beta}k_{1},N^{-\beta}k_{2})| \di k_{1}\di k_{2} \Biggr),
	\end{multline*}
	we make use of the following lemma, which is the 1D analogue of~\cite[Lemma~3.4]{NamRou-20} and has a similar proof that we therefore omit it for brevity.
	\begin{lemma}[\textbf{Multiplication by plane waves}]\label{lem:cos_sin}\leavevmode\\
		Let $k\in \R$, $k\neq 0$ and $\mathbf{e}_{k}$ be the multiplication operator on $L^2 (\R)$ by either $\cos(k\cdot)$ or $\sin (k\cdot)$. Let $P$ be the spectral projector given by~\eqref{Def_projection_P}. As operators, we have
		\[
			\pm P \mathbf{e}_{k} P \leq \min\left\{ 1, \frac{C L^{\frac{1}{2}}}{|k|} \right\}.
		\]
	\end{lemma}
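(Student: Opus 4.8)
The plan is to prove the stronger statement that the operator norm satisfies $\norm{P\mathbf{e}_kP} \leq \min\{1, CL^{1/2}/|k|\}$; the claimed two-sided bound then follows because $P\mathbf{e}_kP$ is self-adjoint, being the product of the self-adjoint orthogonal projector $P$ with multiplication by the real-valued function $\cos(k\cdot)$ or $\sin(k\cdot)$. The bound $\norm{P\mathbf{e}_kP}\leq1$ is immediate from $\norm{P}\leq1$ and $\norm{\cos(k\cdot)}_{\infty}, \norm{\sin(k\cdot)}_{\infty}\leq1$, so the whole content lies in the estimate $\norm{P\mathbf{e}_kP}\leq CL^{1/2}/|k|$ for $k\neq0$. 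First I would write $\cos(k\cdot)=\tfrac12(e^{ik\cdot}+e^{-ik\cdot})$ and $\sin(k\cdot)=\tfrac1{2i}(e^{ik\cdot}-e^{-ik\cdot})$, which reduces the task to bounding $\norm{Pe^{ik\cdot}P}$ by $2L^{1/2}/|k|$ \emph{uniformly in the sign of $k$}, since then the triangle inequality yields the same bound (with the same constant) for $\norm{P\cos(k\cdot)P}$ and $\norm{P\sin(k\cdot)P}$.

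To estimate $\norm{Pe^{ik\cdot}P}$, take $f,g$ in the range of $P=\1(h\leq L)$. Because $|x|^s\geq0$, one has $\norm{\psi'}_{2}^2\leq\pscal{\psi,h\psi}\leq L\norm{\psi}_{2}^2$ for every $\psi\in\mathrm{Ran}\,P$, so in particular $f,g\in H^1(\R)$ with $\norm{f'}_{2}\leq L^{1/2}\norm{f}_{2}$ and $\norm{g'}_{2}\leq L^{1/2}\norm{g}_{2}$. Writing $e^{ikx}=\tfrac1{ik}\tfrac{\mathrm d}{\mathrm dx}e^{ikx}$ and integrating by parts (the boundary terms vanishing since $\overline f g\in W^{1,1}(\R)$ decays at infinity) yields
\[
	\pscal{f, e^{ik\cdot}g} = -\frac1{ik}\int_{\R}\bigl(\overline{f'}(x)g(x)+\overline{f}(x)g'(x)\bigr)e^{ikx}\,\dix\,,
\]
whence, by Cauchy--Schwarz and $\norm{e^{ik\cdot}}_{\infty}=1$,
\[
	\bigl|\pscal{f,e^{ik\cdot}g}\bigr| \leq \frac1{|k|}\bigl(\norm{f'}_{2}\norm{g}_{2}+\norm{f}_{2}\norm{g'}_{2}\bigr) \leq \frac{2L^{1/2}}{|k|}\norm{f}_{2}\norm{g}_{2}\,.
\]
Applying this with $f,g$ replaced by $Pf,Pg$ gives $\norm{Pe^{ik\cdot}P}\leq 2L^{1/2}/|k|$, hence $\norm{P\cos(k\cdot)P},\norm{P\sin(k\cdot)P}\leq 2L^{1/2}/|k|$, and combining with $\norm{P\mathbf{e}_kP}\leq1$ finishes the proof with $C=2$.

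I do not expect a genuine obstacle here: the argument is essentially the $1$D transcription of~\cite[Lemma~3.4]{NamRou-20}. The only points deserving a line of justification are that $\mathrm{Ran}\,P\subset H^1(\R)$ (which follows from $\pscal{\psi,h\psi}\leq L\norm{\psi}_{2}^2$ together with $|x|^s\geq0$, giving $\norm{\psi'}_{2}\leq L^{1/2}\norm{\psi}_{2}$), the vanishing of the boundary terms in the integration by parts (products of $H^1(\R)$ functions belong to $W^{1,1}(\R)$ and tend to $0$ at $\pm\infty$), and the observation that the exponential estimate is insensitive to the sign of $k$, so that splitting $\cos$ and $\sin$ into plane waves does not degrade the constant.
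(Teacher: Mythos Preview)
Your proof is correct and is precisely the one-dimensional transcription of \cite[Lemma~3.4]{NamRou-20} that the paper invokes; since the paper omits the argument and simply cites that reference, your approach coincides with the intended one.
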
	
	The first error term in $\mathcal{R}$, $L$, is the operator norm of $P^{\otimes 3} (h_{x_1}+h_{x_2}+h_{x_3}) P^{\otimes 3} /3$. The second error term, made of the two\nobreakdash-body, interaction can be estimated using Lemma~\ref{lem:cos_sin} and the condition~\eqref{condition:potential_two_body}, which ensures $\widehat{U} \in L^\infty(\R)$, as follows
	\begin{align*}
		\int_{\R} \norm{P \mathbf{e}_{k} P}^2 |\widehat{U}(N^{-\alpha}k)| \di k &\leq \int_{\R} \min\left\{1,\frac{CL}{|k|^{2}}\right\} |\widehat{U}(N^{-\alpha}k)| \di k \\
		&\leq \normSM{\widehat{U}}_{\infty} \left(\int_{|k|\leq 1}\di k + \int_{|k|>1}\frac{CL}{|k|^{2}}\di k\right) \leq C + CL.
	\end{align*}
	Similarly, we use again Lemma~\ref{lem:cos_sin} and the condition~\eqref{condition:potential_three_body}, which ensures $\widehat{W} \in L^2\cap L^\infty(\R^2)$, to estimate the third error term in $\mathcal{R}$ made of the three\nobreakdash-body interaction. We have
	\begin{align*}
		&\iint_{\R^2} \norm{P \mathbf{e}_{k_{1}} P}\norm{P \mathbf{e}_{k_{2}} P}\norm{P \mathbf{e}_{k_{1}+k_{2}} P} |\widehat{W}(N^{-\beta}k_{1},N^{-\beta}k_{2})| \di k_{1}\di k_{2} \\
		\leq{}& \iint_{\R^2} \min\left\{1, \frac{CL^{\frac{1}{2}}}{ |k_{1}|} \right\} \min\left\{1, \frac{CL^{\frac{1}{2}}}{|k_{2}|}\right\} \min\left\{1, \frac{CL^{\frac{1}{2}}}{|k_{1} + k_{2}|}\right\} |\widehat{W}(N^{-\beta}k_{1},N^{-\beta}k_{2})| \di k_{1}\di k_{2} \\
		\leq{}& \begin{multlined}[t]
			\int_{\substack{|k_{1}|\leq 1 \\ |k_{2}|\leq 1}} \normSM{\widehat{W}}_{\infty} \di k_{1}\di k_{2} + C\int_{\substack{1 < |k_{1}|\leq 4N^{\beta} \\ |k_{2}|\leq 1}} L^{\frac{1}{2}} \frac{ \normSM{\widehat{W}}_{\infty} }{ |k_{1}| } \di k_{1}\di k_{2} + C\int_{\substack{|k_{1}| \leq 1 \\ 1 < |k_{2}|\leq 2N^{\beta}}} L^{\frac{1}{2}} \frac{ \normSM{\widehat{W}}_{\infty} }{ |k_{2}| } \di k_{1}\di k_{2} \\
			+ C\int_{\substack{1 < |k_{1}| \leq 4N^{\beta} \\ 1 < |k_{2}|\leq 2N^{\beta}}} L \frac{ \normSM{\widehat{W}}_{\infty} }{ |k_{1}| |k_{2}| } \di k_{1}\di k_{2} + C\int_{\substack{|k_{1}| > N^{\beta} \\ |k_{2}| > 2N^{\beta}}} L \frac{ |\widehat{W}(N^{-\beta}k_{1},N^{-\beta}k_{2})| }{ |k_{1}| |k_{2}| } \di k_{1}\di k_{2} \\
			+ C\int_{\substack{|k_{1}| > 4N^{\beta} \\ |k_{2}|\leq 2N^{\beta}}} L \frac{ |\widehat{W}(N^{-\beta}k_{1},N^{-\beta}k_{2})| }{ |k_{1}| |k_{1}+k_{2}| } \di k_{1}\di k_{2} + C\int_{\substack{|k_{1}| \leq N^{\beta} \\ |k_{2}| > 2N^{\beta}}} L \frac{ |\widehat{W}(N^{-\beta}k_{1},N^{-\beta}k_{2})| }{ |k_{1}+k_{2}| |k_{2}| } \di k_{1}\di k_{2}
		\end{multlined} \\
		\leq{}& C + C L^{\frac{1}{2}} \log N + C L \log N + C L \,.
	\end{align*}
	The error terms $\mathcal{R}$ in~\eqref{main_ineq_error_proof_lemma_lowerbound_deF} give the first error term on the lower bound of the quantum energy in~\eqref{lowerbound:information_deF}.
	
	Now we deal with the error in~\eqref{lowerbound:deF_decomposition_noncritical}. By the same arguments as in the proof of Theorem~\ref{thm:many_body_critical}, we have
	\begin{multline*}
		\tr\left[\left( h \otimes \1 \otimes \1 + \1 \otimes h \otimes \1 + \1 \otimes \1 \otimes h \vphantom{\frac{1}{2}}+ \frac{1}{2}\left( U_{N} \otimes \1 + \1 \otimes U_{N} \right) \right)\left( \gamma_{\Psi_{N}}^{(3)}-P^{\otimes3}\gamma_{\Psi_{N}}^{(3)}P^{\otimes3} \right)\right] \\
		\geq -CL^{-\frac{1}{2}}\tr\left[h\gamma_{\Psi_N}^{(1)}\right].
	\end{multline*}
	Furthermore, we recall from~\eqref{lowerbound:energy_error_quantitative_1} that
	\begin{multline}\label{Proof_Lemma_Lower_bound_ground_state_energy_2nd_error_term_step1}
		\tr\left[ W_N\left( \gamma_{\Psi_N}^{(3)}-P^{\otimes3}\gamma_{\Psi_N}^{(3)}P^{\otimes3} \right) \right] \geq - \sqrt{2} \left(\tr\left[ W_N\left( \gamma_{\Psi_N}^{(3)}+P^{\otimes3}\gamma_{\Psi_N}^{(3)}P^{\otimes3} \right) \right]\right)^{\frac{1}{2}} \\
		\times \left(\tr\left[ W_N\left( 1-P^{\otimes3} \right)\gamma_{\Psi_N}^{(3)}\left( 1-P^{\otimes3} \right) \right] \right)^{\frac{1}{2}}\,.
	\end{multline}
	For the first trace under the square root in~\eqref{Proof_Lemma_Lower_bound_ground_state_energy_2nd_error_term_step1}, we use~\eqref{W_operator_bounds_three_R_body_hx_hy_L1_delta} to obtain
	\[
		W_N(x-y,x-z) \leq C_\delta (h_y h_z)^{\frac12+\delta} \leq C_\delta\left( \left( \frac12+\delta \right) r^{-1} h_y h_z+\left( \frac12-\delta \right)r^{\frac{1+2\delta}{1-2\delta}} \right)
	\]
	for any $\delta\in(0, 1/2)$ and any $r>0$, where we have applied to $t=h_y h_z$ and $\kappa=1/2+\delta$ the identity
	\begin{equation}\label{upperbound_t_power_r}
		t^\kappa=\inf_{r>0}\left( \kappa t r^{-1} + (1-\kappa) r^{\frac{\kappa}{1-\kappa}} \right), \quad \forall\, t\geq0, \, \forall\, \kappa\in(0,1) \,.
	\end{equation}
	Taking the trace against $\gamma_{\Psi_N}^{(3)}+P^{\otimes3}\gamma_{\Psi_N}^{(3)}P^{\otimes3}\leq2\gamma_{\Psi_N}^{(3)}$ then optimizing over $r>0$, we find that, for all $\delta\in(0,1/2)$,
	\begin{align}\label{Proof_Lemma_Lower_bound_ground_state_energy_2nd_error_term_1st_half}
		& \tr\left[ W_N(x-y,x-z)\left( \gamma_{\Psi_N}^{(3)}+P^{\otimes3}\gamma_{\Psi_N}^{(3)}P^{\otimes3} \right) \right] \nonumber \\
		\leq{}& 2 C_\delta \left\{ \left( \frac12+\delta \right) r^{-1} \tr\left[ \1\otimes h\otimes h\gamma_{\Psi_N}^{(3)} \right] + \left( \frac12-\delta \right) r^{\frac{1+2\delta}{1-2\delta}}\tr\left[ \gamma_{\Psi_N}^{(3)} \right] \right\} \nonumber \\
		={}& 2 C_\delta \left[\tr\left[ h\otimes h\gamma_{\Psi_N}^{(2)} \right]\right]^{\frac12+\delta}\,.
	\end{align}
	For the second trace under the square root in~\eqref{Proof_Lemma_Lower_bound_ground_state_energy_2nd_error_term_step1}, using the relations $Qh^\kappa P=Ph^\kappa Q=0$, $Qh^\kappa\leq L^{\kappa-1}h$, and $Ph^\kappa\leq L^\kappa P$ at $\kappa=1/2+\delta\in(0,1)$, as well as $Q\leq L^{-1}h$ and $P\leq1$, and again~\eqref{W_operator_bounds_three_R_body_hx_hy_L1_delta} and~\eqref{upperbound_t_power_r}, we obtain
	\begin{align*}
		& \left( 1-P^{\otimes3} \right)W_N(x-y,x-z)\left( 1-P^{\otimes3} \right) \\
		\leq{}& C_\delta \left( 1-P^{\otimes3} \right)(\1\otimes h\otimes h)^{\frac12+\delta}\left( 1-P^{\otimes3} \right) \\
		={}& C_\delta\left( \1\otimes h^{\frac12+\delta}\otimes Q h^{\frac12+\delta}Q + \1\otimes Q h^{\frac12+\delta}Q\otimes P h^{\frac12+\delta}P + QQ\otimes P h^{\frac12+\delta}P\otimes P h^{\frac12+\delta}P \right) \nonumber \\
		\leq{}& \frac{C_\delta}{L^{\frac12-\delta}}\left( \1\otimes h^{\frac12+\delta}\otimes h+\1\otimes h\otimes h^{\frac12+\delta}+h\otimes \1\otimes h^{\frac12+\delta} \right).
	\end{align*}
	Here, we can again use~\eqref{upperbound_t_power_r} to write, for any $\kappa\in(0,1)$ and any $r>0$,
	\begin{multline*}
		\1\otimes h^\kappa\otimes h+\1\otimes h\otimes h^\kappa+h\otimes \1\otimes h^\kappa \leq \kappa r^{-1}(\1\otimes h\otimes h+\1\otimes h\otimes h+h\otimes \1\otimes h) \\
		+(1-\kappa) r^{\frac{\kappa}{1-\kappa}}(\1\otimes \1\otimes h+\1\otimes h\otimes \1+h\otimes \1\otimes \1)\,.
	\end{multline*}
	Using it at $\kappa=1/2+\delta\in(0,1)$, taking the trace against $\gamma_{\Psi_N}^{(3)}$, and optimizing over $r>0$ leads to
	\begin{align}\label{lowerbound:energy_error_quantitative}
		\tr\left[ W_N\left( 1-P^{\otimes3} \right)\gamma_{\Psi_N}^{(3)}\left( 1-P^{\otimes3} \right) \right] \leq{}& \frac{C_\delta}{L^{\frac12-\delta}} \left( \kappa r^{-1} \tr\left[ h\otimes h\gamma_{\Psi_N}^{(2)} \right] + (1-\kappa) r^{\frac{\kappa}{1-\kappa}} \tr\left[ h\gamma_{\Psi_N}^{(1)} \right]\right) \nonumber \\
		\leq{}& \frac{C_\delta}{L^{\frac12-\delta}} \left( \tr\left[ h\otimes h\gamma_{\Psi_N}^{(2)} \right] \right)^{\frac12+\delta} \left( \tr\left[ h\gamma_{\Psi_N}^{(1)} \right] \right)^{\frac12-\delta}
	\end{align}
	for any $\delta\in(0,1/2)$. Finally inserting~\eqref{Proof_Lemma_Lower_bound_ground_state_energy_2nd_error_term_1st_half} and~\eqref{lowerbound:energy_error_quantitative} into~\eqref{Proof_Lemma_Lower_bound_ground_state_energy_2nd_error_term_step1}, we obtain
	\[
		\tr\left[ W_N\left( \gamma_{\Psi_N}^{(3)}-P^{\otimes3}\gamma_{\Psi_N}^{(3)}P^{\otimes3} \right) \right] \geq -C_{\delta}L^{-\frac{1}{4}+\frac{1}{2}\delta}\left(\tr\left[ h\gamma_{\Psi_N}^{(1)} \right]\right)^{\frac14-\frac\delta2} \left(\tr\left[ h\otimes h\gamma_{\Psi_N}^{(2)} \right]\right)^{\frac12+\delta}\,.
	\]
	This yields the second error term on the lower bound of the quantum energy in~\eqref{lowerbound:information_deF}.
\end{proof}

According to Lemmas~\ref{lem:lowerbound_deF} and~\ref{Lemma_Moments_Estimates}, we have
\begin{equation}\label{ineq:first_step_in_proof_thm_many_body_noncritical}
	E_{a, b}^{\mathrm{Q}, N} \geq \int_{\mathcal{S}_P} \mathcal{E}_{a, b}^{\mathrm{mH}, N}(\gamma) \di\mu_{\Psi_N}(\gamma) - C\sqrt{\frac{\log L}{N}} L \log N - C_{\delta}L^{-\frac{1}{4}+\frac{1}{2}\delta}\left( \frac{1+\left|E_{a, b}^{\mathrm{Q}, N, \eps}\right|}\eps \right)^{\frac{5}{4}+\frac{3}{2}\delta}
\end{equation}
for all $0<\delta \leq 1/2$ and $0<\eps<1$. On one hand, using the convexity of the kinetic energy~\cite[Theorem~7.8]{LieLos-01}, we have the Hoffmann-Ostenhof-type inequality
\[
	\tr\left[ h\gamma \right] \geq \int_\R |\nabla \sqrt{\rho_{\gamma}}|^2\,.
\]
Then, it follows from~\eqref{ineq_cv:Hartree_to_NLS_2body}, \eqref{ineq_cv:Hartree_to_NLS_3body}, and~\eqref{ineq:GNS} that
\begin{equation}\label{ineq:energy_Hartree_larger_constant}
	\mathcal{E}_{a, b}^{\mathrm{mH}, N}(\gamma) \geq \mathcal{E}_{a, b}^{\mathrm{H}, N}(\sqrt{\rho_{\gamma}}) \geq -C \,.
\end{equation}
On another hand, we note that the error terms in~\eqref{ineq:first_step_in_proof_thm_many_body_noncritical} are exactly the same as in~\cite{NamRou-20}. By a bootstrap argument as in~\cite{LewNamRou-17,NamRou-20}, we deduce that $\left|E_{a, b}^{\mathrm{Q}, N, \eps}\right|$ is bounded independently of $N$. We then deduce from the upper bound on $E_{a, b}^{\mathrm{Q}, N}$ in~\eqref{main_ineq_proof_thm_many_body_critical}, from~\eqref{ineq:first_step_in_proof_thm_many_body_noncritical}, and  from~\eqref{ineq:energy_Hartree_larger_constant} that
\begin{align*}
	E_{a, b}^{\mathrm{H}, N} \geq E_{a, b}^{\mathrm{Q}, N} &\geq \int_{\mathcal{S}_P} \mathcal{E}_{a, b}^{\mathrm{mH}, N}(\gamma) \di\mu_{\Psi_N}(\gamma) - C_\delta\left( \sqrt{\frac{\log L}{N}} L \log N + L^{-\frac{1}{4}+\frac{1}{2}\delta} \right) \\
		&\geq E_{a, b}^{\mathrm{H}, N} \int_{\mathcal{S}_P} \di\mu_{\Psi_N}(\gamma) - C_\delta\left( \sqrt{\frac{\log L}{N}} L \log N + L^{-\frac{1}{4}+\frac{1}{2}\delta} \right)\,,
\end{align*}
for all $0<\delta \leq 1/2$, $L \gg 1$ and $N\geq2$. Choosing optimally $L=N^{2/5}$, we obtain
\begin{equation}\label{main_ineq_proof_thm_many_body_noncritical}
	E_{a, b}^{\mathrm{H}, N} \geq E_{a, b}^{\mathrm{Q}, N} \geq \int_{\mathcal{S}_P}\mathcal{E}_{a, b}^{\mathrm{mH}, N}(\gamma) \di\mu_{\Psi_N} (\gamma) - C_\eta N^{-\sigma} \geq E_{a, b}^{\mathrm{H}, N} \int_{\mathcal{S}_P} \di\mu_{\Psi_N}(\gamma) - C_\eta N^{-\sigma}\,,
\end{equation}
for all $0<\sigma<1/10$.

Next, we deduce from~\eqref{Moments_Estimates} and the boundedness of $\left|E_{a, b}^{\mathrm{Q}, N, \eps}\right|$, for small enough $0 < \eps < 1$, that
\[
	\tr\left[ h \gamma_{\Psi_N}^{(1)} \right] \leq C_{\eps}\,.
\]
Since $h$ has compact resolvent we deduce (modulo subsequence) that $\gamma_{\Psi_N}^{(1)} \to \gamma^{(1)}$ strongly in trace-class, for some limit one-body bosonic density matrix $\gamma^{(1)}$. But we also have (again, modulo subsequences)
\[
	\gamma_{\Psi_N}^{(k)} \underset{\star}{\wto} \gamma^{(k)}, \quad \forall\, k=1,2,\dots
\]
weakly-$\star$ in the trace-class. Applying the weak quantum de~Finetti theorem~\cite[Theorem~2.2]{LewNamRou-14} we deduce that there exists a measure $\nu$ on the unit ball of $L^2 (\R)$ such that
\[
	\gamma^{(k)} = \int |u^{\otimes k } \rangle \langle u^{\otimes k}| \di\nu (u), \quad \forall\, k=1,2,\dots
\]
But since $\gamma^{(1)}$ must have trace $1$, the measure $\nu$ must actually live on the unit sphere of $L^2 (\R)$, i.e.,
\[
	S L^2 (\R) = \left\{ u\in L^2 (\R),\: \int_\R |u|^2 = 1 \right\}.
\]
By the choice of $L$, \eqref{measure_information_deF} implies that
\[
	\lim_{N\to\infty}\int_{\mathcal{S}_P} \di\mu_{\Psi_N}(\gamma) = 1\,.
\]
Thus, the sequence of measures $\{\mu_{\Psi_N}\}_N$ given by Theorem~\ref{thm:information_deF} is tight on the set of one-body mixed states
\[
	\mathcal{S} := \left\{ \gamma \mbox{ positive trace-class self-adjoint operator on } L^2 (\R), \, \tr \gamma = 1 \right\}.
\]
Modulo a subsequence, $\{\mu_{\Psi_N}\}_N$ converges to a measure $\mu$.

Next we claim that the two measures $\mu$ and $\nu$ just found are related by
\begin{equation}\label{eq:measure_same}
	 \int_{\mathcal{S}} \gamma^{\otimes 3} \di\mu (\gamma) = \int |u^{\otimes 3} \rangle \langle u^{\otimes 3}| \di\nu(u)\,.
\end{equation}
Indeed, let $\tilde{P} = \1_{h\leq \tilde{L}}$ where $\tilde{L}$ is a fixed cut-off (different from $L$ above). Testing~\eqref{ineq:information_deF} with $A,B,C$ finite rank operators whose ranges lie within that of $\tilde{P}$ we get
\[
	\tr\left[ A \otimes B \otimes C \gamma_{\Psi_N}^{(3)} \right] \underset{N\to\infty}{\longrightarrow} \tr\left[ A \otimes B \otimes C \int_{\mathcal{S}} \gamma^{\otimes 3} \di\mu (\gamma) \right]
\]
using the convergence of $\mu_{\Psi_N}$ to $\mu$. On the other hand, by the convergence of $\gamma_{\Psi_N}^{(3)}$ to $\gamma^{(3)}$ we also have
\[
	\tr\left[ A \otimes B \otimes C \gamma_{\Psi_N}^{(3)} \right] \underset{N\to\infty}{\longrightarrow} \tr\left[ A \otimes B \otimes C \gamma^{(3)} \right] = \tr\left[ A \otimes B \otimes C \int_{S L^2 (\R)} |u^{\otimes 3} \rangle \langle u^{\otimes 3}| \di\nu (u) \right].
\]
Thus,
\begin{equation}\label{eq:measure_trace_equality}
	\tr\left[ A \otimes B \otimes C \int_{\mathcal{S}} \gamma^{\otimes 3} \di\mu (\gamma) \right] = \tr\left[ A \otimes B \otimes C \int_{S L^2 (\R)} |u^{\otimes 3} \rangle \langle u^{\otimes 3}| \di\nu (u) \right]
\end{equation}
for any $A,B,C$ with range within that of $\tilde{P}$. Letting finally $\tilde{L} \to \infty$ yields $\tilde{P} \to \1$ and thus~\eqref{eq:measure_trace_equality} holds for any compact operators $A,B,C$. This implies~\eqref{eq:measure_same}. In particular, since the right-hand side of~\eqref{eq:measure_same} is $\gamma^{(3)}$, a bosonic operator, $\mu$ must be supported on pure states $\gamma = |u\rangle \langle u|$, see~\cite{HudMoo-76}.

Let us return to~\eqref{main_ineq_proof_thm_many_body_noncritical}. We split the integral over
one-body states $\gamma$ between low and high kinetic energy states
\[
	\mathrm{K}_- = \{\gamma\in \mathcal{S}: \tr[h\gamma] \leq C_{\mathrm{Kin}}\} \quad \text{and} \quad \mathrm{K}_+ = \mathcal{S} \setminus \mathrm{K}_-.
\]
with $C_{\mathrm{Kin}}$ a constant independent of $N$. Using~\eqref{ineq_cv:Hartree_to_NLS_2body} and~\eqref{ineq_cv:Hartree_to_NLS_3body}, we obtain, for every $\sigma< 1/10$, 
\begin{align}\label{final_main_ineq_proof_thm_many_body_noncritical}
	E_{a, b}^{\mathrm{H}, N} \geq E_{a, b}^{\mathrm{Q}, N} &\geq \int_{\mathcal{S}_P}\mathcal{E}_{a, b}^{\mathrm{mH}, N}(\gamma) \di\mu_{\Psi_N} (\gamma) - C_\eta N^{-\sigma} \nonumber \\
	&\geq C_{\mathrm{Kin}}\int_{\mathrm{K}_+}\di\mu_{\Psi_N} (\gamma) + \int_{\mathrm{K}_-}\mathcal{E}_{a, b}^{\mathrm{m}\NLS}(\gamma)\di\mu_{\Psi_N} (\gamma) - o(1) - C_\eta N^{-\sigma} \nonumber\\
	&\geq \int_{\mathcal{S}_P}\min\left\{C_{\mathrm{Kin}},\mathcal{E}_{a, b}^{\mathrm{m}\NLS}(\gamma)\right\}\di\mu_{\Psi_N} (\gamma) - o(1) - C_\eta N^{-\sigma}.
\end{align}
Here,
\[
	\mathcal{E}_{a, b}^{\mathrm{m}\NLS}(\gamma) := \tr\left[ h\gamma \right] +\frac{a}{2}\int_{\R}\rho_{\gamma}(x)^2\dix - \frac{b}{6} \int_{\R} \rho_{\gamma}(x)^{3}\dix \,.
\]
Passing to the limit $N\to \infty$ in~\eqref{final_main_ineq_proof_thm_many_body_noncritical}, using~\eqref{cv:Hartree_to_NLS_energy}, then taking $C_{\mathrm{Kin}} \to \infty$, we obtain
\begin{equation}\label{eq:final_lowerbound_noncritical}
	E_{a,b}^{\NLS} \geq \lim_{N\to \infty} E_{a, b}^{\mathrm{Q}, N} \geq \int_{\mathcal{S}} \mathcal{E}_{a, b}^{\mathrm{m}\NLS}(\gamma) \di\mu (\gamma) \,.
\end{equation}
But, as we saw above, $\mu$ must be supported on pure states $\gamma = |u\rangle \langle u |$, which yields both the energy lower bound and the fact that $\mu$ must be supported on $\mathcal{M}^{\NLS}$. Because $\mathcal{E}_{a, b}^{\mathrm{m}\NLS}(\gamma)$ is a linear function of $\gamma^{\otimes 3}$, we can also combine~\eqref{eq:final_lowerbound_noncritical} with~\eqref{eq:measure_same} to deduce that $\nu$ must also be supported on $\mathcal{M}^{\NLS}$, which proves~\eqref{cv:ground_state_quantum_to_NLS_noncritical}.

Let's complete the proof of Theorem~\ref{thm:many_body_noncritical} by proving~\eqref{asymptotic:quantum_energy_noncritical} and~\eqref{asymptotic:many_body_ground_state_noncritical}. The collapse of ground state energy in~\eqref{asymptotic:quantum_energy_noncritical} follows from~\eqref{main_ineq_proof_thm_many_body_noncritical} and Theorem~\ref{thm:collapse_Hartree}. From the proof of Lemma~\ref{Lemma_Moments_Estimates}, we have
\[
	\tr\left[ h \gamma^{(1)}_{\Psi_N} \right]\lesssim \frac{ E_{a_N, b_N}^{\mathrm{Q}, N} - E_{a_N, b_N}^{\mathrm{Q}, N, \eps} }{\eps} \qquad \textrm{ and } \qquad \tr\left[ h\otimes h \gamma^{(2)}_{\Psi_N} \right] \lesssim \left( \frac{ E_{a_N, b_N}^{\mathrm{Q}, N} - E_{a_N, b_N}^{\mathrm{Q}, N, \eps} }{\eps} \right)^2\,,
\]
Choosing $\eps = 1 - b_N/\mathfrak{b}$, using the nonnegativity of the trapping term and the asymptotic formula of the ground state energy in Theorem~\ref{thm:many_body_noncritical}, we obtain
\begin{equation}\label{moments_estimates_blow_up}
	\tr\left[h\gamma^{(1)}_{\Psi_N} \right]\lesssim (\mathfrak{b}-b_N)^{-\frac{2}{s+2}} \qquad \textrm{ and } \qquad \tr\left[ h\otimes h \gamma^{(2)}_{\Psi_N} \right] \lesssim (\mathfrak{b}-b_N)^{-\frac{4}{s+2}}.
\end{equation}
On the other hand, let $\tilde{E}_{a, b}^{\mathrm{Q}, N}$ be the associated ground state energy of the modified Hamiltonian
\[
	\tilde{H}^{N}_{a,b} := H^{N}_{a,b} - \frac{1}{2}\sum_{j=1}^N|x_j|^s\,,
\]
i.e.,
\[
	\tilde{E}_{a, b}^{\mathrm{Q}, N} := N^{-1} \inf \left\{ \pscal{\Psi_N, \tilde{H}^{N}_{a,b} \Psi_N}:\Psi_N\in\mathfrak{H}^N, \norm{\Psi_N}_{2}=1\right\}.
\]
By variational principle, we have
\[
	\tr\left[|\cdot|^s\gamma^{(1)}_{\Psi_N} \right]\lesssim E_{a_N, b_N}^{\mathrm{Q}, N} - \tilde{E}_{a_N, b_N}^{\mathrm{Q}, N}\,.
\]
We use again the asymptotic formula of the ground state energy in Theorem~\ref{thm:many_body_noncritical}, we obtain
\begin{equation}\label{trapping_estimates_blow_up}
	\tr\left[|\cdot|^s\gamma^{(1)}_{\Psi_N} \right]\lesssim (\mathfrak{b}-b_N)^{\frac{s}{s+2}}\,.
\end{equation}
Inserting~\eqref{moments_estimates_blow_up} into Lemma~\ref{lem:lowerbound_deF} and using~\eqref{ineq:energy_Hartree_larger_constant}, we obtain
\begin{align*}
	E_{a_N, b_N}^{\mathrm{H}, N} \geq{} & \frac{\pscalSM{\Psi_N, H_{a, b}^{N} \Psi_N}}{N} \\
	\geq{} & \int_{\mathcal{S}_P} \mathcal{E}_{a_N, b_N}^{\mathrm{H}, N}(\sqrt{\rho_{\gamma}}) \di\mu_{\Psi_N} (\gamma) - C \sqrt{\frac{\log L}{N}} L \log N - C_{\delta}L^{-\frac{1}{4}+\frac{1}{2}\delta} (\mathfrak{b}-b_N)^{-\frac{2}{s+2} \left( \frac{5}{4}+\frac{3\delta}{2} \right)}
\end{align*}
for all $0<\delta \leq 1/2$, $L \gg 1$ and $N\geq3$. Furthermore,
\[
	1\geq \int_{\mathcal{S}_P} \di\mu_{\Psi_N} (\gamma) \geq 1-3L^{-1}(\mathfrak{b}-b_N)^{-\frac{2}{s+2}}\,.
\]
Recalling the assumption $\zeta\neq 12(s+1)/s$, it is straightforward that, if additionally to~\eqref{collapse:speed_noncritical} we assume that $\eta<1/(10(s+2))$, then we can choose $L > 0$ appropriately such that
\begin{equation}\label{cv:energy_measure_noncritical}
	\lim_{N\to\infty}\int_{\mathcal{S}_P} \frac{\mathcal{E}_{a_N, b_N}^{{\rm H}, N}(\sqrt{\rho_\gamma})}{E_{a_N,b_{N}}^{{\rm H}, N}} \di\mu_{\Psi_N}(\gamma) = \lim_{N\to\infty} \int_{\mathcal{S}_P} \di\mu_{\Psi_{N}}(\gamma) = 1\,.
\end{equation}
Let $\Phi_N = \ell_N^{N/2}\Psi_{N}(\ell_{N}\cdot)$. Then, it follows from~\eqref{moments_estimates_blow_up} and~\eqref{trapping_estimates_blow_up} that
\[
	\tr\left[ h \gamma_{\Phi_N} ^{(1)} \right] \leq C\,.
\]
Since $h$ has compact resolvent, we deduce that, up to extraction of a subsequence, $\gamma_{\Phi_N}^{(1)}$ converges to $\gamma^{(1)}$ strongly in the trace class. Modulo a diagonal extraction argument, one can assume that the convergence is along the same subsequence. By~\cite[Corollary 2.4]{LewNamRou-14}, $\gamma_{\Phi_N}^{(k)}$ converges to $\gamma^{(k)}$ strongly as well for all $k \geq 1$. By the quantum de~Finetti Theorem, there exists a Borel probability measure $\nu$ on the unit sphere $\mathcal{S}\mathfrak{H}$ such that
\[
	\gamma ^{(k)} = \int |u ^{\otimes k } \rangle \langle u ^{\otimes k}| \di\nu (u), \quad \forall\, k=1,2,\dots\,.
\]
Now, let us define
\[
	\overline{P} = \1(\overline{h} \leq L) \quad \text{with} \quad \overline{h} = \ell_{N}^{-2}\frac{{\rm d}^2}{{\rm d}x^2} + \ell_{N}^{s}|x|^s.
\]
On one hand, it follows from~\eqref{cv:energy_measure_noncritical} that
\begin{equation}\label{cv:energy_measure_scaled_noncritical}
	\lim_{N\to\infty}\int_{\mathcal{S}_{\overline{P}}} \frac{\mathcal{E}_{a_N, b_N}^{{\rm H}, N}\!\left( \ell_{N}^{-1/2}\sqrt{\rho_\gamma}(\ell_{N}^{-1}\cdot) \right)}{E_{a_N,b_{N}}^{{\rm H}, N}} \di\mu_{\Phi_N}(\gamma) = \lim_{N\to\infty}\int_{\mathcal{S}_{\overline{P}}} \di\mu_{\Phi_N}(\gamma) = 1.
\end{equation}
On another hand, we deduce from~\eqref{ineq:information_deF} that
\begin{equation}\label{ineq:information_deF_noncritical}
	\lim_{N\to\infty}\tr\left| \overline{A} \otimes \overline{B} \otimes \overline{C} \left( \overline{P} ^{\otimes 3} \gamma_{\Phi_N}^{(3)} \overline{P} ^{\otimes 3} - \int_{\mathcal{S}_{\overline{P}}} \gamma ^{\otimes 3} \di\mu_{\Phi_N} (\gamma) \right)\right| = 0.
\end{equation}
for every bounded operators $0\leq \overline{A},\overline{B},\overline{C}\leq 1$ on $\overline{P}\mathfrak{H}$.

To complete the proof, it suffices to prove the convergence of one-body density matrix
\begin{equation}\label{cv:many_body_pre_final_noncritical}
	\lim_{N\to\infty}\tr\left||Q_0^{\otimes 3}\rangle \langle Q_0^{\otimes 3}| - \int_{\mathcal{S}_{\overline{P}}} \gamma ^{\otimes 3} \di\mu_{\Phi_N} (\gamma)\right| = 0,
\end{equation}
which is equivalent to
\begin{equation}\label{cv:many_body_final_noncritical}
	\lim_{N\to\infty}\tr\int_{\mathcal{S}_{\overline{P}}} \left|\pscal{ Q_0,\sqrt{\rho_\gamma} }\right|^{3} \di\mu_{\Phi_N} (\gamma) = 1
\end{equation}
Before proving the above, we explain how to conclude the proof of~\eqref{asymptotic:many_body_ground_state_noncritical}. Let $\tilde{P} = \1(\overline{h}\leq \tilde{L})$ where $\tilde{L}$ is a fixed cut-off (different from $L$ above). Testing~\eqref{ineq:information_deF_noncritical} with $\overline{A},\overline{B},\overline{C}$ finite rank operators whose ranges lie within that of $\tilde{P}$ and combining with~\eqref{cv:many_body_pre_final_noncritical} we get
\[
	\tr\left[ \overline{A} \otimes \overline{B} \otimes \overline{C} \gamma_{\Phi_N} ^{(3)} \right] \underset{N\to\infty}{\longrightarrow} \tr\left[ \overline{A} \otimes \overline{B} \otimes \overline{C} |Q_0^{\otimes 3}\rangle \langle Q_0^{\otimes 3}| \right].
\]
On the other hand, by the convergence of $\gamma_{\Psi_N} ^{(3)}$ to $\gamma^{(3)}$ we also have
\[
	\tr\left[ \overline{A} \otimes \overline{B} \otimes \overline{C} \gamma_{\Phi_N} ^{(3)} \right] \underset{N\to\infty}{\longrightarrow} \tr\left[ \overline{A} \otimes \overline{B} \otimes \overline{C} \gamma^{(3)} \right].
\]
Thus,
\[
	\tr\left[ \overline{A} \otimes \overline{B} \otimes \overline{C} |Q_0^{\otimes 3}\rangle \langle Q_0^{\otimes 3}| \right] = \tr\left[ \overline{A} \otimes \overline{B} \otimes \overline{C} \gamma^{(3)} \right]
\]
for any $\overline{A}, \overline{B}, \overline{C}$ with range within that of $\tilde{P}$. Letting finally $\tilde{L} \to \infty$ yields $\tilde{P} \to \1$ and thus the above holds for any compact operators $\overline{A}, \overline{B}, \overline{C}$. This implies $\gamma^{(3)} \equiv |Q_0^{\otimes 3}\rangle \langle Q_0^{\otimes 3}|$ and concludes the proof of the strong convergence $\gamma_{\Phi_N}^{(3)} \to |Q_0^{\otimes 3}\rangle \langle Q_0^{\otimes 3}|$ in the trace class, which in turn yields the convergence $\gamma_{\Phi_N}^{(1)} \to |Q_0 \rangle \langle Q_0|$. The limit being rank $1$, this implies the convergence of higher order density matrices to tensor powers of the limiting operator by well-known arguments (see e.g., \cite[Section 2.2]{Rougerie-20b}).

We now come back to the proof of~\eqref{cv:many_body_final_noncritical}.
Defining
\[
	\delta_{N} := \int_{\mathcal{S}_{\overline{P}}} \frac{ \mathcal{E}_{a_N, b_N}^{{\rm H}, N}\!\left( \ell_{N}^{-1/2}\sqrt{\rho_\gamma}(\ell_{N}^{-1}\cdot) \right) - E_{a_N,b_{N}}^{{\rm H}, N} }{ \left| E_{a_N,b_{N}}^{{\rm H}, N} \right| } \di\mu_{\Phi_N}(\gamma) \,,
\]
we have $\delta_{N} \geq 0$, by definition of $E_{a_N,b_{N}}^{{\rm H}, N}$, and $\delta_{N} \to 0$, by~\eqref{cv:energy_measure_scaled_noncritical} and since the sign of $E_{a_N,b_{N}}^{{\rm H}, N}$ is constant for $N$ large enough (by Theorem~\ref{thm:collapse_Hartree} and recall we assume $\zeta \ne 12(s+1)/s$). Let $T_N$ be the set of all positive trace-class self-adjoint operator $\gamma$ on $\overline{P}\mathfrak{H}$ with $\tr\gamma = 1$ satisfying
\begin{equation}\label{behavior_approximate_noncritical}
	0\leq \frac{ \mathcal{E}_{a_N, b_N}^{{\rm H}, N}\!\left( \ell_{N}^{-1/2}\sqrt{\rho_\gamma}(\ell_{N}^{-1}\cdot) \right) - E_{a_N,b_{N}}^{{\rm H}, N} }{ \left| E_{a_N,b_{N}}^{{\rm H}, N} \right| } \leq \sqrt{\delta_N} \,.
\end{equation}

We prove that we must have
\begin{equation}\label{limit:PsiQ_noncritical}
	\lim_{N\to\infty}\inf_{\gamma\in T_N} \left|\pscal{ \sqrt{\rho_{\gamma}},Q_0}\right| = 1 \,.
\end{equation}
If this was not the case, there would exist a sequence $\{\gamma_{N}\} \subset T_N$ such that
\begin{equation}\label{limit:PsiQ_fraud_noncritical}
	\limsup_{N\to\infty} \left|\pscal{ \sqrt{\rho_{\gamma_N}},Q_0 }\right| < 1 \,.
\end{equation}
Since $\gamma_{N} \in T_N$ and $\delta_{N}\to 0$, we would deduce from~\eqref{behavior_approximate_noncritical} that
\[
	\lim_{N\to\infty} \frac{\mathcal{E}_{a_N, b_N}^{{\rm H}, N}\!\left( \ell_{N}^{-1/2}\sqrt{\rho_{\gamma_N}}(\ell_{N}^{-1}\cdot) \right)}{E_{a_N,b_{N}}^{{\rm H}, N}} = 1 \,.
\]
That is, $\{\ell_{N}^{-1/2}\sqrt{\rho_{\gamma_N}}(\ell_{N}^{-1}\cdot)\}_N$ would be a sequence of approximate ground states and Theorem~\ref{thm:collapse_Hartree} would then imply
\[
	\lim_{N\to\infty} \left|\pscal{ \sqrt{\rho_{\gamma_N}},Q_0}\right| = 1 \,,
\]
contradicting~\eqref{limit:PsiQ_fraud_noncritical}. Hence, we have~\eqref{limit:PsiQ_noncritical}.

On the other hand, by the choice of $T_N$, we have
\[
	\frac{ \mathcal{E}_{a_N, b_N}^{{\rm H}, N}\!\left( \ell_{N}^{-1/2}\sqrt{\rho_\gamma}(\ell_{N}^{-1}\cdot) \right) - E_{a_N,b_{N}}^{{\rm H}, N} }{ \left| E_{a_N,b_{N}}^{{\rm H}, N} \right| } \geq \sqrt{\delta_N} \,.
\]
for any $\gamma\in {}^\complement T_N$. Therefore,
\[
	\delta_{N} \geq \int_{{}^\complement T_N} \frac{ \mathcal{E}_{a_N, b_N}^{{\rm H}, N}\!\left( \ell_{N}^{-1/2}\sqrt{\rho_\gamma}(\ell_{N}^{-1}\cdot) \right) - E_{a_N,b_{N}}^{{\rm H}, N} }{ \left| E_{a_N,b_{N}}^{{\rm H}, N} \right| } \di\mu_{\Phi_N}(\gamma) \geq \sqrt{\delta_N} \mu_{\Phi_N}({}^\complement T_N) \,.
\]
Thus, $\mu_{\Phi_N}({}^\complement T_N) \leq \sqrt{\delta_N} \to 0$ and~\eqref{cv:energy_measure_scaled_noncritical} then yields $\mu_{\Phi_N}(T_N) \to 1$. The latter convergence and~\eqref{limit:PsiQ_noncritical} imply that
\[
	\int_{\mathcal{S}_{\overline{P}}} \left|\pscal{ Q_0,\sqrt{\rho_\gamma} }\right|^{3} \di\mu_{\Phi_N} (\gamma) \geq \int_{T_N} \left|\pscal{ Q_0,\sqrt{\rho_\gamma} }\right|^3 \di\mu_{\Phi_N}(u) \geq \mu_{\Phi_N}(T_N)\inf_{\gamma\in T_N} \left|\pscal{ Q_0,\sqrt{\rho_{\gamma}} }\right|^3 \to 1 \,.
\]
Thus~\eqref{cv:many_body_final_noncritical} holds true, concluding the proof of Theorem~\ref{thm:many_body_noncritical}. \qed

\appendix
\renewcommand{\thesection}{\Alph{section}}
\section{A property of the quintic \NLS solution}\label{app:techn_NLS}
\makeatletter
	\renewcommand{\thetheorem}{\thesection.\arabic{theorem}}
	\@addtoreset{theorem}{section}
	
	\renewcommand{\theequation}{\thesection.\arabic{equation}}
	\@addtoreset{equation}{section}
\makeatother
This appendix is dedicated to the proof of the inequality
\begin{equation}\label{ineq:techn_NLS_strict_symmetry}
	\norm{|\cdot-y|^{\frac{s}{2}}Q_0}_{2}^2>\norm{|\cdot|^{\frac{s}{2}}Q_0}_{2}^2, \quad \forall\, y\neq0\,,
\end{equation}
which was used in the proofs of Theorems~\ref{thm:collapse_NLS} and~\ref{thm:collapse_Hartree}, and where $Q_0$ is the (unique) solution of the quintic \NLS equation~\eqref{qNLS_solution}.

Following~\cite[Section 3.3]{LieLos-01}, we say that $f:\R^d\to\R$ is \emph{strictly symmetric-decreasing} (resp. \emph{strictly symmetric-increasing}) if $f(x)=f(y)$ when $|x|=|y|$ and $f(x)>f(y)$ when $|x|<|y|$ (resp. $f(x)<f(y)$ when $|x|<|y|$). We have the following.
\begin{lemma}\label{lem:techn_NLS_strict_symmetry}
	Let $d\geq1$. Let $f$ be strictly symmetric-increasing and $g$ be strictly symmetric-decreasing. Assume $h(0) < +\infty$ for $h:\R^d\to\R$ defined by $h(y) := \int_{\R^d} f(x-y)g(x)\dix$. Then, $h(y)>h(0)$ for any $y\neq0$ such that $h(y) > -\infty$.
\end{lemma}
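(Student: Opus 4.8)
The plan is to exploit the rearrangement structure of the problem: $f$ is strictly symmetric-increasing and $g$ is strictly symmetric-decreasing, and we want to show that the convolution-type integral $h(y)=\int_{\R^d}f(x-y)g(x)\dix$ is strictly minimized at $y=0$. First I would reduce to a bounded situation by truncation. Since $f$ is symmetric-increasing it is unbounded in general, so for $M>0$ set $f_M(x):=\min\{f(x),M\}$, which is still symmetric-decreasing in the sense that $M-f_M$ is strictly symmetric-decreasing on the ball where $f<M$, and $f_M\uparrow f$ pointwise. Writing $h_M(y):=\int f_M(x-y)g(x)\dix$, monotone convergence gives $h_M(y)\uparrow h(y)$ for every $y$ (using $g\ge0$ near infinity — here one should be slightly careful about the sign of $g$, but $g$ symmetric-decreasing with $h(0)<\infty$ forces enough integrability; alternatively one truncates $g$ from above as well). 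So it suffices to prove the strict inequality for the bounded, integrable rearranged functions and then pass to the limit, noting that strictness survives because of a quantitative gap obtained below.

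The core step is the strict Riesz rearrangement inequality. Write $\tilde f(x):=M-f_M(x)$, which is a nonnegative function, bounded, vanishing outside a ball, and \emph{strictly} symmetric-decreasing on that ball; it already equals its own symmetric-decreasing rearrangement, $\tilde f=\tilde f^*$. Likewise $g$ (truncated if necessary) is nonnegative and equal to $g^*$. Then
\begin{align*}
	h_M(y) &= \int_{\R^d} \bigl(M-\tilde f(x-y)\bigr) g(x)\dix = M\int g - \int \tilde f(x-y)g(x)\dix,
\end{align*}
so minimizing $h_M$ over $y$ is the same as maximizing $\Phi(y):=\int \tilde f(x-y)g(x)\dix$ over translations $y$. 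By the two-function Riesz rearrangement inequality applied to $\tilde f(\cdot-y)$ and $g$ — whose rearrangements are $\tilde f^*=\tilde f$ and $g^*=g$ centered at the origin — one has $\Phi(y)\le\int \tilde f(x)g(x)\dix=\Phi(0)$ for all $y$. This gives $h(y)\ge h(0)$ after letting $M\to\infty$, recovering the non-strict bound; the whole point is the strictness.

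For strictness I would invoke the characterization of equality in Riesz's rearrangement inequality (see Lieb–Loss, \emph{Analysis}, Theorem~3.9 and its discussion): since $\tilde f$ is \emph{strictly} symmetric-decreasing on its support (and $g$ is symmetric-decreasing, not identically zero), equality $\Phi(y)=\Phi(0)$ can hold only if the translate $\tilde f(\cdot-y)$ coincides, up to the translation already accounted for, with $\tilde f$ — i.e. only if $y=0$. Thus $\Phi(y)<\Phi(0)$, hence $h_M(y)>h_M(0)$, for every $y\neq0$ and every $M$ large enough that $\tilde f\not\equiv0$. To transfer strictness through the limit $M\to\infty$ for a \emph{fixed} $y\neq0$ I would not need a uniform gap: simply fix $M_0$ large enough, note $h(y)\ge h_{M_0}(y)>h_{M_0}(0)$, and observe $h_{M_0}(0)$ need not equal $h(0)$ — so instead I would argue directly at the level of $h$ using the strict rearrangement inequality for the genuinely unbounded pair, which is legitimate as soon as $h(0)<+\infty$ guarantees the relevant integrals converge; the truncation is then only a device to justify applying Theorem~3.9, and one passes to the limit in the inequality $\Phi_M(y)\le\Phi_M(0)-(\text{strictly positive }M\text{-dependent gap})$ is not available, so the cleanest route is to apply the strict inequality of Lieb–Loss directly to $\tilde f$ and $g$ once we know $\int\tilde f(x-y)g(x)\dix<\infty$, which follows from $h(0)<\infty$.

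The main obstacle I anticipate is precisely this interchange of the limit with the strict inequality, i.e. making sure strictness is not lost when $f$ (or $g$) is unbounded: the honest fix is to cite the strict form of Riesz's rearrangement inequality in the generality where one factor is only assumed to be a strictly decreasing function of $|x|$ on its support — this is exactly the content of the equality cases in Lieb–Loss — so that no truncation limit is needed for the strictness, only for integrability bookkeeping, which is handled by the hypothesis $h(0)<+\infty$ together with $h(y)>-\infty$. Everything else (reduction to nonnegative rearranged functions, the algebraic identity turning "minimize $h$" into "maximize a convolution", the elementary fact that $f$ and $-g$ equal their own rearrangements because they are already symmetric-monotone) is routine.
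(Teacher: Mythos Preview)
Your approach via Riesz rearrangement is conceptually reasonable but, as you yourself flag, it has two genuine unresolved issues. First, the lemma does \emph{not} assume $g\geq0$: a strictly symmetric-decreasing function can be negative (e.g.\ $g(x)=-|x|^2$), so the Hardy--Littlewood/Riesz inequalities you invoke do not apply directly, and your parenthetical ``$g\geq0$ near infinity'' is in fact backwards (a symmetric-decreasing function is small at infinity, not large). One could try to cure this by adding a constant to $g$ or truncating from below, but then $\int g$ may diverge and the identity $h_M(y)=M\int g-\Phi(y)$ breaks down. Second, even granting nonnegativity, you correctly identify that passing the \emph{strict} inequality through the limit $M\to\infty$ is the crux, and your proposed fix --- ``cite the strict form of Riesz's rearrangement inequality in the generality where one factor is only assumed strictly decreasing'' --- is not an argument but a hope that the right statement exists in the literature with exactly the hypotheses you need.

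The paper bypasses all of this with an elementary five-line computation: using $h(y)=h(-y)$ (from the symmetry of $f$ and $g$) and a change of variable, one writes
\[
	2\bigl[h(y)-h(0)\bigr]=\int_{\R^d}\bigl[f(x)-f(x+y)\bigr]\bigl[g(x+y)-g(x)\bigr]\dix,
\]
and observes that for $y\neq0$ the integrand is strictly positive off the measure-zero set $\{|x|=|x+y|\}$, since the two factors always have the same sign by the strict monotonicity of $f$ and $g$ in $|x|$. No rearrangement theory, no truncation, no sign hypothesis on $g$, no limit to pass; the finiteness assumptions $h(0)<+\infty$ and $h(y)>-\infty$ are exactly what is needed to make the subtraction meaningful. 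This is both simpler and more general than the route you sketch.
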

The inequality~\eqref{ineq:techn_NLS_strict_symmetry} is then a direct application of Lemma~\ref{lem:techn_NLS_strict_symmetry} to $d=1$, $f=|\cdot|^s$ with $s>0$, and $g=Q_0^2$, for which the condition $h(y) < +\infty$ holds for all $y\in\R$ by the exponential decay of $Q_0$. Actually, by the positivity of our $g$ and the nonnegativity of our $f\not\equiv0$ we even have $h(y)\in(0,+\infty)$ for all $y\in\R$. The ideas for the proof of this lemma can be found, e.g., in the proof of a similar result in~\cite{YanYan-17}. We nevertheless give a detailed proof here, for the convenience of the reader.
\begin{proof}
	Let $y\neq0$ with $h(y) > -\infty$. We assume $h(y) < +\infty$ as otherwise the result obviously holds. First, we have
	\[
		\int_{\R^d} f(x-y)g(x)\dix = \int_{\R^d} f(-x-y)g(-x)\dix = \int_{\R^d} f(x+y)g(x)\dix = h(-y)\,,
	\]
	which proves in particular that $h(-y) < +\infty$.
	Second, we compute
	\begin{align*}
		2 [ h(y) - h(0) ] &= h(y) - h(0) + h(-y) - h(0) \\
			&= \int_{\R^d} [f(x-y) - f(x)] g(x)\dix + \int_{\R^d} [f(x+y) - f(x)] g(x)\dix \\
			&= \int_{\R^d} [f(x) - f(x+y)] g(x+y)\dix + \int_{\R^d} [f(x+y) - f(x)] g(x)\dix \\
			&= \int_{\R^d} [f(x) - f(x+y)] [g(x+y) - g(x)]\dix >0\,,
	\end{align*}
	because the last integrand is positive by the properties of $f$ and $g$, and since $y\neq0$.
\end{proof}

\end{document}